\documentclass[12pt]{article}
\usepackage[utf8]{inputenc}

\usepackage{amsmath,amsthm,verbatim,amssymb,amsfonts,amscd, graphicx}
\usepackage{graphics}
\usepackage{mathtools}
\usepackage{comment}
\usepackage[all]{xy}
\usepackage{comment}
\usepackage[dvipsnames]{xcolor}
\usepackage{hyperref}
\usepackage[capitalise, noabbrev]{cleveref}
\usepackage{caption}
\usepackage{subcaption}
\usepackage{authblk}

\topmargin0.0cm
\headheight0.0cm
\headsep0.0cm
\oddsidemargin0.0cm
\textheight23.0cm
\textwidth16.5cm
\footskip1.0cm

\newcommand{\R}{\mathbb{R}}

\newcommand{\Pt}{P_t}
\newcommand{\Ps}{P_s}
\newcommand{\Pss}{P_{s'}}

\newcommand{\Qt}{Q_t}
\newcommand{\Qs}{Q_s}

\newcommand{\tab}[1]{c\left[#1\right]}
\newcommand{\cost}[1]{\omega\left(#1\right)}
\newcommand{\cosst}{\omega}
\newcommand{\QQt}{\Qt}
\newcommand{\PPt}{\Pt}
\makeatletter
\def\th@plain{%
\thm@notefont{}
\itshape 
}
\def\th@definition{%
\thm@notefont{}
\normalfont 
}
\makeatother

\theoremstyle{definition}
\newtheorem{definition}{Definition}[section]

\theoremstyle{plain}
\newtheorem{theorem}[definition]{Theorem}
\newtheorem{prop}[definition]{Proposition}
\newtheorem{cor}[definition]{Corollary}
\newtheorem{lemma}[definition]{Lemma}

\allowdisplaybreaks


\begin{document}
\title{Homology Localization Through the Looking-Glass of Parameterized Complexity Theory}
\author{Nello Blaser, Erlend Raa V{\aa}gset}
\date{University of Bergen}
\maketitle

\begin{abstract}
Finding a cycle of lowest weight that represents a homology class in a simplicial complex is known as homology localization (HL). Here we address this NP-complete problem using parameterized complexity theory. We show that it is W[1]-hard to approximate the HL problem when it is parameterized by solution size. We have also designed and implemented two algorithms based on treewidth solving the HL problem in FPT-time. Both algorithms are ETH-tight but our results shows that one outperforms the other in practice.


\end{abstract}

\section{Introduction}

Finding and computing topological features in spaces has recently come into prominent attention with the rise of topological data analysis, a novel research area where algebraic topology is applied to data analysis. Topological features are now used in many different fields, such as computational biology \cite{rabadan_blumberg_2019}, neural network analysis \cite{guss2018characterizing} and computer vision \cite{Venkataraman2016}. Topological features are often preferred over purely geometric features because they give qualitative information and reduce the reliance on poorly justified choices of coordinate systems or metrics, leading to more robust results \cite{Carlsson2009}. This makes topological features good candidates for characterizing the shape of data.

Topological features give global characterizations of shape but this characterization is by nature highly algebraic. When the goal is to find local structures, for example in the context of topological noise removal \cite{Guskov2001} or hole detection in sensor networks \cite{Funke2005}, we need to find these global characterizations in our data. One way of doing this is by solving the \textsc{Homology Localization} problem (HL$_d$). This is the problem of finding a $d$-cycle of lowest cost or weight representing a given homology class in a (weighted) simplicial complex. This problem is NP-complete \cite{chen2011hardness} so it has no polynomial time algorithm unless P=NP.

It is possible to deal with NP-completeness without resolving P$\overset{?}{=}$NP through a variety of different methods. Several of these have already been applied to the HL$_d$ problem. One approach is to restrict attention to special cases of the problem, instead of solving it in full generality. This was done by Chen and Freedman in \cite{chen2011hardness}, where they found a polynomial time algorithm for solving the HL$_d$ problem on simplicial complexes embedded in $\R^d$. One can also find approximate solutions like Borradiale et al. did in  \cite{borradaile2020minimum} where they presented various (non-constant factor) approximation algorithm for the HL$_d$ problem on $d+1$ dimensional manifolds and for spaces embedded in $\R^{d+1}$.

These results are close to the limits of what we can hope to achieve within the classical framework. It is NP-hard to find constant factor approximations for the HL$_d$ problem when $d\geq 1$ \cite{chen2011hardness}. The problem remains NP-hard when restricted to $d+1$ dimensional manifolds embedded in $\mathbb{R}^{d+2}$, and it is also hard to approximate in this case under the unique games conjecture \cite{borradaile2020minimum}. This makes it clear that we should approach this problem in a different way.

This paper investigates how and when parameterized algorithms can get us around the many obstacles keeping us from solving the HL problem in practice. This approach is based on a simple yet powerful observation: A problem instance is  hard not because it is big but because it is complicated. The focus of parameterized complexity theory is to figure out which features restrict the complexity of a problem. We do so by studying how fixing the value of parameters, a number measuring properties of a problem instance, impacts how fast a problem can be solved The goal is to find the parameters that puts strong limits on the hardness of a problem and to rule out those parameters that do not. 

We want parameterized algorithms where the exponential explosion in runtime is a function of the parameter alone. Such algorithms can often solve large instances of NP-hard problems fast when the parameter is small. We can also use a complexity hierarchy to show hardness for parameterized problems. These are results that are analogous to NP-hardness and so they are very useful because they tell us which parameters to avoid.

Although this is the first time parameterized complexity theory is systematically applied to the HL$_d$ problem, it is not the first time parameterized algorithms have been used to solve it. In particular, \cite{Busaryev2012} presents a fixed parameter tractable (FPT) algorithm for the HL$_1$ problem using the first homology rank of the simplicial complex as a parameter. This algorithm was later enhanced with various heuristics and used to solve real world instances of the HL$_1$ problem related to cardiac trabeculae reconstruction \cite{wu2017optimal,zhang2019heuristic}. More recently, another FPT-algorithm was put forward to solve the HL$_d$ problem on $d+1$ dimensional manifolds by using the solution size as its parameter  \cite{borradaile2020minimum}.

\subsection{Outline}
\cref{sec:preliminaries} gives a brief introduction to topological notions such as simplicial complexes, homology and the \textsc{Homology Localization} problem, while \cref{sec:ParameterizedComplexity} covers basic concepts from parameterized complexity like FPT algorithms, the W-hierarchy and treewidth. In \cref{sec:SolutionSize} we prove that finding constant factor approximations to the HL$_d$ problem parameterized by solution size is W[1]-hard. In \cref{sec:algorithms} we present two different FPT algorithms for the HL$_d$ problem based on treewidth, prove that they compute the minimal cycles correctly and determine their complexity. In \cref{sec:eth} we prove that the two FPT algorithms are essentially optimal if we assume that the exponential time hypothesis (ETH) is true. In \cref{sec:comparison} we compare the two algorithms based on how they performed when implemented in Python. We end with \cref{sec:conclusion} where we reflect on our results and their implications.

\section{Homology Localization} \label{sec:preliminaries}

This section serves as a brief introduction to key topological concepts, including simplicial complexes, chain complexes, boundary maps, simplicial homology and suspension. This is also where we introduce and formally define the \textsc{Homology Localization} problem. 

\subsection{Simplicial Complexes}
A \emph{(finite) simplicial complex} $K$ is a (finite) family of (finite) sets closed under the subset operation. The elements of $K$ are called \emph{simplicies} and the elements of the simplices are called vertices. We refer to the set $V(K)=\cup_{\sigma\in K}\sigma$ as the \emph{vertex set} of $K$.

A simplex $\sigma$ in $K$ containing precisely $d+1$ vertices from $V(K)$ is called a \emph{$d$-simplex} and we say it has \emph{dimension} $d$. We denote the family of $d$-simplices contained in $K$ by $K_d$. If an $e$-simplex $\rho$ is the subset of the $d$-simplex $\sigma$ then we say that $\rho$ is an \emph{$e$-face} of $\sigma$ and that $\sigma$ is a \emph{$d$-coface} of $\rho$. The \emph{closure} of a family of simplices $X\subset K$, denoted by $\overline{X}$, is the smallest simplicial complex containing all the simplices in $X$. Explicitly, this can be constructed as the set containing all the faces of every simplex in $X$ i.e. $\overline{X}= \cup_{\sigma\in X}({\rho | \rho \subseteq \sigma })$. 
We sometimes visualize a simplicial complex as a triangulated space or shape where $0$-, $1$- and $2$-simplices are represented as points, lines and triangles respectively. \cref{Fig: GeomReal} is an example of this, where we see how the closure of a set can be viewed geometrically as adding all the limit points of a set to itself.

\begin{figure}[h!]
\centering
\includegraphics[width=160mm]{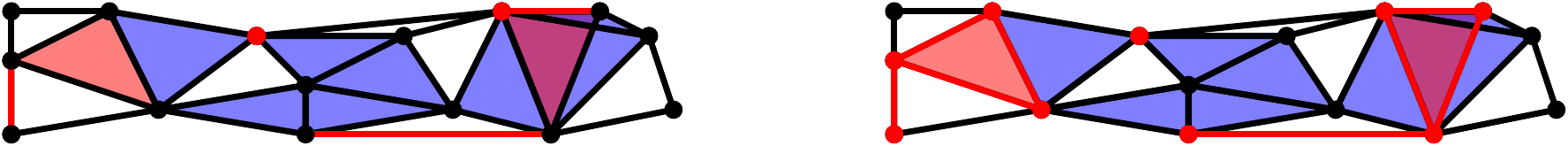}
\caption{Some simplices coloured red in a simplicial complex (left) together with a second copy (right) this time showing the closure of the first set of red simplices in red.
\label{Fig: GeomReal}}
\end{figure}

\subsection{Homology Localization}

Computing simplicial homology is a two step process, the first involving the construction of a \emph{chain complex}. We restrict attention to  chain complexes with coefficents in $\mathbb{Z}_2$ in this paper. These consists of a sequence of vector spaces over $\mathbb{Z}_2$, $C_i$ for $i \geq -1$ where $C_{-1}= 0$. These vector spaces are connected by a sequence of linear transformations called \emph{boundary maps}, $\partial_{i} \colon C_{i} \to C_{i-1}$, for $ i \geq 0$ that have the property that $\partial_{i}\partial_{i-1} = 0$. Given a simplicial complex $K$ we can always construct a chain complex, $C_\bullet(K)$. This is done by letting $C_i(K)$ be the vector space over $\mathbb{Z}_2$ with basis $K_i$. The boundary maps $\partial_i\colon C_{i}(K)\to C_{i-1}(K)$ are then linear extension of the map where $\sigma \in C_i(K)$ is sent to the sum of the $(i-1)$-faces of $\sigma$. That this implies $\partial_{i+1}\partial_i = 0$ is well known.

Vectors in $C_i(K)$ are called \emph{$i$-chains}. Vectors in $Z_i(K)\subset C_i(K)$, the null space  of $\partial_i$, are called \emph{$i$-cycles}. Vectors in, $B_i(K)\subset C_i(K)$, the image of $\partial_{i+1}$, are called \emph{boundaries}. Since $\partial_{i+1} \partial_i = 0$ we have that $B_i(K)\subseteq Z_i(K)$ and so the quotient $H_i(K) = Z_i(K)/B_i(K)$ is well defined. The vector spaces $H_i(K)$ are known as the \emph{homology groups} of $K$. 

\begin{definition} \label{Def: Homologous}
We say that two cycles $u,v\in Z_i(K)$ are \emph{homologous} if they represent the same element in $H_i(K)$, meaning if there is some $(i+1)$-chain $b$ such that $u=v+\partial_{i+1}(b)$. 
\end{definition}

Any $d$-chain $u = \sum a_i \cdot \sigma_i$ in $C_d(K)$ can be identified with the subsets $U = \{\sigma_i \in K_d| a_i = 1 \}$ of $K_d$. More precisely, the map $u\mapsto U$ is an isomorphism of vector spaces over $\mathbb{Z}_2$ if we define addition of $U$ and $V$ as the symmetric difference $U \triangle V$ and scalar multiplication as $U\cdot 1 = U$ and $U\cdot 0 =\emptyset$. 
A simplicial complex is said to be \emph{weighted} if it comes equipped with a function $\cosst \colon K \to \mathbb{R}$ assigning a weight to each simplex. The \emph{cost} of a chain $V \subseteq K_d$ is denoted by \(\cost{V}\) and is defined to be the sum of the weights of the simplices it contains.

\begin{definition} \textsc{Homology Localization} (HL$_d$):\\
Input: A weighted simplicial complex $K$, a $d$-cycle $V$ and a number $k \in \mathbb{R}$.\\
Question: Is there a $d$-cycle $U$ homologous to $V$ such that $\cosst(U) \leq k$?
\end{definition}

We have phrased the HL$_d$ as a decision problem, meaning it outputs either a yes or a no, but the algorithms we have designed returns a cycle $U$ homologous to $V$ of lowest possible weight. \cref{Fig: Homologous cycles} below shows a solution to the optimization version of the HL$_1$. Assuming the weights are all $1$, we can count and see that the input cycle $V$ costs $14$. By adding the boundary of $W$ we get a cycle, $U$, which has size $4$ and there is clearly no other cycle homologous to $V$ of this size or smaller.

\begin{figure}[h!]
\centering
\includegraphics[width=160mm]{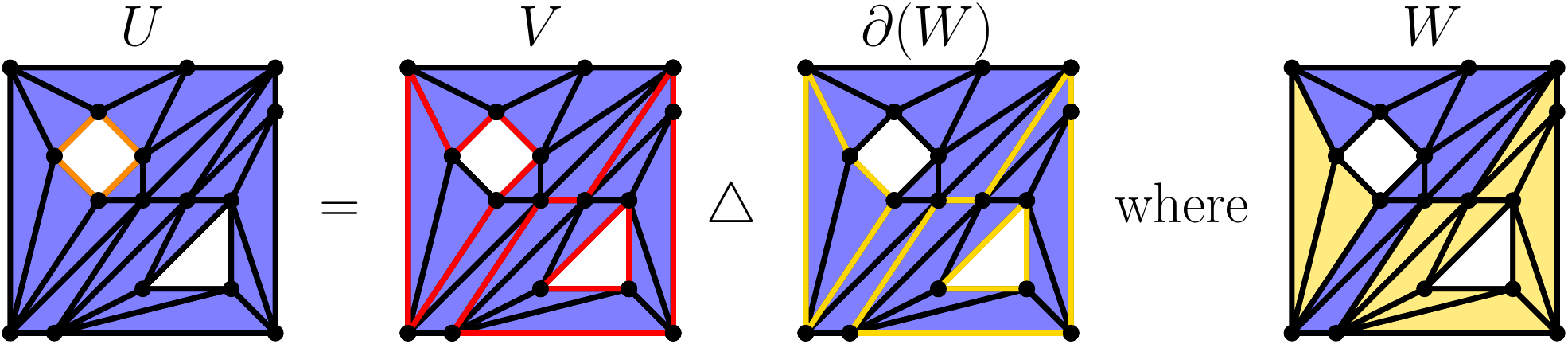}
\caption{\label{Fig: Homologous cycles} An optimal solution, $U$, to an instance, $V$, of the HL$_1$ problem.}
\end{figure}

We will use suspension to reduce the HL$_{d}$ problem to the HL$_{d+1}$ problem in section \cref{sec:SolutionSize} and \cref{sec:eth}.

\begin{definition}
The \emph{suspension} of the simplicial complex $K$ is defined as the simplicial complex $S(K) = \{\sigma, \sigma\sqcup v^+, \sigma\sqcup v_- |\sigma \in K\cup\{\emptyset\} \}\setminus \{\emptyset\}$. 
\end{definition}

There is a one-one correspondence between the $d$-cycles in $K$ and $d+1$-cycles in $S(K)$ given by mapping a $d$-cycle $V$ in $K$ to the $d+1$-cycle $S(V) \in \{\sigma \cup v^+, \sigma\cup v_- | \sigma \in V\}$ in $S(K)$. It is elementary to prove that the two $d$-cycles $U$ and $V$ are homologous in $K$ if and only if $S(U)$ and $S(V)$ are homologous in $S(K)$. These two facts forms the basis for a polynomial time reduction from the (unweighted) HL$_{d}$ to the (unweighted) HL$_{d+1}$ where $(K,k)$ is mapped to $(S(K), 2\cdot k)$. This reduction has several nice properties, one of which is that if $K$ can be embedded in $\mathbb{R}^d$ then $S(K)$ can be embedded in $\mathbb{R}^{d+1}$.

\section{Parameterized Complexity Theory}\label{sec:ParameterizedComplexity}

This section covers the basic ideas from parameterized complexity needed to understand the rest of the paper. In particular we introduce parameterization, FPT-algorithms, W[1]-hardness and treewidth. There are many good books explaining these topics in greater detail, such as \cite{downey2012parameterized, cygan2015parameterized}.

\subsection{Fixed Parameter Tractability}

Classical complexity theory is the study of how a single parameter, the input size, impacts the complexity of a computational task. NP-hardness is perhaps the most important concept from this field. A huge number of both important and useful computational problems are NP-hard. These problems are widely believed to be unsolvable in polynomial time for many different theoretical, empirical and philosophical reasons but this is still (famously) not proven. 

Parameterized complexity theory is the multivariate response to the challenge of overcoming one dimensional NP-hardness in practice. It presents us with a mathematical foundation for analysing the computational complexity of a task with respect to several different parameters at the same time. Using this approach, it is possible to find practical algorithms for many NP-hard problems.

A \emph{parameter} $k$ is a number measuring a property of the problem instance. Virtually anything can be used as the basis for a parameter. Some of the common examples include the size of a solution, how close an approximation is to the answer and how ``structured'' the input is. A problem becomes \emph{parameterized} once we specify which parameter we look at. The value of this parameter is always given as part of the input. 

The runtime of a \emph{parameterized algorithm} is expressed as a function $f(n,k)$, where $n$ is the input size and $k$ is the value of the parameter. Let $p(n)$ denote a polynomial function and $g(k)$ be a computable function. An algorithm is \textit{fixed parameter tractable} (FPT) if it runs in $\mathcal{O}(g(k)\cdot p(n))$-time and it is \textit{slicewise polynomial} (XP) if it runs in $\mathcal{O}(p(n)^{g(k)})$-time. FPT-algorithms are generally preferred over XP-algorithms. It is well documented that FPT-algorithms are practical. To illustrate why XP-algorithms are generally not, we have included \cref{table:xp} from \cite{downey1999parameterized}. In it we have compared the XP runtime $n^kn = n^{k+1}$ with the FPT runtime $2^kn$ using different values of $n$ and $k$.

\begin{table}[h]
\begin{center}
\begin{tabular}{|l||*{5}{c|}}\hline
&\makebox[4em]{$n=50$}&\makebox[4em]{$n=100$}&\makebox[4em]{$n=150$}\\\hline\hline
$k=2$ & 625 & 2,500 & 5,625\\\hline
$k=3$ &15,625&125,000& 421,875\\\hline
$k=5$ &390,625&6,250,000 & 31,640,625\\\hline
$k=10$ &$1.9 \times 10^{12}$&$9.8 \times 10^{14}$&$3.7 \times 10^{16}$\\\hline
$k=20$ &$1.8 \times 10^{26}$&$9.5 \times 10^{31}$&$2.1 \times 10^{35}$\\\hline
\end{tabular}
\caption{\label{table:xp} The ratio $\frac{n^{k+1}}{2^kn}$ for various values of $n$ and $k$.}
\end{center}
\end{table}

More abstractly, if we have one XP-algorithm $h(n,k)$ and one FPT-algorithm $f(n,k)$ then there is a number $k_0$ so that for every $k \geq k_0$ we have $f(n,k) \in \mathcal{O}(h(n,k))$. This shows that the class \textbf{FPT} of problems solvable in FPT-time is contained in the class \textbf{XP} of problems solvable in XP-time.  
NP-hardness can sometimes be used to show that a problem is unsolvable in FPT-time but this does not work when the problem is in \textbf{XP}. To find out if these problems are unsolvable by an FPT-algorithm we have to show that they are W[1]-hard. We think of this as the parameterized analogy of NP-hardness.


\begin{definition}[Parameterized reduction]
Let $\mathcal{A}$ and $\mathcal{B}$ be parameterized problems. A \emph{parameterized reduction} $F\colon \mathcal{A} \to \mathcal{B}$ is a function mapping instances $(X,k)$ of $\mathcal{A}$ to instances $(Y,l)$ of $\mathcal{B}$ in such a way that
\begin{itemize}
    \itemsep0pt
    \item $F(X,k)$ can be computed in FPT-time.
    \item $l \leq g(k)$ for some computable function $g$.
    \item $(X,k)$ is a ``yes'' instance if and only if $(Y,l)$ is a ``yes'' instance.
\end{itemize}
\end{definition}

Assume we have such a parameterized reduction. Then the existence of an FPT-algorithm for $\mathcal{B}$ implies that there is an FPT-algorithm for $\mathcal{A}$. 
Many classes of problems are equivalent in the sense that if one of them is in FPT then so are all of them. One of these classes is the class of \emph{W[1]-complete} problems. A problem is said to be W[1]-hard if there is a parameterized reduction to it from a W[1]-complete problem. There is no known FPT-algorithm for any W[1]-hard problem and most researchers in the field believe that $\textbf{FPT} \neq W[1]$. This is currently unproven but there are many theoretical, empirical and philosophical reasons for believing that these sets are not the same.   


\subsection{Tree Decompositions}

A \emph{graph} $G$ is a 1-dimensional simplicial complex. The 1-simplices are called edges and the edge containing the vertices $u$ and $v$ is denoted by $uv$. A \emph{tree} $T$ is a graph where $H_0(T)= \mathbb{Z}_2$ and $H_1(T)= 0$. The vertices of trees are often called \emph{nodes}. A \emph{path} from $u$ to $v$ in $G$ is a sequence $(u = u_1,\dots, u_k= v)$ where $v_iv_{i+1} \in G$ for $1\leq i \leq k$ and where every vertex is unique. Trees can also be defined as graphs where every pair of nodes $t,t' \in T$ are connected by a unique path, $\mathtt{path}(t,t')$. A \emph{rooted tree} $(T,r)$ is a tree $T$ together with a node $r$ which we is given a special status for bookkeeping purposes. Every node, $t$, in a rooted tree partitions the other nodes into two sets $\{s \in T | t\in \mathtt{path}(s,r)\}$ and $\{s \in T | t\not \in \mathtt{path}(s,r)\}$. The nodes in the first set are called the \emph{descendants} of $t$. The first node on the path from $v$ to the root is the \emph{parent} of $v$ and $v$ is the child of that node. Every node has precisely one parent except for the root which has none. Nodes with no children are called \emph{leaves}.

\begin{definition}[Nice Tree Decomposition]
A \emph{tree decomposition} of $G$ is a function $X_{-} \colon V(T) \to \mathcal{P}(V(G))$ where $T$ is a tree. The nodes $t\in T$ are mapped to sets $X_t$ of vertices in $G$ called \emph{bags} such that

\begin{itemize}
    \itemsep0em 
    \item $\forall v\in G$ there exists $t\in T$ such that $v\in X_t$.
    \item $\forall uv\in G$ there exists $t\in T$ such that $u, v\in X_t$.
    \item if $u\in X_t,X_{t'}$ for $t,t' \in T$ then $u\in X_{s}$ for every vertex $s\in \mathtt{path}(t,t')$.
\end{itemize}
A tree decomposition is \emph{nice} if $X_r = \emptyset$ and every bag $X_t$ is either 
\begin{itemize}
    \itemsep0em 
    \item a \emph{leaf bag} where $t$ is a leaf and $X_t = \emptyset$.
    \item an \emph{introduce bag} where $t$ has a child, $s$, and $X_t = X_s\sqcup \{v\}$.
    \item a \emph{forget bag} where $t$ has a child, $s$, and $X_t\sqcup \{v\} = X_s$.
    \item or a \emph{join bag} where $t$ has two children, $s$ and $s'$, and $X_t = X_s = X_{s'}$.
\end{itemize}
\end{definition} 

The \emph{width} of a tree decomposition is defined as $\max_{t\in T}(||X_t||)-1$. The \emph{treewidth} of a graph is the width of the tree decomposition of that graph of smallest width. The empty graph has a treewidth of $-1$, a non-empty graph without edges has treewidth $0$, trees have treewidth $1$ and the complete graph has treewidth $||V||-1$. The tree decomposition of the graph pictured in \cref{FIG: Tree decomposition} below has width $2$ and this is also the treewidth of that graph. Every tree decomposition can be transformed into a nice tree decomposition by increasing the number of bags by a constant factor and without increasing the width. A nice tree decomposition of the graph in \cref{FIG: Tree decomposition} is shown later in \cref{FIG: Tree decomposition nice ETH} in \cref{sec:eth}.

\begin{figure}[!h]
\centering
\begin{subfigure}{.28\textwidth}
  \centering
  \includegraphics[width=0.95\linewidth]{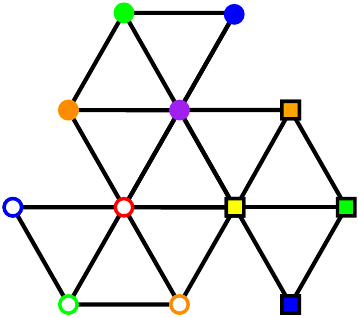}
\end{subfigure}%
\begin{subfigure}{.28\textwidth}
  \centering
  \includegraphics[width=0.85\linewidth]{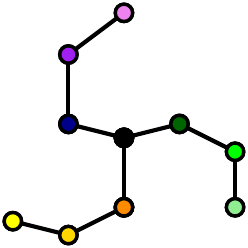}
\end{subfigure}
\begin{subfigure}{.28\textwidth}
  \centering
  \includegraphics[width=0.9\linewidth]{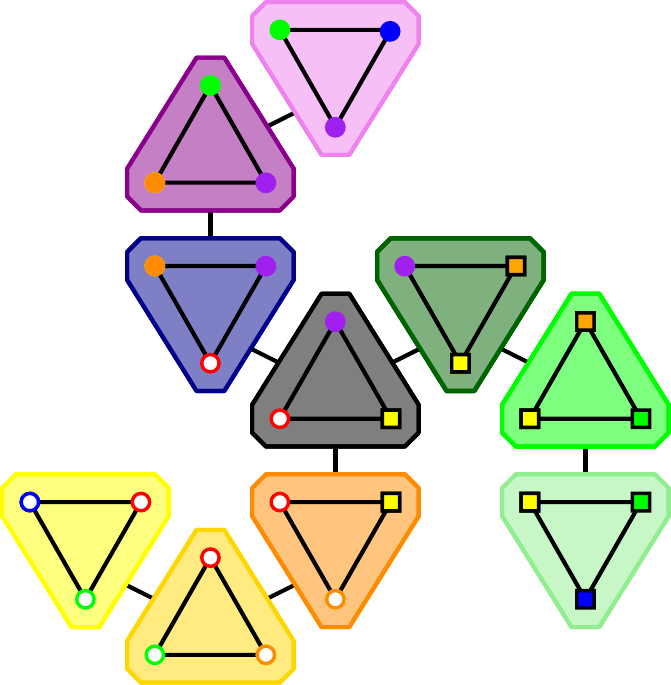}
\end{subfigure}
 \caption{\label{FIG: Tree decomposition}A graph (left) together with a graphical representation of a tree decomposition (right). The figure in the middle is the tree used in the tree decomposition and its nodes are mapped to the content of the ``triangle'' that has approximately the same colour and relative position.}
\end{figure}

Computing the treewidth of a graph is an NP-complete problem but this is not going to be a big issue. The problem can be solved in FPT-time when the parameter $k$ is the solution size (i.e. the treewidth of the input graph) \cite{bodlaender1996linear}. There is also an algorithm running in $2^{\mathcal{O}(k)}\cdot n$-time (where $k$ is the treewidth of the graph) that finds a tree decomposition whose width is a constant factor approximation of the actual treewidth \cite{bodlaender2016c}. This means that any problem solvable in FPT-/$2^{\mathcal{O}(k)}\cdot n$-time when parameterized by the width of a tree decomposition that is given as part of the input is also solvable in FPT-/$2^{\mathcal{O}(k')}\cdot n$-time when parameterized by the treewidth, $k'$, of the actual graph. On the practical side of things, the Parameterized Algorithms and Computational Experiments (PACE) Challenge in 2017 focused on finding the treewidth of graphs which resulted in numerous practical implementations \cite{PACE}. 

We stick to the common practice of ignoring the additional cost of finding a tree decomposition for the above reasons. This is achieved by just assuming that some tree decomposition of the graph of width $k$ is given together with the input.

\section{Solution Size as a Parameter}\label{sec:SolutionSize}

This is the section where we prove that the \textsc{HL}$_d$ problem is W[1]-hard when parameterized by solution size for all $d\geq 1$. In fact, we prove the even stronger result that the problem is W[1]-hard to approximate to a constant factor using this parameter. These results even hold when the input is restricted to simplicial complexes embedded in $\mathbb{R}^{d+3}$. The problems in this section are all parameterized by solutions size.

\subsection{Parameterized Gap Problems}

An elegant way of proving that a problem can not be approximated in FPT-time is by showing that the gap version of the problem is W[1]-hard. 

\begin{definition} \textsc{Gap Nearest Codeword} (NC$_\gamma$)\\
Input: An $m\times n$ matrix $A$ over $\mathbb{Z}_2$, a vector $v\in \mathbb{Z}_2^n$ and a number $k\in \mathbb{R}$.\newline
Output:``YES'' if there is a vector $b\in \mathbb{Z}_2^n$ such that $|| v + Ab|| \leq k$ and ``NO'' if for every vector $b\in \mathbb{Z}_2^n$ we have $||v + Ab|| > k\cdot \gamma$. Otherwise the output does not matter.
\end{definition}

The \textsc{Nearest Codeword} (NC) problem is defined as NC$_1$ and this is just a normal decision problem. When $\gamma > 1$ this is no longer the case, since we do not need to answer ``yes'' or ``no'' on every input. Formally, gap problems are examples of \emph{promise problems}. As the name suggests, these are problems where we are ``promised'' that an instance has certain (good) properties. In the particular case of gap problems, the property is that an instance is either a yes instance or that it is not even close to being one in terms of solution size. Promise problems are more general than decision problems but the concept of $W[1]$-hardness generalizes to this setting without complications.

\begin{theorem}[\cite{bhattacharyya2019parameterized}]\label{THM: NC hardness of natural parameter}
The NC$_{\gamma}$ problem parameterized by solution size is $W[1]$-hard.
\end{theorem}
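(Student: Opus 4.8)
The plan is to first strip away the coding-theory language. Observe that $\mathrm{NC}_\gamma$ is precisely a gap version of \textsc{Min-Unsatisfied Linear Equations} over $\mathbb{Z}_2$: reading $A$ as the coefficient matrix of a system $Ab=v$, the quantity $||v+Ab||$ is exactly the number of equations violated by the assignment $b$. So it suffices to give, for some absolute constant $\gamma_0>1$, a parameterized reduction from a $W[1]$-complete problem --- I would take \textsc{Multicolored Clique} with parameter $k$ --- to the promise problem ``distinguish: some $b$ violates at most $k'$ equations, versus: every $b$ violates more than $\gamma_0 k'$ equations'', with the new parameter $k'$ bounded by a computable function of $k$. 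To reach an arbitrary constant $\gamma$ afterwards one amplifies the gap at the CSP level (a constant number of rounds of parallel repetition of the intermediate label-cover instance, which keeps the number of ``query blocks'' --- hence $k'$ --- a function of $k$, and keeps the instance of polynomial size), so it is enough to manufacture \emph{some} constant gap.

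First I would write down the naive encoding of a multicolored clique as a constraint system: a Boolean variable $x_{i,v}$ for every color class $i\in[k]$ and every vertex $v$ in class $i$, ``selection'' constraints forcing exactly one variable true per class, and ``compatibility'' constraints saying that the chosen vertices of classes $i$ and $j$ span an edge. The difficulty surfaces immediately and is the whole point: ``$x_{i,u}=x_{j,w}=1 \Rightarrow uw\in E$'' is quadratic, not $\mathbb{Z}_2$-linear, and ``exactly one true'' is not a single $\mathbb{Z}_2$-linear equation either. Forcing these into linear form while leaving a constant \emph{fraction} of them unsatisfiable in the no-case is exactly what a PCP-type construction provides, and is the reason the problem is hard to approximate rather than merely hard.

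The resolution I would use is the parameterized (``distributed'') PCP machinery. Use the distributed-PCP theorem for $k$-\textsc{Clique} to reduce \textsc{Multicolored Clique} to a \emph{gap} \textsc{MaxCover}/label-cover instance in which (i) the number of left query blocks is a function of $k$ alone, (ii) the consistency constraints are projections over an alphabet whose symbols have bit-length $O(\log n)$, (iii) the instance has size $f(k)\cdot n^{O(1)}$, and (iv) a yes-instance is perfectly coverable while a no-instance has coverage at most $1/\gamma_0$. Then compose with a linear long-code / Hadamard-type gadget over $\mathbb{Z}_2$: an alphabet symbol is replaced by its Hadamard codeword (of polynomial length), each projection constraint becomes a family of $\mathbb{Z}_2$-linear equations on the codeword bits, and the gadget is arranged so that a valid global solution pays exactly one violated equation per query block. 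A Fourier-analytic soundness analysis --- fold the long codes and decode a label from the heavy Fourier coefficients --- then shows that a no-instance forces every assignment to violate a constant fraction of the equations; taking $k'$ to be the (function-of-$k$-bounded) number of blocks completes the parameterized reduction, and the amplification step above upgrades $\gamma_0$ to any desired $\gamma$.

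The hard part, and really the only nontrivial one, is the gap itself: exact $W[1]$-hardness of \textsc{Nearest Codeword} is classical but gapless, and there is no off-the-shelf parameterized PCP theorem to hand us one, so the distributed-PCP route is essentially forced. Three points need care that do not arise over $\mathbb{R}$ or in the NP setting. The intermediate label-cover alphabet must have bit-length $O(\log n)$ (so the Hadamard gadget stays polynomial in size) while the \emph{number} of query blocks must depend on $k$ only --- this is exactly the shape of the MaxCover instance produced by the $k$-\textsc{Clique} distributed PCP, as opposed to a generic CSP reduction, and it is what keeps the produced parameter $k'$ an FPT function of $k$. The long-code soundness analysis has to be carried out over $\mathbb{Z}_2$ rather than over a large field. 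And one must track throughout that it is genuinely the Hamming weight $||v+Ab||$ --- i.e.\ ``solution size'' --- that is being bounded, so that $k'\le g(k)$ holds on the nose. Everything else is routine bookkeeping.
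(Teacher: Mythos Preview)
The paper does not prove this theorem at all: it is stated with a bare citation to \cite{bhattacharyya2019parameterized} and then used as a black box in the subsequent reduction to $\mathrm{HL}_{1,\gamma}$. So there is nothing in the paper to compare your argument against; you are reconstructing the proof from the cited reference rather than matching anything the authors wrote.

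As a sketch of the Bhattacharyya et al.\ argument your outline is broadly on target --- the route is indeed through a parameterized gap label-cover (their \textsc{MaxCover}) obtained via distributed-PCP ideas, followed by an encoding step that turns projection constraints into $\mathbb{Z}_2$-linear equations while keeping the parameter a function of $k$. Two caveats worth flagging. First, the actual reduction in that paper does not go via a Hadamard/long-code gadget with Fourier decoding in the way you describe; the passage from \textsc{MaxCover} to (sparse) nearest codeword is more combinatorial, and the gap amplification is handled at the \textsc{MaxCover} level rather than by CSP-style parallel repetition, so your ``hard part'' paragraph is pointing at the right obstacle but not quite the right tool. Second, your remark that ``there is no off-the-shelf parameterized PCP theorem'' understates what is available: the cited work precisely packages the needed gap-\textsc{MaxCover} statement, so once one is willing to quote it the remaining steps are a clean reduction rather than a fresh PCP construction. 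None of this is a genuine error in your plan, but if you intend to present a self-contained sketch you should align the gadgetry with what the reference actually does.
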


The HL$_d$ problem has a gap version as well. 

\begin{definition} \textsc{Gap Homology Localization} (HL$_{d,\gamma}$)\\
Input: A weighted simplicial complex $K$, a $d$-cycle $V$ and a number $k \in \mathbb{R}$. \newline
Output: ``YES'' if there exists a cycle $U$ homologous to $V$ weighing less than $k$ and ``NO'' if every cycle $U$ homologous to $V$ weighs more than $k \cdot \gamma$. Otherwise the output does not matter.
\end{definition} 

The \emph{unweighted} HL$_{d,\gamma}$ problem is the restricted version of the HL$_{d,\gamma}$ problem where the weight of every simplex in the simplicial complex is $1$. 
 
\begin{definition} An FPT reduction from a parameterized promise problem, $\mathcal{A}$, to another, $\mathcal{B}$, is a procedure that transforms $(X, k)\in \mathcal{A}$ to $(Y, k') \in \mathcal{B}$ in FPT-time so that:
\begin{itemize}
    \itemsep0pt 
    \item There exists a computable function $g$ such that $k' \geq  g(k)$.
    \item If the input is a ``YES'' instance then so is the output instance.
    \item If the input is a ``NO'' instance then so is the output instance.
\end{itemize}
\end{definition}

If there is a reduction of this kind from a $W[1]$-hard promise problem to another promise problem then this other promise problem is $W[1]$-hard as well. Note that a gap problem can be solved with a single call to a $\gamma$-approximation algorithm: First compute the size of an approximate solution. If this is less or equal to $k\cdot \gamma$ output ``YES'', otherwise output ``NO''.
\subsection{W[1]-Hardness of Approximation}

\begin{theorem}\label{THM: hardness of natural parameter}
The unweighted HL$_{1, \gamma}$ problem parameterized by solution size is $W[1]$-hard.
\end{theorem}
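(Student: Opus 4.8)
The plan is to give a parameterized reduction from the W[1]-hard problem \textsc{Gap Nearest Codeword} (NC$_\gamma$) of \cref{THM: NC hardness of natural parameter} to the unweighted HL$_{1,\gamma}$ problem. The key observation is that homology over $\mathbb{Z}_2$ is inherently a linear-algebraic problem: fixing a reference cycle $V$, the cycles homologous to $V$ form the coset $V + B_1(K)$ inside $Z_1(K)$, and $B_1(K)$ is exactly the column space of the boundary matrix $\partial_2$. So if I can realize an arbitrary $\mathbb{Z}_2$-matrix $A$ as (a submatrix of) a $\partial_2$ map and an arbitrary target vector $v$ as a $1$-cycle $V$, then minimizing $\|V + \partial_2 b\|$ becomes exactly minimizing $\|v + Ab\|$, and the gap is preserved because the weights are all $1$ so the cost of a cycle is literally its Hamming weight / number of edges.

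The main steps, in order. First I would take an instance $(A, v, k)$ of NC$_\gamma$ with $A$ an $m\times n$ matrix over $\mathbb{Z}_2$. I build a simplicial complex $K$ as follows: create one edge $e_i$ (a $1$-simplex) for each of the $m$ rows of $A$, arranged so that together they form a $1$-cycle — for instance, glue the $m$ edges cyclically into a single loop on $m$ vertices $w_1,\dots,w_m$, so that $\sum_i e_i$ is a $1$-cycle and $V := \{e_i : v_i = 1\}$ is the $1$-cycle corresponding to $v$. (One must be slightly careful: an arbitrary $0/1$ vector on the edges of a loop need not be a cycle. This is the first technical point and I address it below.) Then for each of the $n$ columns of $A$ I introduce a $2$-simplex (triangle) $\tau_j$ whose boundary $\partial_2 \tau_j$ equals the support of column $j$ of $A$; this generally requires subdividing or adding auxiliary edges and vertices so that a genuine triangle has the prescribed boundary, and here the suspension trick from \cref{sec:preliminaries} or a standard cone/coboundary gadget is the natural tool for forcing a $2$-chain with a prescribed $1$-boundary while keeping everything a simplicial complex. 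I then set the HL target to $V$ and the threshold to $k$. Since all weights are $1$, for any $2$-chain $b = \sum_j b_j \tau_j$ we get $\cost{V \triangle \partial_2 b} = \|v + Ab\|$ up to the (fixed, accountable) contribution of the auxiliary simplices, so a "YES" instance maps to a "YES" instance and a "NO" instance — where every $Ab$ leaves residual weight $> k\gamma$ — maps to one where every homologous cycle has cost $> k\gamma$ (adjusting $k$ by the constant offset). Finally I check the parameterized-reduction bookkeeping: the construction is polynomial (hence FPT) in $m+n$, and the new solution-size parameter $k'$ is $k$ plus a constant offset, so $k' \geq g(k)$ for a suitable computable $g$; the gap $\gamma$ is unchanged.

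I also need the "embedded in $\mathbb{R}^{d+3}$" refinement promised in the section intro, but for $d=1$ the statement as written is just W[1]-hardness of HL$_{1,\gamma}$, and the $\mathbb{R}^4$ embedding (or the lift to higher $d$ via suspension, which preserves embeddability as noted after the suspension definition) can be layered on afterwards — a generic $2$-complex embeds in $\mathbb{R}^4$, so this part is essentially automatic once the complex is built and does not affect the gap.

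\textbf{Main obstacle.} The delicate part is the geometric realization: encoding an arbitrary $\mathbb{Z}_2$-matrix as a boundary map $\partial_2$ of an honest simplicial complex. Two issues must be handled together — (i) the columns of $A$ may have support of size other than $3$, so a single triangle cannot directly have that boundary, forcing a gadget that represents column $j$ by a small fan of triangles whose boundaries $\mathbb{Z}_2$-sum to the desired vector while the "internal" edges cancel; and (ii) one must ensure these gadget edges contribute a controlled, instance-independent amount to the cost of any homologous cycle (ideally zero, or a constant that can be absorbed into $k$), so that the Hamming-weight objective of NC is reproduced exactly and the $\gamma$-gap is not diluted. Getting both the "any matrix is a $\partial_2$" encoding and the "no cheating through gadget edges" cost control simultaneously — so that the reduction is gap-preserving and not merely gap-introducing — is where the real work lies; everything else is routine parameterized-complexity bookkeeping.
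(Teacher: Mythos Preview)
Your high-level strategy --- reduce from NC$_\gamma$ by realizing $A$ as a $\partial_2$ map --- matches the paper's, and you have correctly located the real obstacle in your final paragraph. But the construction you sketch does not clear that obstacle. With one edge $e_i$ per row on a single loop, $V=\{e_i:v_i=1\}$ is almost never a $1$-cycle; you flag this and never resolve it. More seriously, your ``constant offset'' accounting cannot preserve a multiplicative gap: if gadget edges shift the optimum additively by $C>0$, a NO instance with $\mathrm{OPT}_{\mathrm{NC}}>k\gamma$ only yields $\mathrm{OPT}_{\mathrm{HL}}>k\gamma+C$, and the resulting ratio $(k\gamma+C)/(k+C)$ is strictly below $\gamma$. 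A gap-preserving reduction needs the two optima related by a fixed \emph{multiplicative} factor with zero additive slack, and your fan-of-triangles idea gives no mechanism to stop an HL solution from routing through internal gadget edges --- which is exactly how the correspondence with NC solutions would break.

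The paper (following Chen--Freedman) uses a different geometry that handles both issues at once. Each \emph{row} becomes a full circle triangulated with exactly four edges, so $V$ is a union of whole circles and hence automatically a $1$-cycle. Each \emph{column} becomes a $2$-sphere with one hole per nonzero entry, its boundary circles glued to the matching row circles; adding the boundary of an entire column-sphere then toggles precisely the corresponding row circles. The step you are missing is a retraction lemma proved in Chen--Freedman: every cycle homologous to $V$ can be replaced by one \emph{no larger} that is supported entirely in the row circles. This forces $\mathrm{OPT}_{\mathrm{HL}}=4\cdot\mathrm{OPT}_{\mathrm{NC}}$ exactly, so sending $k\mapsto 4k$ preserves both the gap $\gamma$ and the parameter.
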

\begin{proof} All we need is to make some observations about the \emph{strict} reduction used in \cite{chen2011hardness}. We provide a brief sketch of the parts that we need, in particular the polynomial reduction from the $NC$ problem to the HL$_1$ problem. First the matrix $A$ is used to define a space $K$. Each row of the matrix is represented as a topological circle while each column is represented by a $2$-sphere with one hole for every row in which the column takes the value $1$. The boundaries of these holes are glued to the circle representing that particular row. $K$ can then be triangulated so that it becomes a simplicial complex with some important properties. For instance, each ``row circle'' is a subcomplex containing precisely $4$ edges. The vector $v$ is mapped to the cycle $V$ which is made up of the set of edges in the circles that corresponds to the rows where $v$ is $1$.

\begin{figure}[!h]
\centering
\begin{subfigure}{.30\textwidth}
  \centering
  \includegraphics[width=0.9\linewidth]{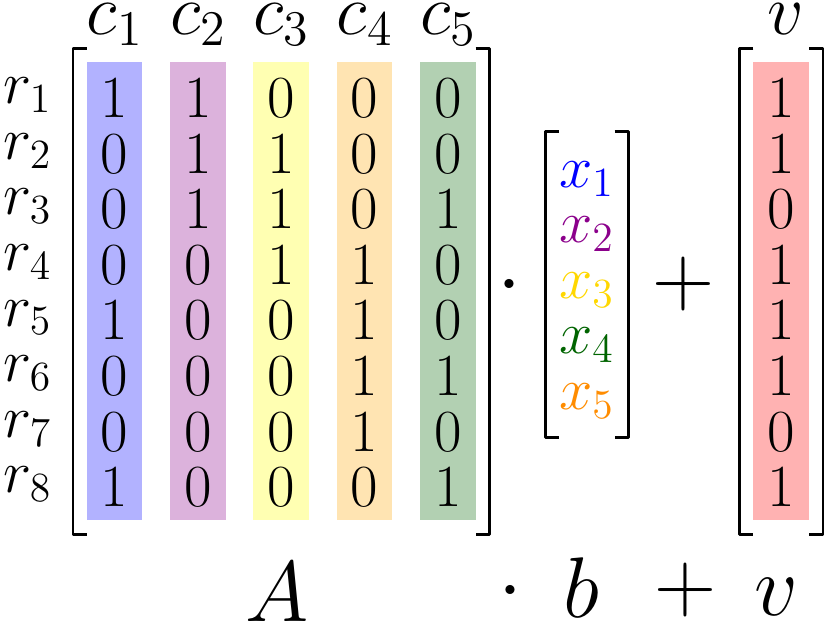}
\end{subfigure}%
\begin{subfigure}{.65\textwidth}
  \centering
  \includegraphics[width=0.9\linewidth]{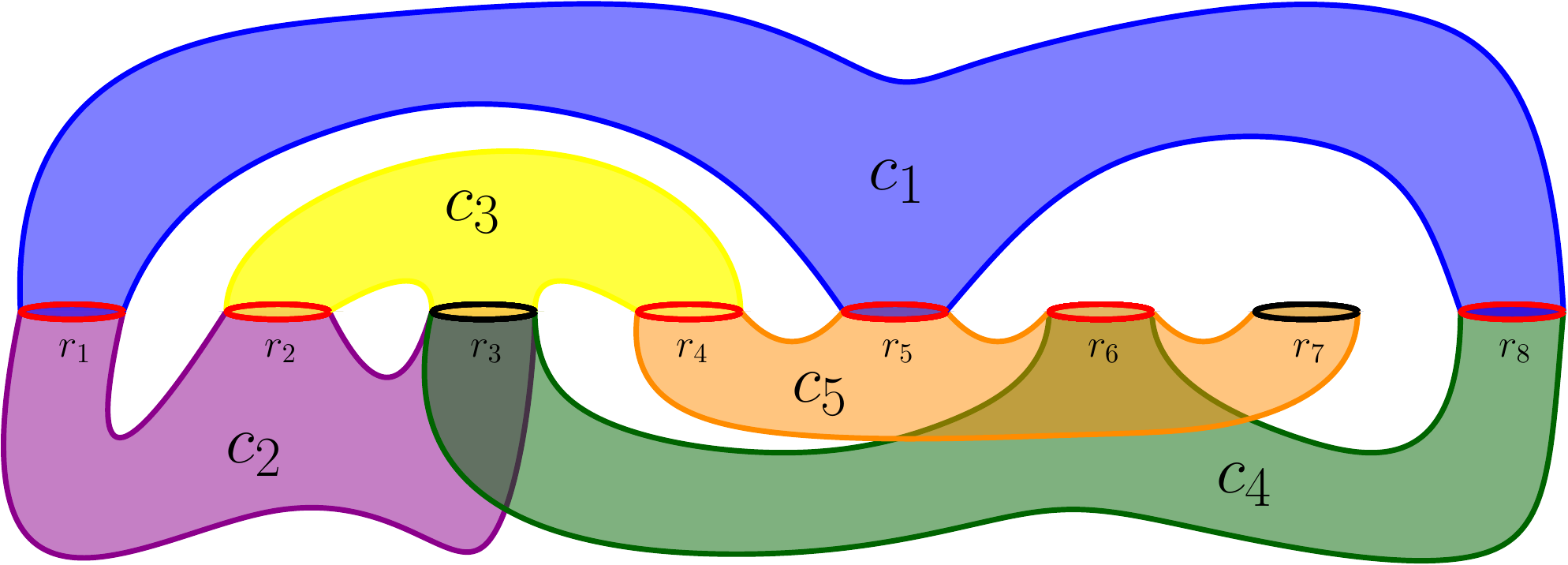}
\end{subfigure}
\caption{\label{FIG: Chen reduction} The FPT-reduction associate a circle to each row $r_i$ and each column $c_i$ is identified with a $2$-sphere with $||c_i||$ open discs removed. Each boundary component of a ``column sphere'' is glued to a unique row circle corresponding to a row where $c_i$ is $1$. The vector $v$ is mapped to an input cycle $V$ marked in red and it consists of the $1$-simplices in the circles corresponding to rows where $v$ is $1$.}
\end{figure}

The optimal solution to an input instance is always precisely one fourth of the size of an optimal solution to the output instance. First, note that any solution to the input problem gives a solution to the output that is $4$ times bigger. Just let $U = V \triangle \delta(W)$ where $W$ consists of every $2$-simplex inside the ``$2$-spheres'' that corresponds to the columns used in the optimal solution to the input problem. To show the converse we will use the fact that any cycle $U$ homologous to $V$ is itself homologous to a third cycle $U'$ which is no larger than $U$ but is contained within the ``row circles''. This was proven constructively in \cite{chen2011hardness}. If we know that a cycle $U'$ has these these properties then $V$ is homologous to $U'$ via a $2$-chain $W$ containing either none or all of the $2$-simplices in any ``column-sphere''. This way we also get a solution to the input problem that is $1/4$ of the size of $U'$. 
If we map $k$ to $k' = g(k) = 4\cdot k$ then this yields a parameterized reduction from the NC$_\gamma$ problem to the HL$_{1,\gamma}$ problem. 

\end{proof}

\begin{cor}
The unweighted HL$_{d, \gamma}$ problem parameterized by solution size is $W[1]$-hard for $d\geq 1$ even when it is restricted to spaces embedded in $\mathbb{R}^{d+3}$.
\end{cor}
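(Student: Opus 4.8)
The plan is to bootstrap from \cref{THM: hardness of natural parameter} by iterating the suspension reduction from \cref{sec:preliminaries}. Recall that suspension provides, for every $d$, a polynomial-time reduction from the unweighted HL$_{d}$ to the unweighted HL$_{d+1}$ sending $(K,V,k)$ to $(S(K),S(V),2k)$, that it respects the homology relation, and that $S(K)$ embeds in $\mathbb{R}^{N+1}$ whenever $K$ embeds in $\mathbb{R}^{N}$. First I would lift this to the gap setting. Since each $d$-simplex $\sigma\in V$ contributes exactly the two $(d{+}1)$-simplices $\sigma\cup v^+,\sigma\cup v_-$ to $S(V)$, the one--one correspondence $U\mapsto S(U)$ between $d$-cycles of $K$ and $(d{+}1)$-cycles of $S(K)$ is cost-doubling, $\cosst(S(U))=2\,\cosst(U)$, and, being homology-preserving, it carries minimum-weight representatives to minimum-weight representatives. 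Hence the lightest cycle homologous to $S(V)$ weighs exactly twice the lightest cycle homologous to $V$, so a ``YES'' (resp.\ ``NO'') instance of unweighted HL$_{d,\gamma}$ maps to a ``YES'' (resp.\ ``NO'') instance of unweighted HL$_{d+1,\gamma}$ with the \emph{same} $\gamma$, the solution-size parameter being multiplied by $2$.

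Iterating this $d-1$ times gives, for each fixed $d\geq 1$, an FPT reduction --- in fact a polynomial-time one, since $|S(K)|\leq 3|K|+2$ and so $d-1$ suspensions blow the size up only by the constant factor $3^{d-1}$ --- from unweighted HL$_{1,\gamma}$ to unweighted HL$_{d,\gamma}$; the parameter is multiplied by the constant $2^{d-1}$ and the ambient Euclidean dimension increases by one at each step. Thus the corollary reduces to a single geometric fact: the complex $K$ produced by the Chen--Freedman reduction \cite{chen2011hardness} recalled in the proof of \cref{THM: hardness of natural parameter} can be taken to lie in $\mathbb{R}^{4}=\mathbb{R}^{1+3}$; the $d-1$ suspensions then land in $\mathbb{R}^{4+(d-1)}=\mathbb{R}^{d+3}$, and for $d=1$ this is exactly \cref{THM: hardness of natural parameter} with the embedding added.

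To see $K\hookrightarrow\mathbb{R}^4$ I would use its explicit structure: $K$ is the union of finitely many pairwise-disjoint ``row circles'' and finitely many ``column spheres-with-holes'', each column piece being a planar (genus-$0$) surface whose boundary circles are glued onto distinct row circles. Place the row circles in a hyperplane $\mathbb{R}^3\subset\mathbb{R}^4$ as small, unknotted, pairwise-unlinked round circles; a regular neighbourhood in $K$ of a row circle carrying $m$ attached column pieces has the form $S^1\times(\text{star with }m\text{ edges})$, which embeds in $\mathbb{R}^3$ and hence extends off the hyperplane, and the interiors of the planar column surfaces are routed into the complementary half-spaces, the fourth coordinate being used to separate surfaces that would otherwise cross. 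Equivalently one observes that every vertex link of $K$ is planar --- the link of a vertex on a row circle with $m$ attached column pieces is the generalized theta graph consisting of two vertices joined by $m$ parallel arcs --- so $K$ is locally flat in $\mathbb{R}^3$, and a general-position argument (immerse in $\mathbb{R}^3$, then push the double set off into the fourth coordinate) upgrades this to an embedding in $\mathbb{R}^4$.

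The main obstacle is precisely this last step: pinning down rigorously and cleanly that the Chen--Freedman complex embeds in $\mathbb{R}^4$. The suspension bookkeeping --- preservation of $\gamma$, the factor $2$ on the parameter, the $+1$ on the dimension, and the polynomial running time --- is routine given what \cref{sec:preliminaries} already establishes, and W[1]-hardness of the source problem is \cref{THM: hardness of natural parameter}; so once the base-case embedding is nailed down, the corollary follows by induction on $d$.
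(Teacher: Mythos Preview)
Your overall strategy matches the paper's: establish the base case $d=1$ by showing the Chen--Freedman complex embeds in $\mathbb{R}^4$, then induct via suspension, which preserves the gap ratio $\gamma$, doubles the solution-size parameter, and raises the ambient dimension by one. The suspension bookkeeping you give is exactly what the paper sketches.

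The one place you diverge is the $\mathbb{R}^4$ embedding. You argue via vertex links (the link at a row-circle vertex with $m$ attached column pieces is a theta graph on two vertices with $m$ parallel arcs, hence planar) and a general-position ``immerse in $\mathbb{R}^3$, then push doubles into the fourth coordinate'' argument. This is plausible but informal: planar links give only a local embedding condition, and making the global resolution of the double set rigorous for a $2$-complex takes more care than you indicate. The paper instead uses a clean \emph{book embedding}: take a book with a $2$-dimensional spine and $3$-dimensional pages, place all row circles in the spine, and put each column sphere-with-holes on its own page so that only its boundary meets the spine along the appropriate row circles. Since such a book embeds in $\mathbb{R}^4$, so does $K$. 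This is shorter and avoids any general-position subtleties; you may want to swap it in for your link argument.
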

\begin{proof}

The proof is by induction. For the base case, note that the space in the proof of \cref{THM: hardness of natural parameter} can always be embedded in $\mathbb{R}^4$. We can use a generalization of book embeddings of graphs to see this. Let the spine of such a book be $2$-dimensional and let the pages be $3$-dimensional. First we embed the ``row circles'' in the spine. Then embed each ``column $2$-sphere'' on its own page so that only its boundary intersects the spine in such a way that each boundary component is identified with the appropriate ``row circle''. This shows that the space can be embedded in the book which in turn can be embedded in $\mathbb{R}^4$.

The induction step uses the reduction from the HL$_{d}$ problem to the HL$_{d+1}$ problem from \cref{sec:preliminaries} based on suspension. This is clearly a parameter preserving reduction and the bijection between feasible solutions can be used to show that this reduction is also a gap preserving reduction from the HL$_{d, \gamma}$ problem to the HL$_{d+1, \gamma}$ problem.
\end{proof}

\section{Treewidth Algorithms}\label{sec:algorithms}



The two algorithms we present here work on the same principle. We take the width of a tree decomposition of a graph related to the input complex as a parameter. We use the decomposition to traverse the simplicial complex in an orderly fashion computing a large number of ``partial solutions'' to the input problem. These partial solutions cover larger and larger portions of the input. Each new partial solution is computed dynamically based on those solutions we have already found by either extending them or gluing them together. In the end the entire complex has been processed. The final partial problem we solve will be equivalent to solving the HL$_d$ problem. These partial problems we need to solve to get to this point are all instances of the \textsc{Restricted HL$_d$} (R-HL$_d$) problem.

\begin{definition}\label{Def: Restricted HL} \textsc{Restricted HL$_d$ (R-HL$_d$)}: \newline 
Input: A simplicial complex $K$, a $d$-cycle $V$ in $K$ and a four-tuple $(G_t, X_t,\Qt, \Pt )$ where $G_t\subseteq K_d \cup K_{d+1}$, $X_t\subseteq G_t$, $\Qt \subseteq X_t \cap K_{d+1}$ and $\Pt \subseteq X_t \cap K_d$ \newline 
Output: The minimum value of $\cost{U\cap F_t}$ where $F_t = G_t \setminus X_t$ and
\begin{itemize}
    \itemsep0pt
    \item $U$ is homologous to $V$ through a $d+1$-chain, $W$, contained in $G_t$.
    \item $U$ intersects $X_t$ at $\Pt $.
    \item $W$ intersects $X_t$ at $\Qt $.
\end{itemize}
If no such $U$ exists then output $\infty$.
\end{definition}

\begin{figure}[h!]
\centering
\includegraphics[width=120mm]{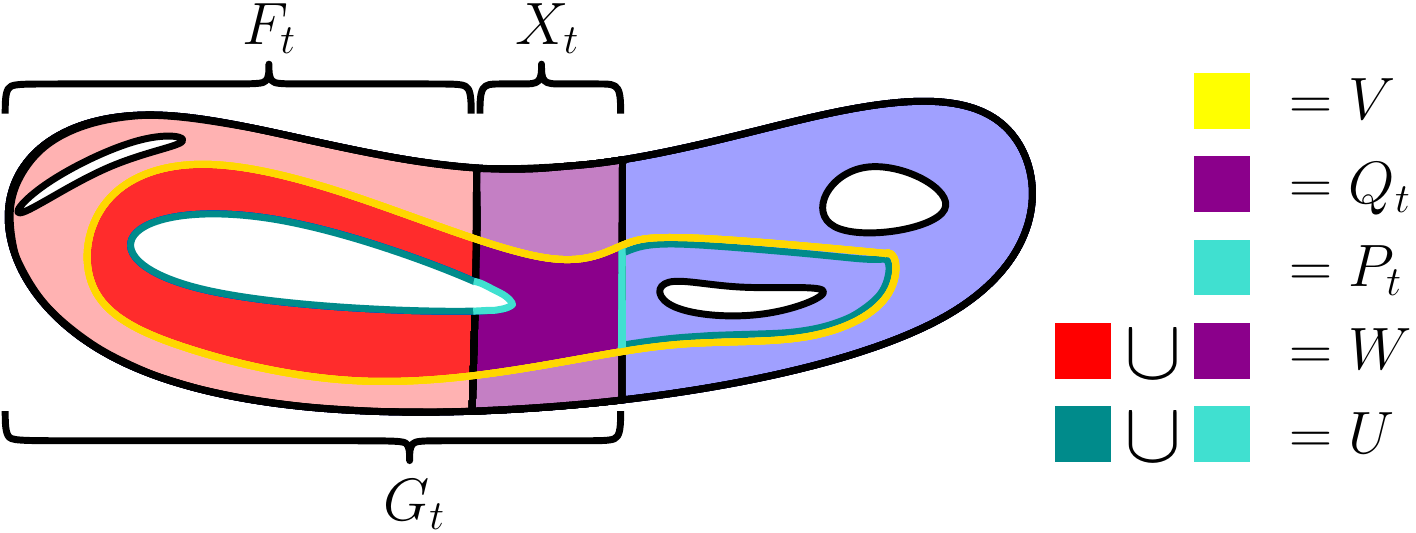}
\caption{A sketch of an instance of the R-HL$_1$ in a a simplicial complex $K$ on input $(G_t,X_t.Q_t,P_t)$ together with a solution $(W,U)$. This is an optimal solution to the instance assuming the weight of a cycle in the space is the same as its length.}
\end{figure}

The idea is that we solve the problem on $G_t$ and ignore the rest of the space. On the subset $X_t$ of $G_t$ we have complete control of of how a solution will behave. This is because the conditions above force $W$ to be precisely $\Qt$ on $X_t$ and $U$ to be precisely $\Pt$ on $X_t$. In this sense, the problem is to find the best solution in $F_t$ that can be spliced together with $\Qt$ and $\Pt$.

\subsection{The Connectivity Graph}\label{sec:alg1}

The first of our two algorithms is parameterized by the treewidth of the $d$-connectivity graph of the simplicial complex.

\begin{definition}
The \emph{ $d$-connectivity graph} of a simplicial complex is a graph with $d$-simplices as vertices and edges going between distinct $d$-simplices that share a common $(d-1)$-face.
\end{definition}

\begin{figure}[h!]
\centering
\includegraphics[width=100mm]{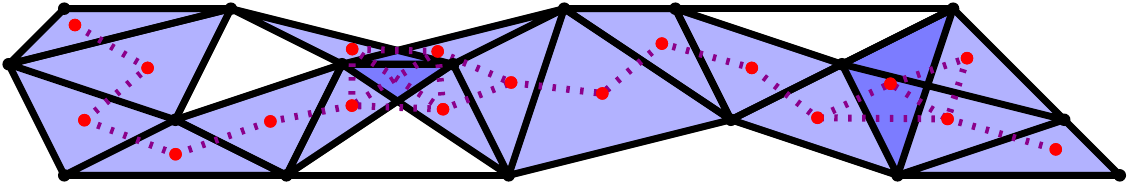}
\caption{\label{FIG: ALG aid}The $2$-connectivity graph in a $2$-dimensional simplicial complex.}
\end{figure}

We denote the $d$-connectivity graph of $K$ as $Con_{d}(K)$. The remainder of this section is used to prove the following theorem.

\begin{theorem}
The HL$_d$ problem can be solved in $\mathcal{O}(2^{k(2d+5)}\cdot n)$ time if we are given a tree decomposition of $Con_{d+1}(K)$ with width $k$.
\end{theorem}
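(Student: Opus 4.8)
The plan is to design a dynamic programming algorithm over a nice tree decomposition of $Con_{d+1}(K)$ that, at each node $t$, computes a table storing the value of R-HL$_d$ on every relevant four-tuple $(G_t, X_t, \Qt, \Pt)$. First I would fix, for each node $t$ of the tree decomposition, the set $G_t$ to be all $(d+1)$-simplices appearing in bags in the subtree rooted at $t$, together with all their $d$-faces, and let $X_t$ be the set of $(d+1)$-simplices in the bag $X_t$ together with their $d$-faces; then $F_t = G_t \setminus X_t$ is the ``already forgotten'' part, and the quantity the algorithm maintains is exactly the R-HL$_d$ output $\min \cost{U \cap F_t}$ indexed by the choice of $\Qt \subseteq X_t \cap K_{d+1}$ and $\Pt \subseteq X_t \cap K_d$. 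Since a bag of the tree decomposition has at most $k+1$ vertices (i.e. $(d+1)$-simplices) and each such simplex has $d+2$ faces of dimension $d$, the set $X_t$ has size $O(k(d+2))$, so the number of pairs $(\Qt,\Pt)$ is $2^{O(kd)}$; this is where the $2^{k(2d+5)}$-type bound will come from once the bookkeeping is done carefully.

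Next I would specify the transitions for the four bag types of a nice tree decomposition. For a \emph{leaf bag} the table is trivial: $G_t = F_t = \emptyset$, so the only feasible entry is the one with $U = V\cap\emptyset$, $W=\emptyset$, which is feasible iff $V$ is a boundary within the empty complex — i.e. iff $V$ restricted there is zero — and has value $0$. For an \emph{introduce bag} adding a $(d+1)$-simplex $\tau$ (and its new $d$-faces), nothing new is forgotten, so $F_t = F_s$; the new entries are obtained by extending each old entry with the $2^{O(d)}$ consistent choices of whether $\tau \in W$ and which new $d$-faces lie in $U$, subject to the local boundary/cycle constraints on the newly visible faces. For a \emph{forget bag} removing $\tau$, the simplex $\tau$ and any of its $d$-faces not shared with remaining bag elements move from $X$ into $F$, so we take a minimum over the forgotten coordinates of $(\Qt,\Pt)$, adding $\cost{U \cap (\text{newly forgotten }d\text{-faces})}$ to the stored cost. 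For a \emph{join bag} with children $s,s'$ sharing the same bag, $F_t = F_s \sqcup F_{s'}$ and we combine entries with matching $(\Qt,\Pt)$ on $X_t$ by adding their costs — being careful that a $d$-face lying in both $F_s$ and $F_{s'}$ is not double counted, which is handled by the third tree-decomposition axiom (such a face would have to be in $X_t$, contradiction) together with the symmetric-difference semantics of $\mathbb{Z}_2$ chains for the $W$ part. At the root $r$ we have $X_r=\emptyset$ and $G_r = K_{d+1}\cup K_d$, so the single table entry is exactly the answer to HL$_d$.

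The correctness argument then reduces to two lemmas: (i) the separator property — because $X_t$ separates $G_t \setminus X_t$ from the rest of the complex in $Con_{d+1}(K)$, any $(d+1)$-chain $W$ and its boundary decompose compatibly across the decomposition, so the claimed recurrences are exhaustive and sound; and (ii) an exchange/locality argument showing that the minimum in R-HL$_d$ on $G_r$ equals the global minimum, which is immediate since $G_r$ is everything. I expect the main obstacle to be step (i): proving rigorously that fixing the ``interface'' data $(\Qt,\Pt)$ on $X_t$ is exactly the right amount of information to glue partial solutions — one must check that homologicity ``through a $(d+1)$-chain contained in $G_t$'' localises correctly, i.e. that $\partial W = U + V$ holds globally iff it holds on each piece given agreement on the shared bag, and that no $d$-simplex or $(d+1)$-simplex is miscounted at joins. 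The runtime bound is then routine: $O(n)$ nodes, $2^{O(kd)}$ table entries per node, and $2^{O(d)}$ work per entry per transition, which one tightens to $\mathcal{O}(2^{k(2d+5)}\cdot n)$ by counting faces precisely (each of the $\le k+1$ bag $(d+1)$-simplices contributes one $W$-bit and $d+2$ $U$-bits, giving $2^{(k+1)(d+3)}$, and the constant-size per-entry work and the introduce/forget overhead absorb the remaining factors).
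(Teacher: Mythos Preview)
Your overall architecture matches the paper's approach closely: dynamic programming over a nice tree decomposition of $Con_{d+1}(K)$, with states given by the restricted problem R-HL$_d$ indexed by $(\Qt,\Pt)$, and your description of the leaf, introduce, and forget transitions is essentially what the paper does. The correctness sketch via the separator property is also on the right track.

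The genuine gap is at the \emph{join} node. You write that one ``combines entries with matching $(\Qt,\Pt)$ on $X_t$ by adding their costs''. This is incorrect: only $\Qt$ must match between the two children. If $(W',U')$ is a solution on the $s$-side and $(W'',U'')$ on the $s'$-side, both with $W'\cap X_t = W''\cap X_t = \Qt$, then the combined chain is $W = W'\cup W''$ and the combined cycle is $U = V\triangle\partial W$. Restricting to $\overline{X_t}$, the boundary contributions from $W'\cap F_s$ and $W''\cap F_{s'}$ do \emph{not} cancel, and the shared piece $\Qt$ is counted once, not twice. The paper's actual recurrence is
\[
\tab{t,\Qt,\Pt}=\min\bigl\{\tab{s,\Qt,\Ps}+\tab{s',\Qt,\Pss}\ :\ \Pt=\Ps\triangle\Pss\triangle\partial(\Qt)\triangle(V\cap\overline{X_t})\bigr\},
\]
so for each target entry one must range over all compatible pairs $(\Ps,\Pss)$, of which there are up to $2^{k(d+2)}$.

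This error also explains the discrepancy in your runtime accounting. You correctly count $2^{(k+1)(d+3)}$ table entries per bag and claim ``constant-size per-entry work'', which would give $\mathcal{O}(2^{k(d+3)}n)$, not $\mathcal{O}(2^{k(2d+5)}n)$; you then assert the latter bound without saying where the extra factor of $2^{k(d+2)}$ comes from. In the paper it comes precisely from the minimisation over $(\Ps,\Pss)$ at the join bag: $2^{k(d+3)}$ entries times $2^{k(d+2)}$ candidate pairs gives $2^{k(2d+5)}$. With your (incorrect) matching rule the join would be cheap but would compute the wrong quantity.
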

Since a tree decomposition of optimal width can be found in FPT-time \cite{bodlaender1996linear} we also have the following result.
\begin{cor}
The HL$_d$ problem parameterized by the treewidth of $Con_{d+1}(K)$ is in \textbf{FPT}.
\end{cor}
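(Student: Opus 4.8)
I would prove the theorem by dynamic programming along a nice tree decomposition of $Con_{d+1}(K)$, computing at every node a table of optimal values of the R-HL$_d$ problem of \cref{Def: Restricted HL}; the corollary is then immediate, since an optimal-width tree decomposition of $Con_{d+1}(K)$ can be found in FPT time \cite{bodlaender1996linear}. First I would make the given decomposition nice (possible without increasing its width) and fix the dictionary between the decomposition and the R-HL$_d$ data. For a node $t$, let $V_t\subseteq K_{d+1}$ be the set of $(d+1)$-simplices occurring in bags of the subtree rooted at $t$; set $G_t := V_t\cup\{\tau\in K_d : \tau\subset\sigma\text{ for some }\sigma\in V_t\}$, let $\widehat X_t := X_t\cup\{\tau\in K_d : \tau\subset\sigma\text{ for some }\sigma\in X_t\}$ play the role of the set ``$X_t$'' from \cref{Def: Restricted HL} (renamed to avoid clashing with the bag $X_t$), and put $F_t := G_t\setminus\widehat X_t$. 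The table is $A_t[\Qt,\Pt]$, indexed by $\Qt\subseteq X_t$ and $\Pt\subseteq\widehat X_t\cap K_d$, holding the optimum of the instance $(K,V,(G_t,\widehat X_t,\Qt,\Pt))$. Since $X_r=\emptyset$ and $\partial W\subseteq G_r$ for every $(d+1)$-chain $W$, the value $A_r[\emptyset,\emptyset]$ differs from the HL$_d$ optimum only by the fixed constant $\cost{V\setminus G_r}$ (the weight of the $d$-simplices of $V$ lying in no $(d+1)$-simplex), so $A_r[\emptyset,\emptyset]$ answers HL$_d$.

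The engine of correctness is the separator property of tree decompositions applied to $Con_{d+1}(K)$: the cofaces of a $d$-simplex $\tau$ form a clique of $Con_{d+1}(K)$, hence occur together in some bag. From this I would extract the facts that make the recurrences local: (i) if $\tau$ has a coface in $V_t\setminus X_t$ then \emph{every} coface of $\tau$ lies in $V_t$, so membership of $\tau$ in a feasible cycle is already decided and is correctly absorbed into $F_t$; and (ii) for a join node $t$ with children $s,s'$ the sets $F_s,F_{s'}$ are disjoint, $G_s$ and $G_{s'}$ meet only inside $\widehat X_t$, and neither $\partial W_{s'}$ nor $\partial\Qt$ reaches into $F_s$. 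The four recurrences are then: \emph{leaf}, $A_t[\emptyset,\emptyset]=0$; \emph{introduce} of a $(d+1)$-simplex $v$, where $F_t=F_s$ and $A_t[\Qt,\Pt]=A_s[\Qt\setminus\{v\},\,\Pt']$ with $\Pt'$ the restriction of $\Pt$ to $\widehat X_s$ corrected by $\partial v$ when $v\in\Qt$, provided $\Pt$ agrees on the genuinely new $d$-faces of $v$ with the value forced there by $\Qt$ (and $\infty$ otherwise); \emph{forget} of $v$, where the $d$-faces $D_v^-$ of $v$ that no longer meet the bag pass into $F_t$, giving $A_t[\Qt,\Pt]=\min_{c\in\{0,1\},\,R\subseteq D_v^-}\!\big(A_s[\Qt\cup c\{v\},\,\Pt\cup R]+\cost{R}\big)$; and \emph{join}, where expressing a feasible $(d+1)$-chain as $W=W_s+W_{s'}+\Qt$ gives $\Pt=(V\cap\widehat X_t)+\partial\Qt|_{\widehat X_t}+\Ps+\Pss$ and $\cost{U\cap F_t}=\cost{U\cap F_s}+\cost{U\cap F_{s'}}$, hence $A_t[\Qt,\Pt]=\min_{\Ps}\big(A_s[\Qt,\Ps]+A_{s'}[\Qt,\,(V\cap\widehat X_t)+\partial\Qt|_{\widehat X_t}+\Ps+\Pt]\big)$.

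I would prove each recurrence by the usual two inequalities — an optimal witness for $A_t[\Qt,\Pt]$ restricts to feasible witnesses for the child or children by (i)–(ii), and conversely feasible witnesses for the children glue along $\widehat X_t$ into a feasible witness for $t$ of exactly the summed cost — and then obtain correctness of the whole table by structural induction up the tree. For the running time: a width-$k$ bag has at most $k+1$ vertices, hence at most $(k+1)(d+2)$ incident $d$-faces, so each table has at most $2^{(k+1)(d+3)}$ entries; the join node dominates the per-node work at $O\!\left(2^{(k+1)(2d+5)}\right)$ (a factor $2^{k+1}$ for $\Qt$ and two factors $2^{(k+1)(d+2)}$ for $\Pt$ and $\Ps$), and a nice decomposition has $O(n)$ nodes, which after routine tightening of the constants yields the stated $\mathcal O\!\left(2^{k(2d+5)}\cdot n\right)$.

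The step I expect to be the main obstacle is not the algorithm but the bookkeeping that certifies it: pinning down $t\mapsto(G_t,\widehat X_t,F_t)$ so that the separator property really does deliver the ``freezing'' fact (i) and the disjointness facts (ii), and then checking in the introduce and join cases that the $d$-faces shared between bag simplices, the genuinely new faces, and the just-finished faces are each tracked correctly and never double-counted in the cost. Once those lemmas are nailed down, the boundary identities, the $O(n)$ node count, and the final exponent are routine.
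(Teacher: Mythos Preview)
Your proposal is correct and mirrors the paper's own argument essentially step for step: the same dynamic programming on a nice tree decomposition of $Con_{d+1}(K)$, the same R-HL$_d$ subproblem with $\widehat X_t = X_t\cup(\overline{X_t})_d$, the same four recurrences (including the join constraint $\Pt=\Ps\triangle\Pss\triangle\partial(\Qt)\triangle(V\cap\widehat X_t)$), the same handling of ``unprocessed'' $d$-simplices via $\cost{V\setminus G_r}$, and the same $\mathcal{O}(2^{k(2d+5)}\cdot n)$ bound; the corollary then follows exactly as you say from \cite{bodlaender1996linear}. The bookkeeping you flag as the obstacle---tracking the new faces $N_t$ at introduce nodes and the old faces $O_t$ at forget nodes, and verifying the clique/separator property delivers (i) and (ii)---is precisely where the paper spends its effort too.
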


\subsubsection{Connectivity Graph Algorithm}

We will find an optimal solution, $\tab{t,\Qt, \Pt}$, to the R-HL$_d$ for every four-tuple $(G_t,X_t \cup (\overline{X_t})_d,\Qt ,\Pt )$ at every bag $X_t$ in the nice tree decomposition of $Con_{d+1}(K)$. Here $G_t$ denotes the set containing all $d$-/$d+1$-simplices that is in some $\overline{X_s}$, where $s$ is either $t$ or a descendant of $t$. Meanwhile $\Qt$ is any subset of $X_t$ and $\Pt$ is any subset of $(\overline{X_t})_d$ .

The computation is done dynamically, which just means that we will use the solutions for every possible R-HL$_d$ problem at the child bag(s) of $X_t$. This means that the bags of the tree decomposition has to be processed so that we have already computed these solutions. It is elementary to check that processing the bags by the post-ordering obtained from performing a depth first search (DFS) starting at the root will guarantee this. 

We are now ready to describe how to compute a particular solution at the different types of bags. It may be helpful to have a look at both \cref{FIG: ALG aid} and \cref{FIG: ALG1 aid} to get some intuition for what we are doing.
\newline

\textbf{Leaf Node}: 
\newline
Set $\tab{t,\emptyset,\emptyset}$ to $0$.
\newline

\textbf{Introduce node}:
\newline
Suppose $t$ is an introduce node with the child node $s$ and introducing the $d+1$-simplex is $\sigma$, so that $X_t = X_{s} \cup \{\sigma\}$. We let $N_t$ be the set of $d$-simplices ``new'' to the bag, i.e. those $d$-simplices that are in $\overline{X_{t}}$ but not in $\overline{X_{s}}$.  If $\sigma\not \in \Qt $ then we compute

$$
\tab{t,\Qt ,\Pt } = 
\begin{cases}
\tab{s,\Qt ,\Pt \cap \overline{X_{s}}} & \text{if } \Pt \cap N_t= V \cap N_t\\
\infty &  \text{if }\Pt \cap N_t\neq V \cap N_t
\end{cases} 
$$
otherwise, if $\sigma \in \Qt $, we have
$$
\tab{t,\Qt ,\Pt } = 
\begin{cases}
\tab{s,\Qt \cap X_{s} , (\Pt \triangle \partial(\sigma))\cap\overline{X_{s}}} &  \text{if } \Pt \cap N_t = (V \triangle \partial(\sigma))\cap N_t\\
\infty &  \text{if } \Pt \cap N_t \neq (V \triangle \partial(\sigma))\cap N_t.
\end{cases} 
$$

\textbf{Forget node} :
\newline
Suppose $t$ is a forget node with child $s$ such that $X_t= X_{s}\setminus\{\sigma\}$. Let $O_t$ denote the set of ``old'' $d$-simplices, i.e. those $d$-simplices in $\overline{X_{s}}$ that are not in $\overline{X_t}$. Then 
$$\tab{t,\Qt ,\Pt } = \min_{\Qs  \in \mathcal{W},\Ps  \in \mathcal{U}}\big\{\tab{s,\Qs ,\Ps } + \cosst(O_t\cap \Ps) \big\},$$ 
where $\mathcal{W}$ is the family of subset of $X_{s}$ whose restriction to $X_t$ is $\Qt $ and $\mathcal{U}$ is the family of subset of $(\overline{X_{s}})_d$ whose restriction to $\overline{X_t}$ is $\Pt $.
\newline

\textbf{Join node}:
\newline
Suppose that $t$ is a join node and that it has the two child nodes, $s$ and $s'$, where $X_t=X_{s}=X_{s'}$. Let $\Qt $ be a subset of $X_t$ and $\Pt $ be a subset of the $d$-simplices of $\overline{X_t}$ as before. Then
$$\tab{t,\Qt ,\Pt } = \min_{\Ps, \Pss \subseteq {(\overline{X_t})}_d}\big\{\tab{s,\Qt,\Ps}+\tab{s',\Qt,\Pss} \Big | \Pt  = \Ps\triangle \Pss\triangle \partial(\Qt)\triangle (V\cap \overline{X_t})\big\}.$$
\newline

\begin{figure}[h!]
\centering
\includegraphics[width=140mm]{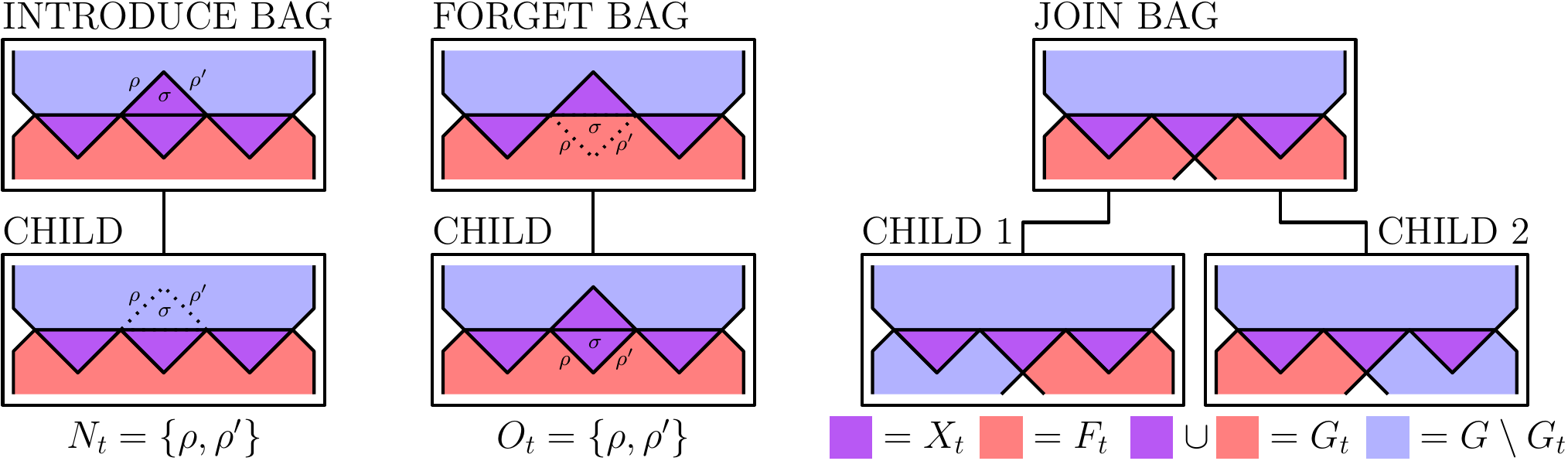}
\caption{\label{FIG: ALG1 aid}Examples of the various sets in a simplicial complex at different types of bags.}
\end{figure} 

\textbf{Unprocessed $d$-Simplices}:
\newline
The $d$-simplices of the input cycle that does not have a $d+1$-coface are not ``seen'' by the algorithm. To get the right cost we need to add the weight of these simplices to the solution we get at the root.

\subsubsection{Correctness}

We carry over the same setting and notation we used in the algorithm. We also define $\mathcal{S}[t,\Qt ,\Pt ]$ as the set of solutions to the R-HL$_d$ with input tuple $(G_t,X_t,\Qt ,\Pt )$. For convenience, we will let elements of $\mathcal{S}[t,\Qt ,\Pt ]$ be pairs consisting of a $d+1$-chain and a $d$-cycle, $(W,U)$. Note that this is somewhat redundant as $W$ determines $U$ uniquely. 
\newline

\textbf{Leaf Node}:
\newline
If $t$ is a leaf node then $(\emptyset, V) \in \mathcal{S}[t,\emptyset ,\emptyset ]$ is a valid solution and since $F_t = \emptyset$ we have that $\cost{U\cap F_t} = 0$ for any cycle $U$.
\newline

\textbf{Introduce Node}:
\newline
First note that $\tab{t,\Qt ,\Pt } = \infty$ is by definition the same as  $S[t,\Qt ,\Pt ] = \emptyset$. We claim that 
\begin{enumerate}
    \itemsep0pt
    \item if $\sigma\not \in \Qt $ and $\Pt \cap N_t \neq V \cap N_t$ then $S[t,\Qt ,\Pt ] = \emptyset$
    \item if $\sigma \in \Qt $ and $\Pt \cap N_t \neq (V \triangle \partial(\sigma))\cap N_t$ then $S[t,\Qt ,\Pt ]=  \emptyset$.
\end{enumerate}
We leave out the details of the proof, which hinges on the observation that $\sigma$ is the only simplex in $G_t$ that is a coface of any of the simplices in $N_t$. This observation follows from the basic properties of a tree decomposition.

There are two things left to prove. Firstly that if $\sigma \not \in \Qt $ and $\Pt \cap N_t=V \cap N_t$ then $\tab{t,\Qt ,\Pt }= \tab{s,\Qt ,\Pt \cap \overline{X_{s}}}$ and secondly that when $\sigma \in \Qt $ and $\Pt \cap N_t = (V \triangle \partial(\sigma))\cap N_t$, then $\tab{t,\Qt ,\Pt } = \tab{s,\Qt \cap X_{s} , (\Pt \triangle \partial(\sigma))\cap\overline{X_{s}}}$. It is easy to see that we can prove the first and second claim respectively by showing that $(W,U) \in \mathcal{S}[t,\Qt  ,\Pt ]$ if and only if $(W,U) \in \mathcal{S}[s,\Qt  ,\Pt \cap \overline{X_{s}}]$  and that $(W,U) \in \mathcal{S}[t,\Qt  ,\Pt ]$ if and only if $(W \setminus \{\sigma\},U\triangle \partial(\sigma)) \in \mathcal{S}[s,\Qt \cap X_{s} , (\Pt \triangle \partial(\sigma))\cap\overline{X_{s}}]$. That these claims are true follows from elementary set theory. 
\newline 

\textbf{Forget Node}:
\newline
The formula follows if we can show that $(W,U) \in \mathcal{S}[t,\Qt  ,\Pt ]$ if and only if there are sets $ A \subseteq \{\sigma\}$ and $B \subseteq O_t$ such that $(W,U) \in \mathcal{S}[s, \Qt \cup A  ,\Pt  \cup B] $. The backwards direction follows from the observation that the latter problems have more restrictions on it than the first and so a solution to any of the latter problems gives a solution to the first. Conversely, if $(W,U) \in \mathcal{S}[t,\Qt  ,\Pt ]$ then we see that $(W,U) \in \mathcal{S}[t,\Qt \cup (W \cap \{\sigma\} ,\Pt  \cup (U \cap O_t])$. We need to add $\cosst(O_t\cap \Ps)$ to the cost because we have to account for the weight of the simplices in the part of the ``old'' solutions intersecting $O_t$. These simplices were not in $F_s$ but they will be in $F_t$.
\newline

\textbf{Join Node}:
\newline
Similarly to before we want to prove that $(W,U) \in \mathcal{S}[t,\Qt  ,\Pt ]$ if and only if there exists a pair of solutions $(W', U') \in \mathcal{S}[s, \Qt  ,\Ps ] $ and $(W'',U'') \in \mathcal{S}[s', \Qt  ,\Pss]$ such that $ \Pt  = \Ps \triangle \Pss \triangle (V \cap \overline{X_t}) \triangle \partial(\Qt )$. This might not be very intuitive so we provide some further details on how to prove it. 

For $(W,U)\in \mathcal{S}[t,\Qt  ,\Pt ]$ we have that $(W\cap \overline G_{s}, (V\cap \overline{G_{s}})\triangle (\partial(W\cap G_{s})) \in  \mathcal{S}[s, \Qt  ,\Ps ]$, where 
$ \Ps = \big(V\cap \overline{X_t}\big) \triangle \big( \partial(W\cap F_{s}) \cap \overline{X_t}\big) \triangle \big(\partial(W\cap X_{t})\big).$ An analogous computation can be made for the other child node. We can then prove that
\begin{align*}
    \Ps \triangle \Pss  &= \left( \partial(W\cap F_{s}) \cap \overline{X_t}\big) \triangle \big( \partial(W\cap F_{s'}) \cap \overline{X_t}\right)\\
                        &= \Pt  \triangle (V \cap \overline{X_t}) \triangle \partial(\Qt )
\end{align*}
with elementary set theory. This implies that the relation $ \Pt  = \Ps \triangle \Pss \triangle (V \cap \overline{X_t}) \triangle \partial(\Qt ) $ holds.

Conversely, if $(W',U')\in \mathcal{S}[s,\Qt  ,\Ps ]$ and $(W'',U'')\in \mathcal{S}[s',\Qt  ,\Pss]$ then we can show that $(W'\cup W'', \partial(W'\cup W'') \triangle (V \cap \overline{G_t}) \in  \mathcal{S}[t, \Qt  ,\Pt ]) $. To see that  $ \Pt  = \Ps \triangle \Pss \triangle (V \cap \overline{X_t}) \triangle \partial(\Qt )$ we just need to prove that 
\begin{align*}
\Pt  &= \left(\partial(W'\cup W'') \triangle (V \cap \overline{G_t})\right) \cap \overline{X_t}\\
    &= \Ps \triangle \Pss \triangle (V \cap \overline{X_t}) \triangle \partial(\Qt ).
\end{align*}

\subsubsection{Runtime Analysis}
The leaf nodes takes constant time to process. In order to process an introduce bag we have to fill at most $2^{k}\cdot 2^{k(d+2)}= 2^{k(d+3)}$ table entries. This is because each of the $d+1$-simplices in the bag can be in one of two states (it can be in $\Qt$ or not). Each such simplex might in turn be a coface of at most $d+2$ $d$-simplices. Each of those can also be in one of two states (either it is in $\Pt$ or it is not). To fill one entry takes constant time assuming it takes constant time to look up a previous solution. Processing the introduce bag therefore takes $\mathcal{O}(2^{k(d+3)})$ time. By the same reasoning we need to fill up to $2^{k}\cdot 2^{k(d+2)}= 2^{k(d+3)}$ table entries in order to process a forget bag. In order to fill an entry we need to compute the minimum of at most $2\cdot(d+2)$ entries from the child bag but this is only a constant number. Thus processing the forget bag also takes $\mathcal{O}(2^{k(d+3)})$ time.

To process the join bag we also need to fill in $2^{k}\cdot 2^{k(d+2)}= 2^{k(d+3)}$ table entries. There are at most $2^{(d+2)k}$ pairs of sets, $\Ps$ and $\Pss $, such that $\Pt  = \Ps\triangle \Pss \triangle \partial(\Qt)\triangle (V\cap \overline{X_t})$. This means that in order to compute the cost of an entry we need to take the minimum of at most $2^{k(d+2)}$ numbers. Each of these number can be computed in constant time and so we need $\mathcal{O}(2^{k(d+3)}\cdot 2^{k(d+2)}) = \mathcal{O}(2^{k(2d+5)})$ time to process the join bag. As there are $\mathcal{O}(n)$ bags to process this results in a total runtime of $\mathcal{O}(2^{k(2d+5)}n)$ in the worst case scenario.

\subsection{The Hasse Diagram}\label{sec:alg2}

The second algorithm is very similar to the first with the difference that we are now parameterizing by the treewidth of a subgraph of the Hasse diagram of the simplicial complex.

\begin{definition}
The \emph{Hasse diagram} of a simplicial complex is the graph with simplices as vertices and edges going between every $i$-simplex and each of its $(i-1)$-faces. By \emph{level $d$ of the Hasse diagram} we mean the full subgraph of the Hasse diagram induced on the vertices corresponding to the $d$- and $(d-1)$-simplices.
\end{definition}

\begin{figure}[h!]
\centering
\includegraphics[width=100mm]{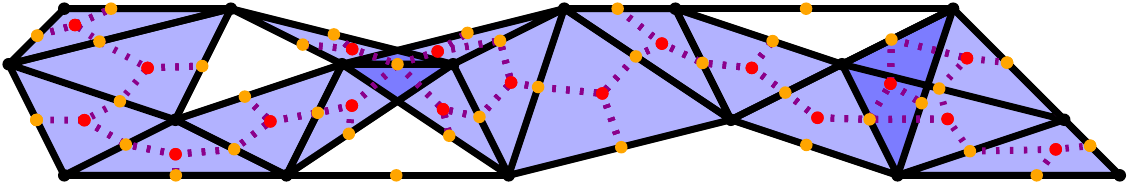}
\caption{A $2$-dimensional simplicial complex with level $2$ of the Hasse diagram embedded.}
\end{figure}

We denote level $d$ of the Hasse diagram of a simplicial complex by $Hasse_{d}(K)$.

\begin{theorem}
The HL$_d$ problem can be solved in $\mathcal{O}(2^{2k}\cdot n)$ time if we are given a tree decomposition of $Hasse_{d}(K)$ with width $k$.
\end{theorem}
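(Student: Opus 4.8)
The plan is to mirror the structure of the connectivity-graph algorithm of the previous subsection, but now running the dynamic program over a nice tree decomposition of $Hasse_d(K)$ instead of $Con_{d+1}(K)$. The key structural difference, and the source of the improved runtime, is what sits in a bag: a bag $X_t$ of $Hasse_d(K)$ now contains both $d$-simplices and $(d-1)$-simplices, but crucially it does \emph{not} contain $(d+1)$-simplices. So the analogue of the R-HL$_d$ subproblem must be re-parameterized. First I would recast the partial-solution bookkeeping: at each node $t$ let $G_t$ be the set of $d$- and $(d-1)$-simplices appearing in some bag at $t$ or below, and recover the relevant $(d+1)$-simplices as those all of whose $d$-faces have already been ``forgotten'' (i.e. lie in $G_t \setminus X_t$) — this is legitimate exactly because of the third (connectedness) axiom of a tree decomposition, which guarantees that once the last $d$-face of a $(d+1)$-simplex is introduced, that $(d+1)$-simplex's membership in the chain $W$ can be decided locally. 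The table $c[t,\cdot]$ is then indexed only by a subset $P_t$ of the $d$-simplices in $X_t$ (recording $U \cap X_t$); there is no longer a $Q_t$ coordinate indexing $(d+1)$-simplices in the bag, because no $(d+1)$-simplex lives in a bag. This is what collapses the state space from $2^{k(2d+5)}$ down to $2^{2k}$: a bag has at most $k+1$ vertices, and each is either present or absent in $P_t$ (the extra factor of $2^k$ absorbing the parity/consistency data carried along the $(d-1)$-simplices, as I explain below).

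The main work is to set up the four bag-type recurrences and prove their correctness, exactly paralleling the Leaf/Introduce/Forget/Join case analysis already carried out for the connectivity-graph algorithm. For an \textbf{introduce node} adding a $d$- or $(d-1)$-simplex, the table transforms by restriction, with an $\infty$ guard enforcing agreement with $V$ on the newly-constrained faces, just as before. For a \textbf{forget node} forgetting a simplex $\sigma$, one takes a minimum over the two possible states of $\sigma$ in the partial $d$-cycle and adds the cost of any newly-permanent $d$-simplex. The subtle point — and I expect this to be the main obstacle — is the handling of $(d+1)$-simplices and the boundary constraint $\partial_{d+1}(W) = U \triangle V$. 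Since $(d+1)$-simplices never appear in a bag, the decision of whether to include a given $(d+1)$-simplex $\tau$ in $W$ must be forced at the moment its last $d$-face is forgotten: at that point, consistency of $U$ with $\partial_{d+1}(\tau)$ on the already-processed region dictates whether $\tau \in W$, and if both choices are inconsistent the entry is $\infty$. Making this precise requires a careful invariant: the table value at $t$ should equal the minimum cost over pairs $(W,U)$ with $W$ supported on the $(d+1)$-simplices ``resolved'' by $G_t$, $U = V \triangle \partial(W)$ restricted appropriately, and $U \cap X_t = P_t$. One must check this invariant is preserved at each node type; the \textbf{join node} is where it is most delicate, because two partial boundaries must be glued and the relation $P_t = P_s \triangle P_{s'} \triangle (V \cap \overline{X_t}) \triangle (\text{boundary terms})$ must be verified by the same elementary symmetric-difference computation used in the first algorithm, now with the $(d-1)$-simplices in the bag serving as the interface along which the parity is checked.

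After correctness, the \textbf{runtime analysis} is routine: each bag has $\le k+1$ simplices, so $\le 2^{k+1}$ choices of $P_t$; carrying the auxiliary consistency data over the $(d-1)$-simplices in the bag (needed to evaluate the boundary constraint locally) costs another factor $2^{k+1}$; processing one entry — a restriction, a constant-size minimum, or at a join node a minimum over $\le 2^{k+1}$ complementary splits — is then handled within the stated bound, giving $\mathcal{O}(2^{2k})$ per bag and $\mathcal{O}(2^{2k} \cdot n)$ overall since a nice tree decomposition has $\mathcal{O}(n)$ bags. Finally, as with the first algorithm, $d$-simplices of $V$ having no $(d+1)$-coface are never seen and their weight is added to the root value. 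I would close by remarking, as a corollary, that combined with \cite{bodlaender1996linear} this gives an FPT algorithm parameterized by the treewidth of $Hasse_d(K)$, and noting that the ETH-tightness claimed in \cref{sec:eth} shows the $2^{\mathcal{O}(k)}$ dependence cannot be improved.
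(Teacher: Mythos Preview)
Your proposal rests on a misreading of which level of the Hasse diagram is in play. The theorem as stated says $Hasse_d(K)$, but this is a slip in the paper: the algorithm described immediately below it is explicitly run on a nice tree decomposition of $Hasse_{d+1}(K)$, whose vertices are the $d$- and $(d+1)$-simplices of $K$ (not the $d$- and $(d-1)$-simplices). Consequently a bag $X_t$ \emph{does} contain $(d+1)$-simplices, and the table $c[t,Q_t,P_t]$ \emph{does} retain the coordinate $Q_t \subseteq X_t \cap K_{d+1}$ recording $W \cap X_t$, exactly as in the connectivity-graph algorithm. The runtime improvement over the first algorithm comes not from discarding $Q_t$ but simply from the fact that a bag of $Hasse_{d+1}(K)$ holds at most $k+1$ simplices \emph{in total} (so at most $2^{k+1}$ pairs $(Q_t,P_t)$), whereas a bag of $Con_{d+1}(K)$ holds up to $k+1$ many $(d+1)$-simplices whose closures may drag in up to $(d+2)(k+1)$ many $d$-simplices.

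Taken on its own terms, your plan to work over the graph on $d$- and $(d-1)$-simplices and dispense with $Q_t$ has a genuine gap. A tree decomposition of that graph imposes no constraint on the $(d+1)$-simplices of $K$: a $(d+1)$-simplex $\tau$ is not a vertex of the graph, its $d$-faces need not meet in any common bag, and they may well be introduced and forgotten in disjoint branches below a join node. There is then no well-defined ``moment the last $d$-face of $\tau$ is forgotten'' at which the dynamic program sees all faces of $\tau$ simultaneously, so your appeal to the third tree-decomposition axiom to make the choice $\tau \in W$ ``local'' is unfounded. Even setting that aside, membership $\tau \in W$ is a free variable in the optimisation, not something ``dictated'' by $U$ on the processed region: many $(d+1)$-simplices can share a given $d$-face, and the relation $U = V \triangle \partial W$ couples all of their choices. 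The $(d-1)$-simplices you propose as the interface can at best enforce the cycle condition $\partial_d U = 0$; they carry no information about whether $U \triangle V$ is a boundary, which is the entire content of the homology constraint. Without explicitly tracking $W$ on a separator that meets every $(d+1)$-simplex straddling processed and unprocessed territory, the dynamic program cannot glue partial solutions correctly at join nodes.
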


\begin{cor}
The HL$_d$ problem parameterized by the treewidth of $Hasse_{d}(K)$ is in \textbf{FPT}.
\end{cor}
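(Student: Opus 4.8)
The plan is to run essentially the same dynamic program as the one in \cref{sec:alg1}, only over a nice tree decomposition of $Hasse_d(K)$ --- the level of the Hasse diagram whose vertices are the $d$- and $(d+1)$-simplices of $K$ and whose edges record the incidences between them, so that every $d$-simplex of a candidate cycle $U$ and every $(d+1)$-simplex of a candidate chain $W$ is a bag element in its own right. The point of switching host graphs is purely combinatorial. In the connectivity graph a bag holds only $(d+1)$-simplices, so the ``effective bag'' $X_t\cup(\overline{X_t})_d$ on which the R-HL$_d$ states are indexed is inflated by a factor of about $d+2$, because each $(d+1)$-simplex drags in all of its $d$-faces. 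In the Hasse diagram those faces are already vertices, a width-$k$ bag therefore already contains every incidence needed to evaluate the boundary relations, no closure is taken, and the number of states stored at a bag drops from $2^{k(d+3)}$ to $2^{|X_t|}\le 2^{k+1}$ --- one bit per bag simplex, recording whether it lies in $W$ (if it is a $(d+1)$-simplex) or in $U$ (if it is a $d$-simplex).

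Concretely, I would fix a nice tree decomposition $(T,X_{-})$ of this graph, let $G_t$ be the set of simplices appearing in $X_s$ for $s$ equal to $t$ or a descendant of $t$, and --- exactly as in \cref{Def: Restricted HL} --- compute for every bag $t$ and every pair $(\Qt,\Pt)$ with $\Qt\subseteq X_t\cap K_{d+1}$ and $\Pt\subseteq X_t\cap K_d$ an optimal value $\tab{t,\Qt,\Pt}$ of the R-HL$_d$ instance $(G_t,X_t,\Qt,\Pt)$, processing the bags bottom-up. The recurrences split into the same four cases as in \cref{sec:alg1}. A leaf bag is empty and gets $\tab{t,\emptyset,\emptyset}=0$. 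An introduce node either adds a $d$-simplex to the bag, in which case its membership in $\Pt$ is checked against $V$ and against the boundaries of the $(d+1)$-simplices of $\Qt$ incident to it, or it adds a $(d+1)$-simplex, in which case one records, via $\partial$, its effect on the $d$-simplices already in the bag. A forget node takes a minimum over the two states of the forgotten simplex and, when a $d$-simplex leaves the bag and enters $F_t$, adds $\cosst$ of it. A join node glues the two children through an identity of the form $\Pt=\Ps\triangle\Pss\triangle\partial(\Qt)\triangle(V\cap X_t)$, minimising over the $2^{|X_t\cap K_d|}$ choices of $\Ps$ (each of which determines $\Pss$). Correctness of each recurrence I would establish exactly as in \cref{sec:alg1}, by checking that $(W,U)\in\mathcal{S}[t,\Qt,\Pt]$ if and only if $(W,U)$, or its evident modification, lies in the matching set(s) at the child bag(s); the only structural input is that in any tree decomposition of this graph each incidence between a $(d+1)$-simplex and one of its $d$-faces is witnessed inside some bag. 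Finally, as in the ``Unprocessed $d$-Simplices'' step of \cref{sec:alg1}, I would add the weights of the $d$-simplices of $V$ that have no $(d+1)$-coface and so never appear in the decomposition.

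The runtime bound is then immediate. There are $\mathcal{O}(n)$ bags; each stores at most $2^{|X_t|}\le 2^{k+1}$ states; each state is filled in constant time at a leaf, introduce or forget node, and in $\mathcal{O}(2^{|X_t\cap K_d|})=\mathcal{O}(2^{k})$ time at a join node, where that many compatible pairs $(\Ps,\Pss)$ must be compared. Hence a join node costs $\mathcal{O}(2^{2k})$ and every other node $\mathcal{O}(2^{k})$, for a total of $\mathcal{O}(2^{2k}n)$.

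The step I expect to be delicate is the correctness of the introduce and forget recurrences. In the connectivity-graph algorithm a $(d+1)$-simplex and \emph{all} its $d$-faces lie in a bag together (via the closure) whenever the $(d+1)$-simplex is present, so the contribution of a simplex $\sigma\in W$ to $U$ can be settled in one step. Over the Hasse diagram $\sigma$ meets its $d$-faces one at a time, and its $d+2$ incidences may be spread over as many different introduce, join and forget nodes; the argument must therefore ensure the $\partial(\sigma)$-contribution to the tracked set $\Pt$ is registered exactly once along the tree, and that the invariant ``$\Pt=U\cap X_t$ and $\Qt=W\cap X_t$ for some R-HL$_d$ solution $(W,U)$'' survives all four node types. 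I expect this bookkeeping, rather than the (now routine) runtime count, to be the core of the proof, and I would discharge it with the same elementary symmetric-difference identities used in \cref{sec:alg1}, applied incidence by incidence along the decomposition.
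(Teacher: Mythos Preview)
Your proposal is correct and follows essentially the same approach as the paper: the dynamic program over a nice tree decomposition of the relevant level of the Hasse diagram, with states $(\Qt,\Pt)\subseteq X_t$ split by dimension, the same four bag types (with the introduce/forget cases splitting according to whether the new/old simplex has dimension $d$ or $d+1$), the same join identity $\Pt=\Ps\triangle\Pss\triangle\partial(\Qt)\triangle(V\cap X_t)$, and the resulting $\mathcal{O}(2^{2k}n)$ bound. The one small item you omit is the passage from ``given a tree decomposition of width $k$'' to ``parameterized by treewidth'': the paper cites Bodlaender's FPT tree-decomposition algorithm for this step, and you should mention it explicitly, since that is precisely what separates the corollary from the preceding theorem.
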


\subsubsection{Hasse Diagram Algorithm}
Assume we are given a nice tree decomposition of $Hasse_{d+1}$. At every bag, $X_t$, we will solve the R-HL$_d$ for every input tuple $(G_t,X_t,\Qt ,\Pt )$.  $G_t$ will be the union of all sets $X_s$ such that $s$ is a descendant of $t$, $X_t$ is the bag at node $t$ of the tree decomposition, $\Qt$ is a subset of the $d+1$-simplices of $X_t$ and $\Pt$ is a subset of the $d$-simplices of $X_t$.

We will denote the size of an optimal solution to this problem as $\tab{t,\Qt, \Pt}$ and its value is computed in a similarly to how we did it in \cref{sec:alg1}. The main difference is that we forget and introduce $d$- and $d+1$-simplices one by one, see \cref{FIG: ALG2 aid}. 
\newline

\textbf{Leaf node}:
\newline
Set $\tab{t,\emptyset,\emptyset}$ to $0$.
\newline

\textbf{Introduce node}:\newline
Let $t$ be an introduce node with child $s$. There are then two cases. Either the introduced vertex corresponds to a $d+1$-simplex $\sigma$ so that the $X_t = X_{s} \cup \{\sigma\}$. We then have
\begin{equation*}
    \tab{t,\QQt,\PPt } =
     \begin{cases}
               	\tab{s,\QQt,\PPt }          	& \sigma \not\in \QQt\\
               	\tab{s,\QQt \setminus\{\sigma\} ,(\PPt \triangle \partial(\sigma))\cap{X_{t}}}  & \sigma \in \QQt . \\
           \end{cases}
\end{equation*}
If the introduced vertex is not a $d+1$-simplex, then it is $d$-simplex $\rho$ so that $X_t = X_{s} \cup \{\rho\}$. If $\rho \in \PPt \triangle \partial(\QQt ) \triangle V$ then $\tab{t,\QQt ,\PPt } =\infty$. Otherwise 
\begin{equation*}
    \tab{t,\QQt ,\PPt } =
     \begin{cases}
               	\tab{s,\QQt  ,\PPt \setminus\{\rho\}} & \rho \in \PPt \\
               	\tab{s,\QQt ,\PPt }          	& \rho \notin \PPt .   \\
           \end{cases}
\end{equation*}
\newline

\textbf{Forget Node}:
\newline
Let $t$ be a forget node with child $s$. There are two cases. Either we forget a vertex corresponding to a $d$-simplex $\rho$ and then $ \tab{t,\QQt ,\PPt } = \min \{ \tab{s,\QQt ,\PPt }, \tab{s,\QQt  ,\PPt \cup \{\rho\}} + \cosst(\rho)\}.$ If not, we forget a $d+1$-simplex $\sigma$ and $\tab{t,\QQt ,\PPt } = \min \{ \tab{s,\QQt ,\PPt }, \tab{s,\QQt  \cup \{\sigma\} ,\PPt }\}.$
\newline

\textbf{Join node}: 
\newline
Finally, we consider the case when $t$ is join node with two child nodes $s$ and $s'$ and we have $X_t=X_{s}=X_{s'}$. Let $\QQt $ be a subset of the $d+1$-simplices of $X_t$ and $\PPt $ be a subset of the $d$-simplices of ${X_t}$ as before. Let $({X_t})_d$ denote the subset of $d$-simplices of ${X_t}$ and compute
$$\tab{t,\QQt ,\PPt } =\min_{\Ps, \Pss \subseteq {({X_t})}_d}\big\{\tab{s,\QQt ,\Ps }+\tab{s',\QQt ,\Pss } \Big | \PPt  = (\Ps \triangle \Pss \triangle \partial(\QQt ) \triangle V)\cap {X_t}\big\}.$$

\begin{figure}[h!]
\centering
\includegraphics[width=130mm]{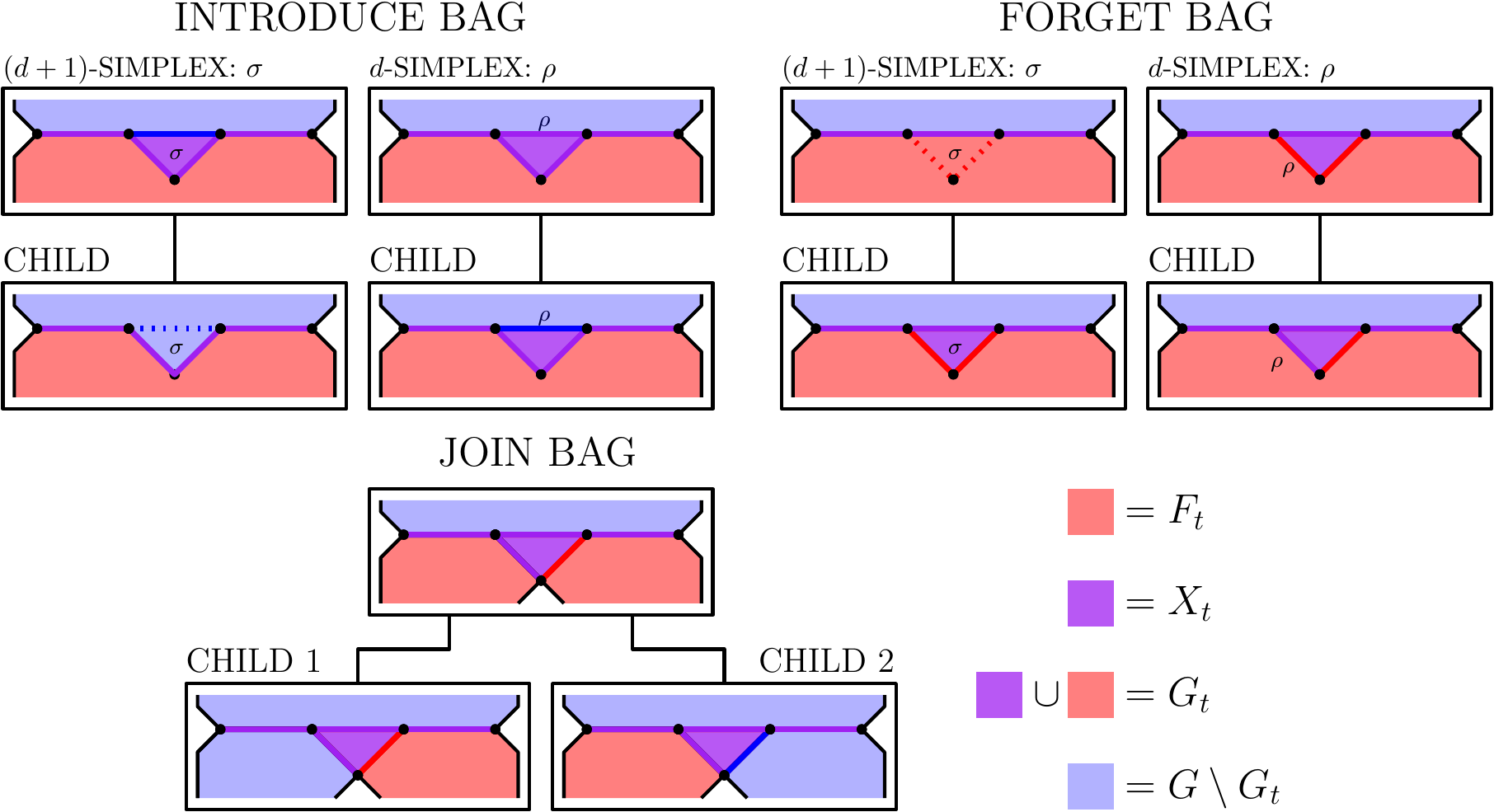}
\caption{\label{FIG: ALG2 aid} The analogue to \cref{FIG: ALG1 aid} for the Hasse diagram based algorithm}
\end{figure}

\subsubsection{Correctness}
Let $\mathcal{S}[t,\Qt ,\Pt ]$ be the set of solutions to the R-HL$_d$ on input tuple $(G_t,X_t,\Qt ,\Pt )$ and let its elements be pairs consisting of a $d+1$-chain and a $d$-cycle, $(W,U)$. Recall that pairs $(W,U)$ have the following properties: $U$ is homologous to $V$ through $W$, i.e. $U= V \triangle \partial(W)$, $W$ intersects $X_t$ at $\Qt $ and $U$ intersects $X_t$ at $\Pt $.

The main difference between the two algorithms is how much the problem changes at each introduce/forget bag. We had process all the $d$-simplices at the boundary of an introduced $d+1$-simplex simultaneously in the connectivity based algorithm. This time we can get to take care of the $d$-simplices individually in its own bag. The proof of correctness is virtually the same and when they differ it is simpler this time around since fewer things are happening at the same time. We outline which modifications are needed in order to obtain a proof of correctness at the introduce node. 

There are now two cases and we deal with them separately according to the dimension of the introduced simplex. First assume we introduce a simplex $\sigma$ of dimension $d+1$. Like in the previous algorithm all we have to prove is that $\sigma \not \in \Qt$ then $(W,U)\in \mathcal{S}[t,\Qt ,\Pt ]$ if and only if  $(W,U)\in \mathcal{S}[s,\Qt ,\Pt ]$ and that if $\sigma \in \Qt$ then $(W,U)\in \mathcal{S}[t,\Qt ,\Pt ]$ if and only if  $(W\setminus \{\sigma\},U\triangle \partial (\sigma))\in \mathcal{S}[s,\QQt \setminus\{\sigma\} ,(\PPt \triangle \partial(\sigma))\cap{X_{t}}]$. 

Next let us assume that we introduce the simplex $\rho$ of dimension $d$. First we note that if $\rho \in \PPt \triangle \partial(\QQt ) \triangle V$ then there can be no solution to $c[t,\Qt ,\Pt ]$. This is because no $d+1$-simplex from $F_t$ is adjacent to $\rho $ by the properties of a tree decomposition. Thus the only simplices in a solution $W$ that are cofaces of $\rho$ are those in $X_t$ but these simplices are precisely those in $\Qt$. Making the observations that $(W,U)\in \mathcal{S}[t,\Qt ,\Pt ]$ if and only if $(W,U)\in \mathcal{S}[s,\Qt ,\Pt\setminus \{\rho\} ]$ when $\rho \in P_t$ and that  $(W,U)\in \mathcal{S}[t,\Qt ,\Pt ]$ if and only if  $(W,U)\in \mathcal{S}[s,\Qt ,\Pt ]$ when $\rho \not \in P_t$ completes the proof.

\subsubsection{Runtime Analysis}
The analysis is similar to the previous one but we get different numbers.  The leaf nodes still take constant time to process. At introduce and forget bags we need to compute and store $2^{k}$ solutions each of which can be found in constant time meaning they can be processed in $\mathcal{O}(2^{k})$ time. The join bag also needs to store and compute up to $2^{k}$ problems but since there are also up to $2^{k}$ pairs of sets $\Ps$ and $\Pss $ so that $\Pt  = \Ps\triangle \Pss \triangle \partial(\Qt)\triangle (V\cap {X_t})$ each computation needs $\mathcal{O}(2^{k})$ time. This means that the join bag can be computed in $\mathcal{O}(2^{k}\cdot 2^{k}) = \mathcal{O}(2^{2k})$ time and that the algorithm needs $\mathcal{O}(2^{2k}n) = \mathcal{O}(4^{k}n)$ time.

\section{Optimality under the ETH}
\label{sec:eth}

We prove that both our treewidth based algorithms are optimal up to the base of the exponent if we assume that the exponential time hypothesis (ETH) is true.

\subsection{The Exponential Time Hypothesis}
Let $n$ be the number of variables in a $3$-SAT formula.

\begin{definition}[ETH] $3$-SAT cannot be solved in $2^{o(n)}$-time.
\end{definition}
This is a common formulation of the ETH which is slightly weaker than the original hypothesis. We say that a parameterized algorithm is \emph{ETH-tight} (with respect to the parameter $k$) if it runs in $2^{\mathcal{O}(f(k))}n^c$-time and solving it in $2^{o(f(k))}n^c$-time contradicts the ETH.

A \emph{cut} in a graph is a partitioning of the vertices into two subsets and the \emph{size of a cut} is the number of edges going between vertices on opposite sides of the partition.

\begin{definition} \textsc{Max Cut}\\
Input: A graph $G$, a tree decomposition $X$ of $G$ and an integer $k$.\\
Output: YES if there is a cut in $G$ of size $k$ or greater, NO otherwise.
\end{definition}

Let $k$ be the treewidth of the graph given as input to \textsc{Max Cut}.

\begin{theorem}[\cite{lokshtanov2011known}]
\textsc{Max Cut} cannot be solved in $2^{o(k)}n^c$-time if the ETH is true.
\end{theorem}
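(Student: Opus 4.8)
The plan is to derive this from the ETH by a \emph{linear‑size} reduction, following the standard recipe for ``tight'' treewidth lower bounds. Note first that \textsc{Max Cut} does admit a $2^{\mathcal{O}(k)}n^{\mathcal{O}(1)}$ algorithm: process the given tree decomposition bottom‑up, keeping for each bag one table entry per bipartition of its at most $k+1$ vertices into ``left'' and ``right''. So the content of the theorem is that the base of this exponent cannot be lowered, and everything reduces to producing a 3-SAT instance, an equivalent \textsc{Max Cut} instance of size linear in the number of variables, and a contradiction. The only place the ETH enters is through the \emph{Sparsification Lemma} of Impagliazzo–Paturi–Zane: I would first upgrade the hypothesis ``no $2^{o(n)}$ algorithm for $3$-SAT'' to ``no $2^{o(n+m)}$ algorithm for $3$-SAT with $n$ variables and $m$ clauses,'' and hence no $2^{o(n)}$ algorithm even for instances in which $m=\mathcal{O}(n)$. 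This step is genuinely needed, since the ETH as stated in this paper measures only the variable count while a $3$-SAT formula may have $\Theta(n^3)$ clauses, and without the sparsification a size-linear reduction would only give a graph of size $\Theta(n^3)$.

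The main work is then a polynomial-time reduction that turns a $3$-SAT formula $\varphi$ with $n$ variables and $m=\mathcal{O}(n)$ clauses into a graph $G$ together with an integer threshold $C$ such that $\varphi$ is satisfiable if and only if $G$ has a cut of size at least $C$. This is a classical gadget construction: an edge $uv$ of $G$ behaves as a soft constraint ``$u$ and $v$ on opposite sides,'' so one keeps for each variable a pair of vertices $x,\bar x$ tied together so that they always land on opposite sides of an optimal cut, attaches to each clause a constant-size \textsc{Max Cut} gadget whose best cut value equals some fixed constant $A$ when the clause is satisfied and exactly $A-1$ otherwise, glues the gadget's inputs to the $x$ or $\bar x$ vertices according to the polarities of the literals, and sets $C$ to be the ``all clauses satisfied'' total. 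Since there is one $\mathcal{O}(1)$-size gadget per clause and $m=\mathcal{O}(n)$, the graph $G$ has $\mathcal{O}(n)$ vertices and edges (small integer edge weights, if one prefers to describe the gadgets that way, are simulated by $\mathcal{O}(1)$ parallel length-two paths through fresh vertices, keeping $G$ simple).

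To finish, observe that $\mathrm{tw}(G)\le |V(G)|-1=\mathcal{O}(n)$ and that a tree decomposition of that width — even the one-bag decomposition — is produced in polynomial time and handed along with $G$, as the definition of \textsc{Max Cut} in this section requires. Now suppose \textsc{Max Cut} could be solved in $2^{o(k)}n^c$ time with $k$ the treewidth; running it on $(G,C)$ decides the satisfiability of $\varphi$ in time $2^{o(\mathcal{O}(n))}\cdot\mathcal{O}(n)^c = 2^{o(n)}$, contradicting the sparsified ETH of the first paragraph. Hence no such algorithm exists unless the ETH fails, which is the statement; the same construction, read off the parameter, also records the matching $2^{\mathcal{O}(k)}n^{\mathcal{O}(1)}$ upper bound mentioned above, so the $2^{\Theta(k)}$ bound is tight.

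The hard part is the gadget in the middle paragraph: one must design the clause gadget so that it has the \emph{sharp} ``all-or-nothing'' gap ($A$ versus exactly $A-1$, not merely $A$ versus $\le A-\varepsilon$ on a weighted scale), keep it of constant size so that the overall blow-up is truly linear in $n$, and then track the additive constant $C$ across all the clause gadgets and all the variable identifications precisely enough that satisfiable and unsatisfiable formulas fall on opposite sides of the \emph{single} threshold $C$. This is exactly the \textsc{Max Cut} gadget analysis invoked in \cite{lokshtanov2011known}; the rest (the sparsification input, the treewidth bound, the final timing contradiction) is routine bookkeeping.
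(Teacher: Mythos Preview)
The paper does not supply its own proof of this statement; it is quoted as a black-box result from \cite{lokshtanov2011known} and then used as the starting point for the reduction in \cref{Theorem: Optimality}. There is therefore nothing in the paper to compare your argument against.

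That said, your sketch is the standard and correct route to the ETH-based $2^{o(k)}$ lower bound: Sparsification Lemma to force $m=\mathcal{O}(n)$, then a constant-gadget-per-clause reduction to \textsc{Max Cut} yielding a graph on $\mathcal{O}(n)$ vertices, then the trivial bound $\mathrm{tw}(G)\le |V(G)|-1$, then the timing contradiction. One small remark worth making explicit when you write it up: the gadget reduction is most cleanly done via the intermediate problem \textsc{NAE-3-SAT} (or \textsc{Max 2-SAT}), where the ``opposite sides of a cut'' semantics matches the not-all-equal constraint directly and the sharp $A$ versus $A-1$ gap falls out without fiddly bookkeeping; going straight from $3$-SAT to \textsc{Max Cut} is possible but messier. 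A second remark: the reference \cite{lokshtanov2011known} is primarily about the stronger SETH-based $(2-\varepsilon)^{k}$ lower bound, which requires a considerably more delicate construction than the one you outline; the $2^{o(k)}$ statement cited here is the easier folklore consequence, and your argument is the right one for it.
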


\subsection{ETH Tightness}

\begin{theorem}\label{Theorem: Optimality}
The HL$_1$ problem cannot be solved in $2^{o(k)}n^c$-time if the ETH is true even when the input is restricted to $2$-manifolds which can be embedded in $\mathbb{R}^3$.
\end{theorem}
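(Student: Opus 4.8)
The plan is to reduce from \textsc{Max Cut} parameterized by the treewidth $k$ of the input graph, using the ETH lower bound for that problem \cite{lokshtanov2011known}. Given a \textsc{Max Cut} instance $(G,X,k)$, I would construct in polynomial time a triangulated closed orientable surface $K$ (hence one embeddable in $\mathbb{R}^3$), a weight function $\omega$, a $1$-cycle $V$, a threshold $k'$, and a tree decomposition of $Con_2(K)$ of width $O(k)$, so that $(K,\omega,V,k')$ is a ``yes'' instance of HL$_1$ if and only if $G$ has a cut of size at least $k$. Since both treewidth algorithms of \cref{sec:algorithms} run in $2^{O(k')}n$-time, a $2^{o(k')}n^c$-time algorithm for HL$_1$ restricted to such surfaces would yield a $2^{o(k)}\mathrm{poly}(n)$-time algorithm for \textsc{Max Cut}, contradicting the ETH. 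Because $Con_2(K)$ and $Hasse_1(K)$ of one fixed triangulation have treewidths within a constant factor of one another, the same construction establishes tightness for both algorithms simultaneously.

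The surface is the ``thickened graph'' of $G$: each vertex $v$ becomes a small triangulated region, each edge $uv$ becomes a thin triangulated tube joining the regions of $u$ and $v$, the ribbon structure makes the resulting surface-with-boundary orientable, and its boundary circles are capped off with triangulated discs to get a closed surface. The reduction exploits the classical dictionary between $2$-chains of a closed surface and vertex subsets of its dual graph $Con_2(K)$: a $2$-chain $W$ (a set of triangles) corresponds to a set $S$ of dual vertices, and $\partial W$ corresponds to the edge cut $\delta(S)$, since an edge lies in $\partial W$ precisely when exactly one of its two cofacial triangles lies in $W$. I would weight $K$ so that each tube $uv$ carries a distinguished meridian $1$-cycle $m_{uv}$ of large weight $M$, every other edge having weight $1$, and I would take $V:=\sum_{uv\in E(G)}m_{uv}$, which is a $1$-cycle because each $m_{uv}$ is. For a ``consistent'' chain $W$ — one that is empty or full on each vertex region, hence forced on the tubes and caps — one gets $\partial W=\sum_{uv\in\delta(S)}m_{uv}$ with $S$ the set of full vertices, so $\omega(V\triangle\partial W)=(|E(G)|-|\delta(S)|)\cdot\ell M$ where $\ell=|m_{uv}|$; thus over consistent chains, minimizing $\omega(V\triangle\partial W)$ is exactly maximizing the cut, and I set $k':=(|E(G)|-k)\ell M$.

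Two points then need care. First, an exchange argument shows the optimum is realized by a consistent chain: re-setting an arbitrary $W$ on each vertex region to empty or full (whichever is locally cheaper) does not increase $\omega(V\triangle\partial W)$, provided $M$ exceeds the total weight of all non-meridian edges, so that no inconsistent ``saving'' on the meridians can ever pay for the extra non-meridian boundary it creates; hence $\min_W\omega(V\triangle\partial W)=(|E(G)|-\mathrm{maxcut}(G))\ell M$. Second, treewidth control: I would first apply the standard reduction of \textsc{Max Cut} to the bounded-degree case (replace each vertex by a cycle of length its degree and distribute the incident edges), which changes the optimal cut in a controlled way and increases the treewidth by at most a constant factor, together with a width-$O(k)$ tree decomposition. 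After this, each vertex region and each tube carry $O(1)$ triangles and each cap is triangulated as a fan, so $Con_2(K)$ is a bounded blow-up of $G$ routed along its tree decomposition; replacing each vertex of a bag by the $O(1)$ triangles attached to it, and routing the triangles of each tube and cap through the tree like the edge they come from, yields a tree decomposition of $Con_2(K)$ (and likewise of $Hasse_1(K)$) of width $O(k)$.

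The main obstacle I expect is the combined bookkeeping in the last step: realizing the thickened graph as a genuinely simplicial closed $2$-manifold (no identified edges, every vertex link a circle) that still embeds in $\mathbb{R}^3$, keeping the ``consistent chain $\leftrightarrow$ vertex cut'' correspondence exact once the capping discs are present, and — most delicately — certifying that the whole construction inflates the treewidth of $Con_2(K)$ and $Hasse_1(K)$ only by a constant factor. The exchange argument and the parity/homology computations are routine by comparison.
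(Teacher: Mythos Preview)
Your plan is the same reduction as the paper's: \textsc{Max Cut} $\to$ HL$_1$ via a closed orientable surface built from $G$, with the input cycle equal to the union of the ``edge circles'' and the dictionary \emph{consistent $2$-chain $\leftrightarrow$ vertex subset, boundary $\leftrightarrow$ cut}. The differences are in execution, and in each case the paper's choice sidesteps a complication you have to work around. First, the paper glues a sphere $Y^u$ to $Y^v$ by identifying boundary circles directly, rather than inserting tubes and then capping leftover boundary components; this makes the surface closed from the outset and eliminates the caps, whose triangles you would otherwise have to route through the tree decomposition even though a single cap can touch many edges of $G$. Second, the paper assigns weight $\infty$ to interior edges and weight $1/|\text{circle}|$ to the shared circles; any finite-cost solution is then automatically a union of full $Y^u$'s, so no exchange argument or threshold choice for $M$ is needed. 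Third, and most relevant to the obstacle you flag, the paper does \emph{not} pass through a bounded-degree reduction: its \cref{LEMMA: triangulation lemma} builds the triangulation and a width-$O(k)$ tree decomposition of $Con_2(Y(G))$ simultaneously by structural induction on a nice tree decomposition of $G$, maintaining one small ``boundary cylinder'' per vertex currently in the bag. This handles arbitrary degree directly, whereas your route requires the (plausible but not entirely free) claim that the vertex-to-cycle expansion of \textsc{Max Cut} preserves treewidth up to a constant factor. None of your steps is wrong, but each of the paper's simplifications removes exactly the bookkeeping you identified as delicate.
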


\begin{proof}
We prove this by reducing \textsc{Max Cut} to the HL$_1$ problem where the graph $G$ is mapped to a $2$-manifold  $Y(G)$. This space is constructed by first associating a $2$-sphere $Y^u$ to every vertex $u\in G$. Then for every edge $uv \in G$ we glue $Y^u$ and $Y^v$ together by removing an open disc from both $Y^u$ and $Y^v$ and identifying the boundaries. Any two discs removed from $Y^u$ should have non-intersecting boundaries, see \cref{FIG: ETH reduction basic idea}.  

\begin{figure}[h!]
\centering
\includegraphics[width=160mm]{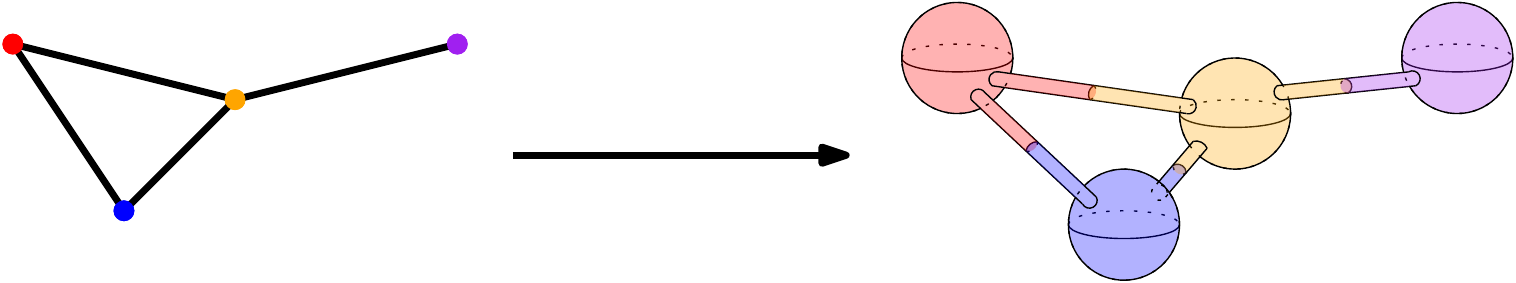}
\caption{\label{FIG: ETH reduction basic idea} $Y(G)$ can be thought of as a the surface of a fattened version of the graph $G$.}
\end{figure}

To complete the reduction we give $Y(G)$ the structure of a simplicial complex with the following properties.
\begin{itemize}
    \item Every $2$-simplex of this triangulation is contained in the interior of exactly one $Y^u$.
    \item Every $1$-simplex is either in the interior of some $Y^u$, in which case we give it the weight $\infty$, or in the intersection $Y^u\cap Y^v$, in which case it gets the weight $1/||(Y^u\cap Y^v)_1||$.
\end{itemize} 
In \cref{LEMMA: triangulation lemma} below, we show that this can be done such that the treewidth of the connectivity graph of $Y(G)$ grows at most linearly in the treewidth of $G$. Finally we set $V = \cup_{uv\in G}(Y^u\cap Y^v)_1$. Since we know that $Con_2(Y(G))$ is connected, any solution $U = V \triangle \partial(W)$ which is optimal must be of the form $W = \cup_{x\in X}Y^x_2$ for some subset $X$ of vertices in $G$. It is then easy to verify that a we have a max cut in $G$ with right side $X$ if and only if the shortest cycle homologous to $V$ in $Y(G)$ is $U = V \triangle \partial(\cup_{x\in X}Y^x_2)$. 
\end{proof}

The idea behind \cref{LEMMA: triangulation lemma} is simple but the proof is a bit technical. We have therefore added \cref{FIG: Tree decomposition nice ETH}, an example showing how a nice tree decomposition of the graph in \cref{FIG: Tree decomposition} can be used to make a space homeomorphic to $Y(G)$. We hope that it is clear that the connectivity graph/Hasse diagram of this space can be given a triangulation of low treewidth. The figure represents a 2-dimensional surface (see \cref{FIG: ETH reduction components} for details) without a boundary. It is is made by cutting and gluing together stretched $2$-spheres, 

\begin{figure}[!h]
\centering
\begin{subfigure}{.5\textwidth}
  \centering
  \includegraphics[width=0.98\linewidth]{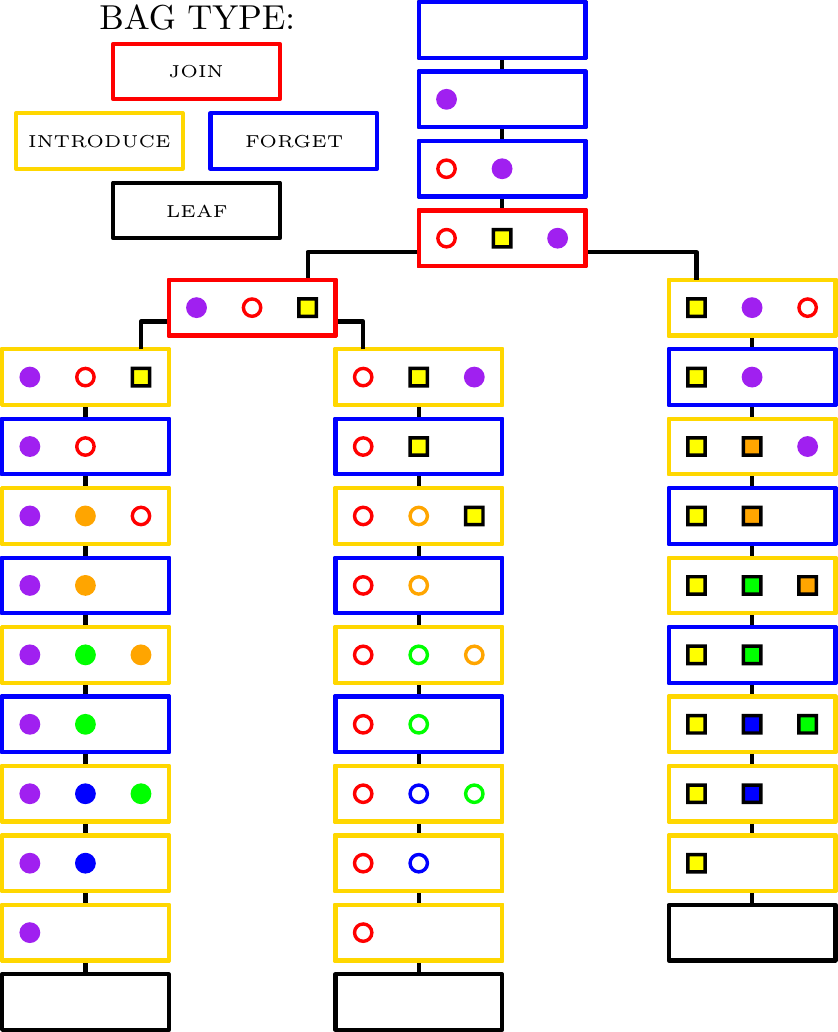}
  \caption{A nice tree decomposition of $G$}
\end{subfigure}%
\begin{subfigure}{.5\textwidth}
  \centering
  \includegraphics[width=0.75\linewidth]{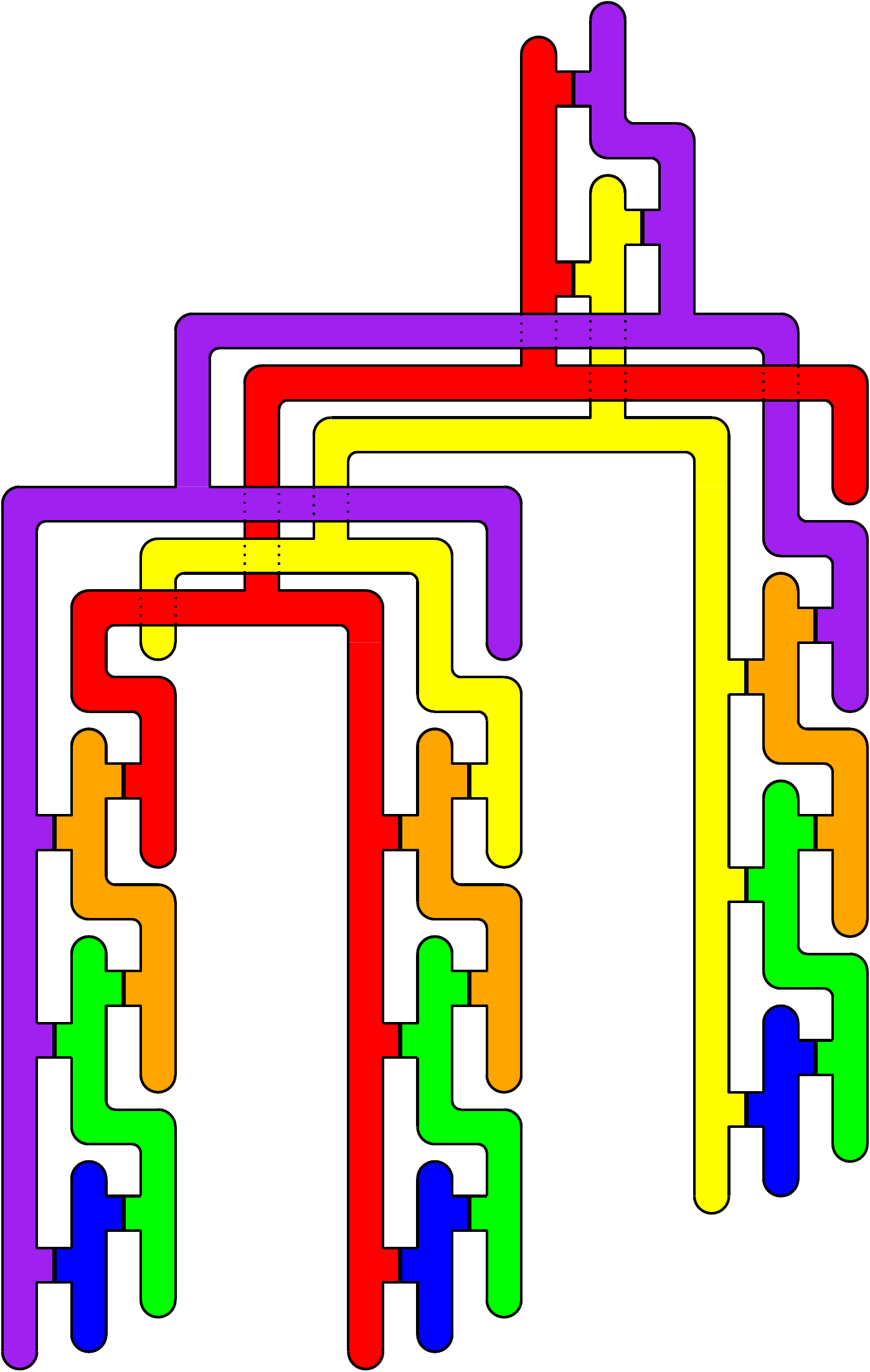}
  \caption{The corresponding space $Y(G)$.}
\end{subfigure}
\caption{\label{FIG: Tree decomposition nice ETH} A nice tree decomposition of the graph in \cref{FIG: Tree decomposition} (left) and a picture of the space it would have been mapped to under the above reduction (right).}
\end{figure}

\begin{lemma} \label{LEMMA: triangulation lemma}
There is a constant $c\in\mathbb{N}$ so that for any graph $G$ we can triangulate $Y(G)$ so that $tw(Con_2(Y(G))) \leq c \cdot tw(G)$ and every $Y^u$ is a subcomplex with connected $Con_2(Y^u)$.
\end{lemma}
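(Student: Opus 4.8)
The plan is to build the triangulation of $Y(G)$ in a way that is ``guided'' by a tree decomposition of $G$, so that the tree decomposition of $G$ can be turned directly into a tree decomposition of $Con_2(Y(G))$ of proportional width. First I would fix a nice tree decomposition $(T, X_-)$ of $G$ of width $tw(G)$, so that every bag has size at most $tw(G)+1$. The building blocks are: (i) a fixed, once-and-for-all triangulation of a $2$-sphere with a ``budget'' of disjoint open discs marked on it, each triangulated identically, and (ii) a fixed triangulation of a cylinder $S^1\times[0,1]$ (an ``annulus/collar'') used to glue two disc-boundaries together. Since the degree of $u$ in $G$ is at most $|V(G)|$, one can either take a sphere triangulation with one marked disc per potential incident edge, or — better for the treewidth bound — only put as many marked discs on $Y^u$ as $u$ has incident edges, all of bounded ``local complexity''. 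The key point is that each $Y^u$ becomes a subcomplex whose $2$-simplices are all interior to $Y^u$, and $Con_2(Y^u)$ is connected because a triangulated sphere has connected $2$-connectivity graph and gluing in finitely many collars keeps it connected; this already gives the last clause of the lemma.

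Next I would describe the triangulation explicitly enough to read off a tree decomposition. The $2$-simplices of $Y(G)$ partition into: the sphere-bodies $Y^u$ (one block per vertex) and the gluing collars (one block per edge $uv$, identified along its two boundary circles with a marked-disc boundary in $Y^u$ and one in $Y^v$). For each node $t\in T$ I would form a bag $\widehat X_t$ of $Con_2(Y(G))$ consisting of: all $2$-simplices of $Y^u$ for $u\in X_t$, plus all $2$-simplices of the collar of $uv$ for every edge $uv$ with $u,v\in X_t$. Because a nice tree decomposition already has the property that for every edge $uv$ there is a bag containing both endpoints, every collar lives in at least one bag; and the $2$-simplices of a single $Y^u$ all travel together through exactly the same set of bags as the vertex $u$ does, so the connectedness-along-the-tree axiom for $Con_2$ follows from the same axiom for $G$. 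Edges of $Con_2(Y(G))$ go only between $2$-simplices within one $Y^u$ or between adjacent simplices across a collar seam, i.e. between simplices of $Y^u$ or of the collar of some $uv$, all of which sit in a common bag $\widehat X_t$; this checks the edge axiom. The width bound is then immediate: $|\widehat X_t| \le (tw(G)+1)\cdot(\text{size of one sphere block}) + \binom{tw(G)+1}{2}\cdot(\text{size of one collar block})$, which is $O(tw(G)^2)$, and one should either accept a quadratic bound (still ``at most $c\cdot tw(G)$'' fails — so one must be more careful) or refine the construction.

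The refinement I would actually carry out, to get the \emph{linear} bound claimed, is to not dump a whole sphere into one bag. Instead, triangulate each $Y^u$ as a long ``caterpillar'': a path of small patches, with the marked discs spaced out one per patch, so that $Con_2(Y^u)$ itself has a path decomposition of constant width, and then splice these path decompositions into $T$ along the subtree of nodes whose bags contain $u$ (the induced subgraph of $T$ on $\{t : u\in X_t\}$ is connected, a standard fact). Concretely: expand each node $t$ of $T$ into a short path of ``sub-bags'', each sub-bag carrying $O(1)$ simplices from each $Y^u$ with $u\in X_t$ together with at most one collar, arranged so consecutive sub-bags overlap enough to preserve all $Con_2$-edges. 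Then each sub-bag has size $O(tw(G))$, giving $tw(Con_2(Y(G)))\le c\cdot tw(G)$. The main obstacle — and the genuinely technical part — is precisely this bookkeeping: choosing the local triangulation of the sphere-with-discs and of the collars, and the interleaving order of patches along the subtree, so that (a) every one of the finitely many $Con_2$-adjacencies is witnessed inside some sub-bag, (b) each simplex occupies a contiguous set of sub-bags, and (c) the per-sub-bag load stays $O(tw(G))$ rather than $O(\deg)$ or $O(tw(G)^2)$. I expect the cleanest way to present this is to first handle a single $Y^u$ in isolation (constant-width path decomposition of its connectivity graph, with the disc-boundaries appearing as designated ``interface'' simplices at prescribed positions), then show the collar of $uv$ can be inserted at any position where an interface simplex of $Y^u$ and one of $Y^v$ are simultaneously present, and finally invoke the subtree-connectivity of $\{t: u\in X_t\}$ in $T$ to schedule all the $Y^u$'s consistently — the same argument that underlies the standard ``replace each vertex by a gadget'' constructions for treewidth lower bounds, run here in reverse to get an upper bound.
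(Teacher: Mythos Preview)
Your high-level strategy---build the triangulation of $Y(G)$ guided by a nice tree decomposition of $G$ and lift bags of $G$ to bags of $Con_2(Y(G))$---is the paper's approach, and you correctly diagnose that the naive ``put all of $Y^u$ in every bag containing $u$'' idea fails to give a linear bound. (In fact your stated $O(tw(G)^2)$ for that naive attempt is optimistic: the size of the sphere block for $u$ is $\Theta(\deg u)$, not a constant, so the bound is $O(tw(G)\cdot\max_u\deg u)$.)

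The real gap is in your refinement. You propose to triangulate each $Y^u$ as a \emph{path} of constant-size patches (a ``caterpillar''), one marked disc per patch, obtaining a constant-width \emph{path} decomposition of $Con_2(Y^u)$, and then to splice this path along the subtree $T_u=\{t:u\in X_t\}$ of the nice tree decomposition. But $T_u$ need not be a path: whenever $u$ lies in a join bag, $T_u$ branches, and this cannot be avoided without blowing up the width (pathwidth can exceed treewidth by a $\log n$ factor already for trees). A linear sequence of patches cannot be laid along a branching tree while keeping each simplex in a connected set of sub-bags and keeping the per-vertex load $O(1)$. Concretely, if $u$ sits at a join with neighbour $a$ handled in one subtree and neighbours $b,c$ handled in the other, the interface at the join must simultaneously connect to the $a$-patch and to the $b$- and $c$-patches; with a caterpillar this forces the middle patches to accumulate at the join bag and then propagate outward, destroying the $O(tw(G))$ bound.

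The paper resolves exactly this point by letting the \emph{triangulation of $Y^u$ itself mirror the shape of $T_u$}: it constructs both the space $Y_t$ and its tree decomposition by structural induction on the nice tree decomposition of $G$. At an introduce node for $u$ a small disc is added; at a join node the two interface cylinders for $u$ coming from the two children are merged via a sphere-with-three-holes (``pair of pants'') piece; at the forget node for $u$ its cylinder is glued to the cylinders of its neighbours and capped off. Thus $Con_2(Y^u)$ inherits a constant-width \emph{tree} decomposition whose underlying tree is $T_u$, and the splicing you want becomes automatic. Your outline can be repaired in precisely this way, but it requires abandoning the caterpillar for a triangulation that branches with $T_u$---which is the one idea the proposal is missing.
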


\begin{proof}
Let  $TD(G)$ denote a nice tree decomposition of $G$ of width $k$. We will use structural induction on $TD(G)$  to construct a triangulation of $Y(G)$. Given a bag $X_t$ in $TD(G)$ we assume that we have already constructed a simplicial complex $Y_t$ which is also a $2$-manifold with one boundary component for each vertex in $X_t$. We also assume that we have found a tree decomposition of $Con_2(Y_t)$, denoted by $TD(Y_t)$, and that $Y_t$ and $TD(Y_t)$ satisfy the following properties:
\begin{enumerate}
    \itemsep0pt
    \item The width of $TD(Y_t)$ is bounded by $c\cdot tw(G)$.
    \item The space $Y_t$ is the union of subcomplexes $Y_t^u$ where $u$ is a vertex of $G$ in either $F_t$ or $X_t$. Each $Y_t^u$ is a $2$-sphere that has $||N(u)||$ holes if $u\in F_t$ and $||N(u)\cap F_t|| + 1$ holes if $u\in X_t$, where $N(u)$ is the set of vertices adjacent to $u$.
    \item Two components $Y_t^u$ and $Y_t^v$ intersect each other at precisely one common boundary if and only if $uv\in G_t$ and at least one of $u$ or $v$ is in $F_t$. 
    \item $TD(Y_t)$ contains a distinct bag $R_t$ containing precisely the $2$-simplices of ``cylinders'' $C_t^u \subseteq Y_t^u$ for $u\in X_t$ where each cylinder contain one boundary component of $Y_t$.
\end{enumerate}

At the root bag every node of $G$ is in $F_t$. This means that $Y_r$ is a simplical complex homeomorphic to $Y(G)$ having a tree decomposition $TD(Y_r)$ with width at most $c\cdot tw(G)$. The leaf bags $X_t$ are the base cases. We set $Y_t= \emptyset$ and $TD(Y_t)$ to be the tree decomposition containing just an empty bag $R_t$. The construction trivially fits the above requirements. We will show how to construct $Y_t$ and $TD(Y_t)$ for the other kinds of bags visually, using the notation presented in \cref{FIG: ETH reduction components}.

\begin{figure}[h!]
\centering
\includegraphics[width=120mm]{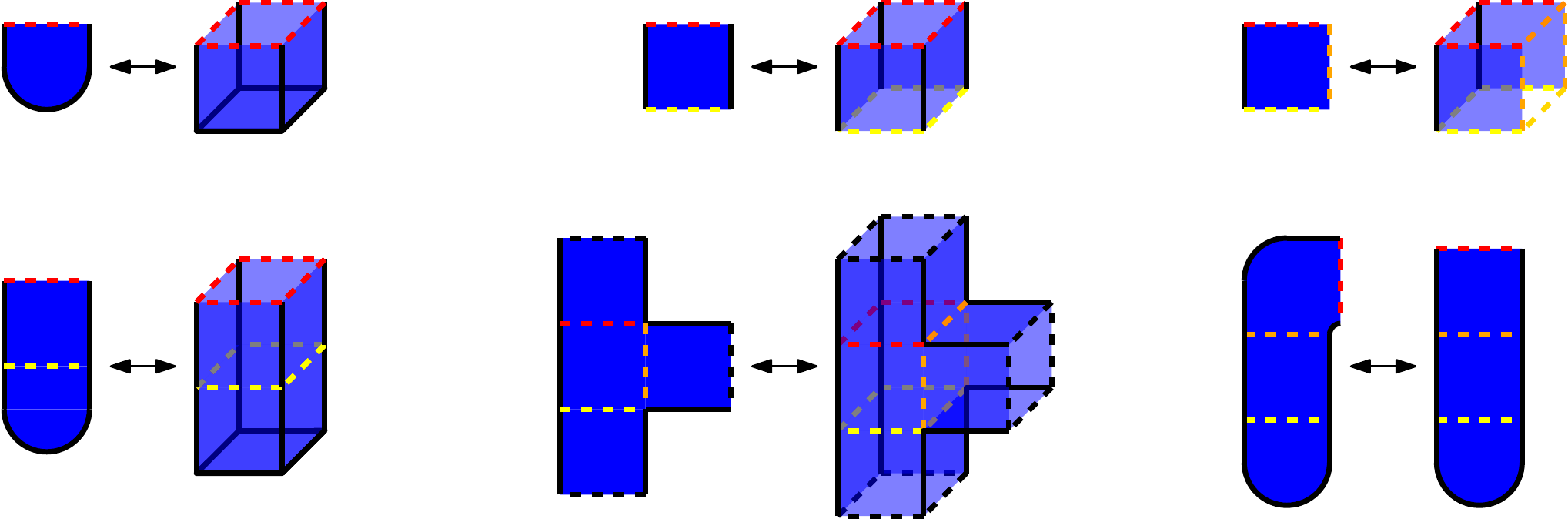}
\caption{\label{FIG: ETH reduction components} The basic building blocks used to build spaces. The figures shown to the left represents the figure shown to the right. Each square in the $3D$-figure consists of two $2$-simplices glued along a common face. Top left to top right we see a box without its top (i.e. a disc), a cylinder and a cylinder without one of its sides. The bottom row of figures shows how the above components can be glued together.}
\end{figure}

Let $X_t$ be an introduce bag in $TD(G)$ with the child bag $X_{t'} = X_t\setminus \{v\}$. By the induction hypothesis we have already constructed a space $Y_{t'}$ and a tree decomposition $TD(Y_{t'})$ with a distinct bag $R_{t'}$. The constriction of $Y_t$, $TD(Y_t)$ and $R_t$ is shown in \cref{FIG: ETH reduction introduce}. Clearly $TD(Y_{t})$ is a tree decomposition. The new bag's we have added is only a constant times bigger than $X_t$, giving us Property 1. The new $2$-connectivity graph of each $Y^v$ is connected and each $Y^v$ containing as many holes as Property 2 requires. The $Y^v$'s intersects as demanded by Property 3 since the new vertex cannot be a neighbour of a node that has been forgotten earlier. Property 4 is satisfied by construction.

\begin{figure}[!h]
\centering
\begin{subfigure}{.3\textwidth}
\centering
  \includegraphics[width=0.9\linewidth]{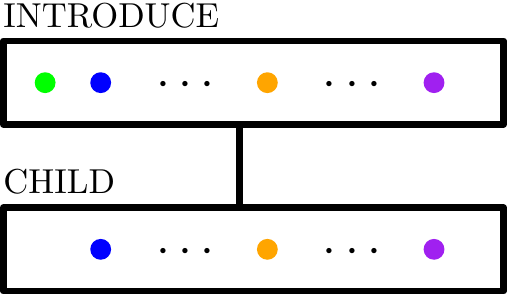}
\end{subfigure}%
\begin{subfigure}{.33\textwidth}
  \centering
  \includegraphics[width=0.67\linewidth]{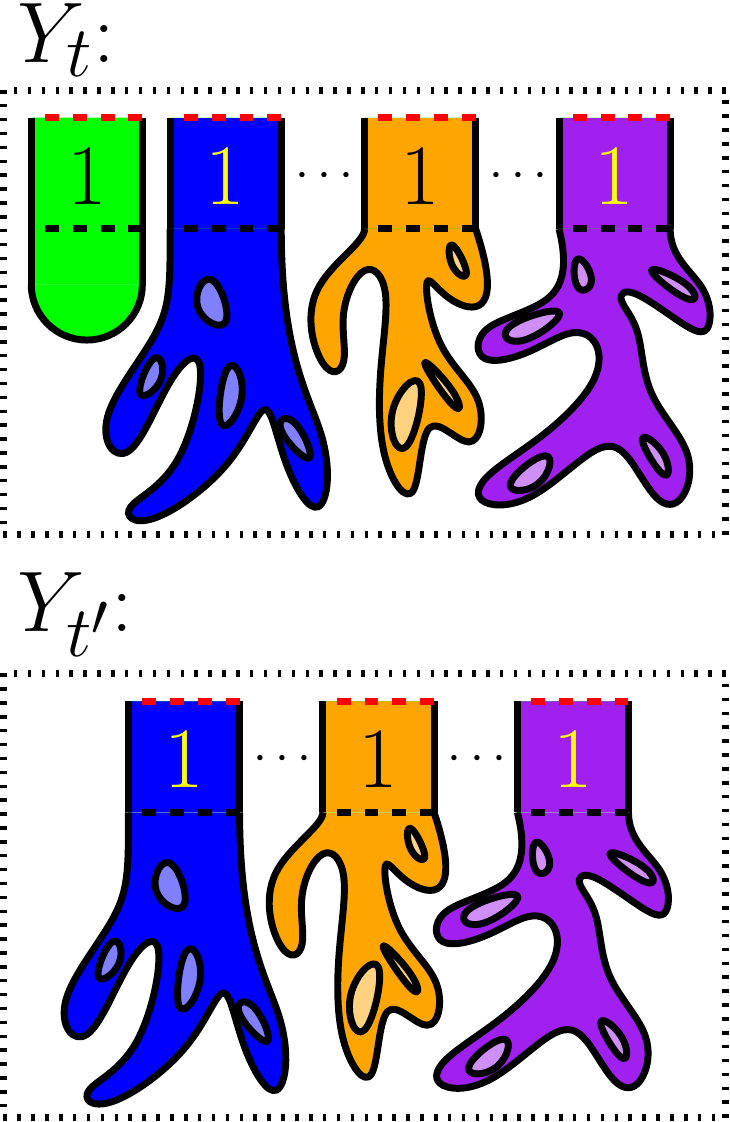}
\end{subfigure}
\begin{subfigure}{.33\textwidth}
\centering
\includegraphics[width=0.725\linewidth]{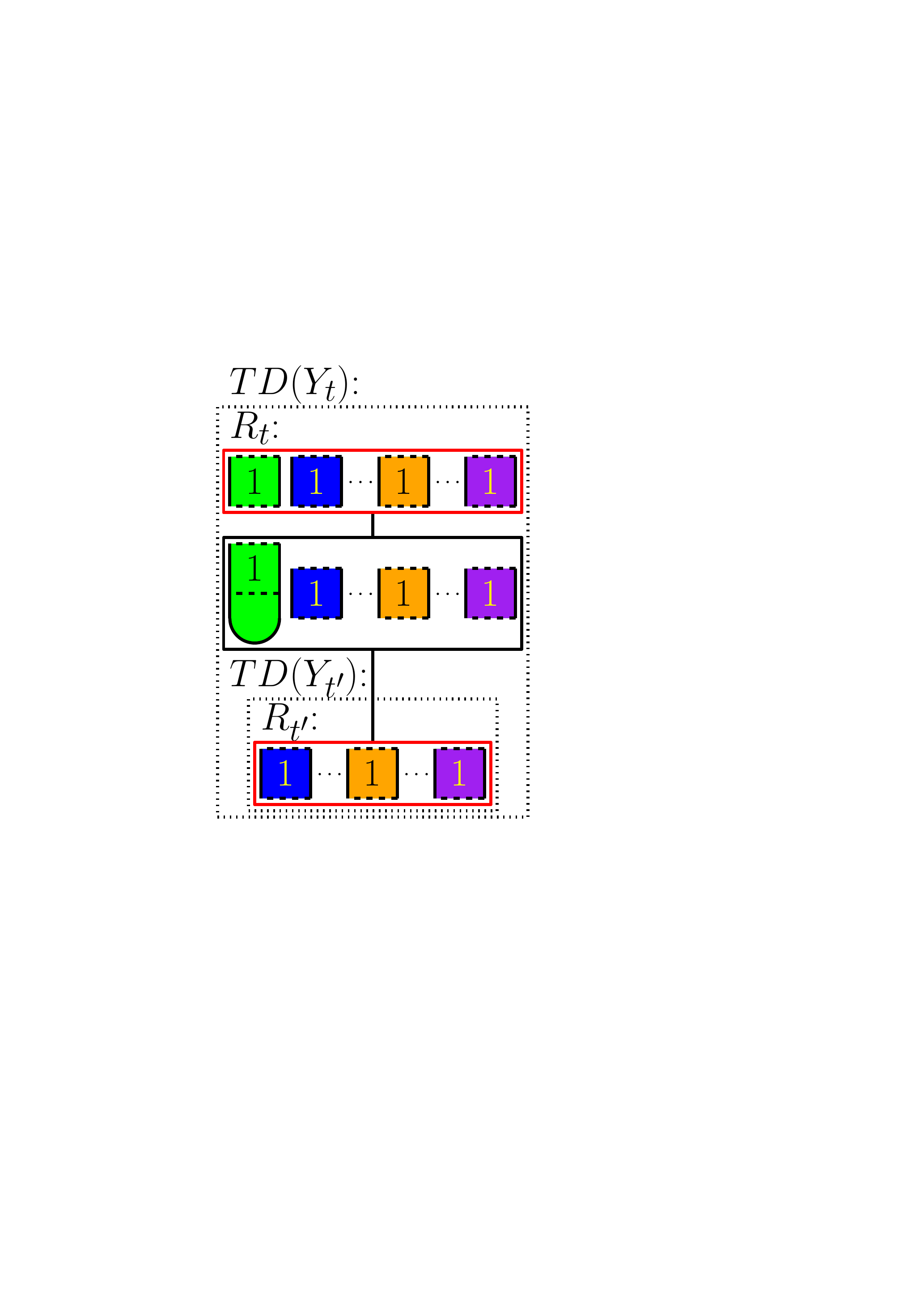}
\end{subfigure}
\caption{\label{FIG: ETH reduction introduce} A ``green'' vertex, $v$, is introduced, so we add a green disc to $Y_{t'}$ to make $Y_t$. A tree decomposition of this space is constructed by adding two bags. The first is adjacent to $R_{t'}$ and contains the simplices of $R_{t'}$ as well as all the new $2$-simplices. $R_t$ is a subset of this bag and it is adjacent to this bag in $TD(Y_t)$.}
\end{figure}

Next we consider the forget bag $X_t$ and its child bag $X_{t'} = X_t\cup \{v\}$. This is the stage where edges of $G$ are ``accounted'' for. 
The space and the tree decomposition is described visually in \cref{FIG: ETH reduction forget}. That Property 1-4 holds is obvious as we know that for every edge $uv$ in $G$ there is exactly one bag containing $u$ and $v$ where either $u$ or $v$ is forgotten.

\begin{figure}[!h]
  \centering
\begin{subfigure}{.3\textwidth}
  \centering
  \includegraphics[width=0.9\linewidth]{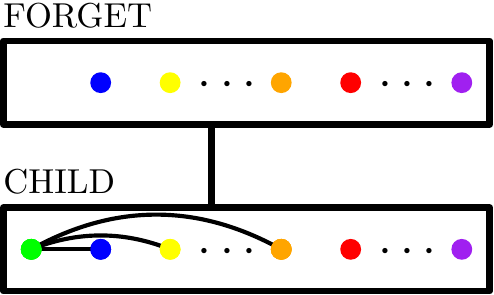}
\end{subfigure}%
\begin{subfigure}{.33\textwidth}
  \centering
  \includegraphics[width=0.88\linewidth]{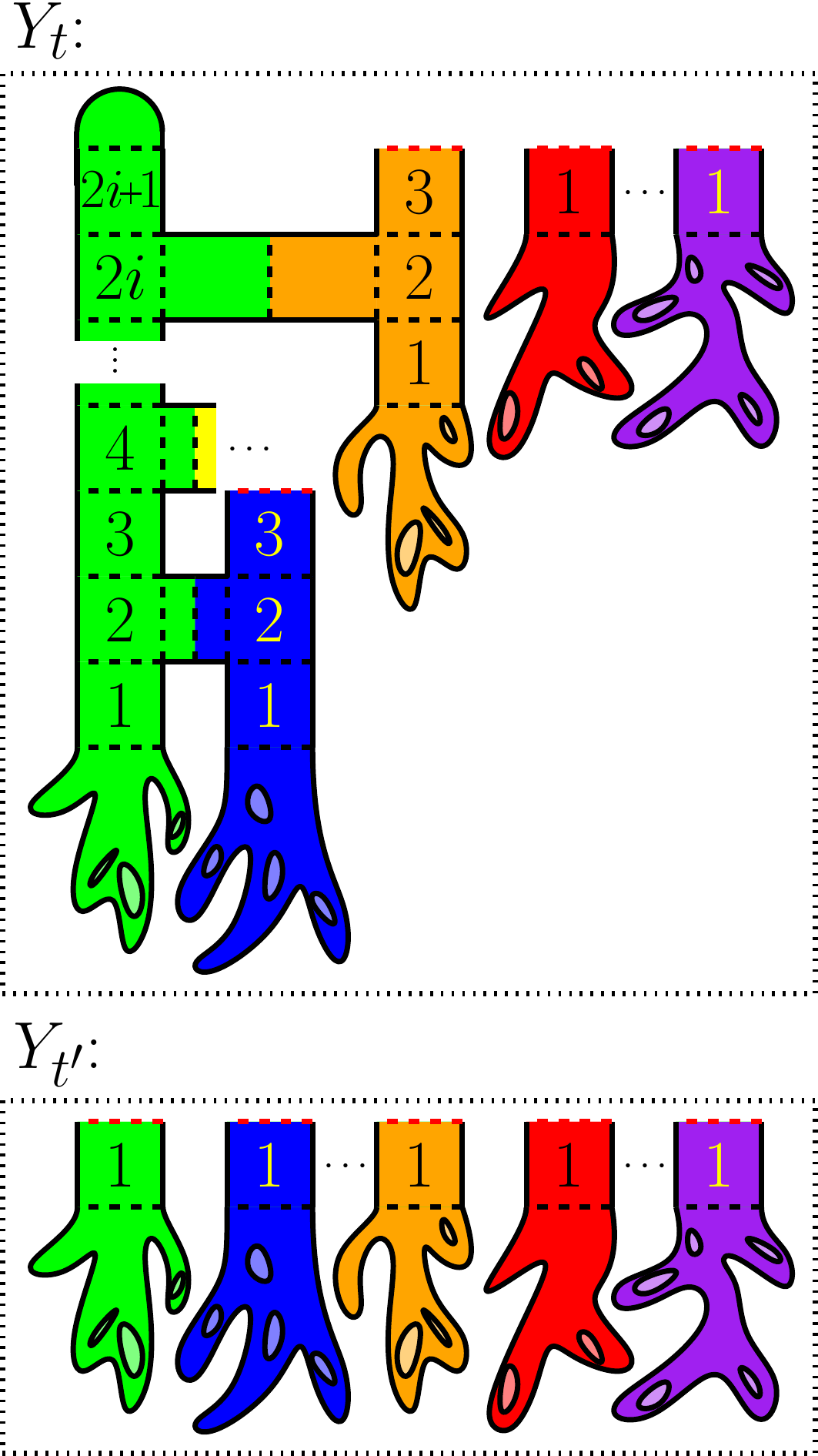}
\end{subfigure}%
\begin{subfigure}{.33\textwidth}
  \centering
  \includegraphics[width=0.975\linewidth]{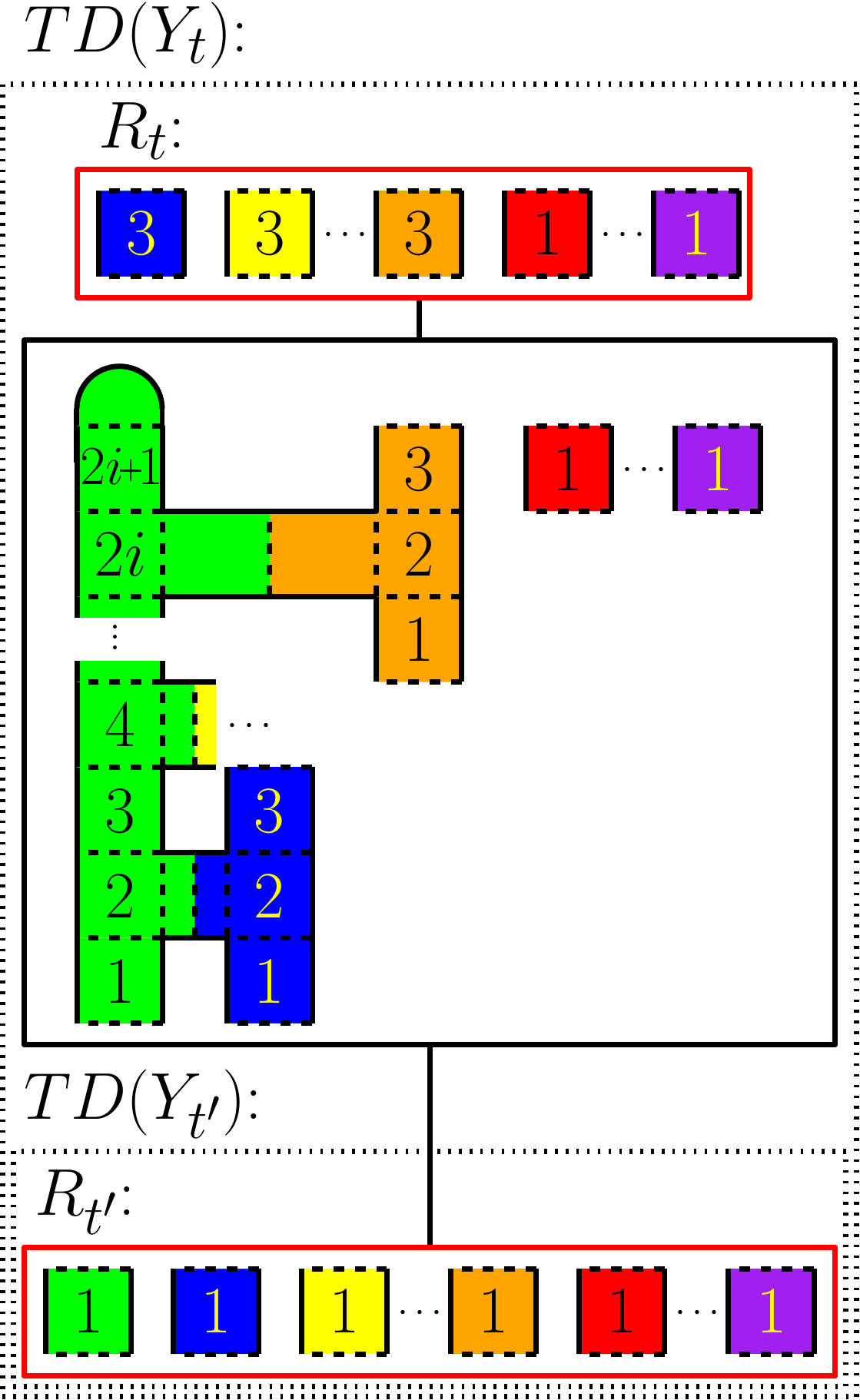}
\end{subfigure}
    \caption{\label{FIG: ETH reduction forget} A green vertex $v$ is forgotten. First we attach it to the components that corresponds to its neighbours in $X_t$ before we glue a disc to its ``top''. We get a tree decomposition by adding a bag adjacent to $R_{t'}$ containing the new $2$-simplices and the content of $R_{t'}$. $R_t$ is a subset of this bag and is adjacent to this bag in $TD(Y_t)$.}
\end{figure}

Finally we consider what happens at the join bags $X_t$ in $TD(G)$ which have two child bags, $X_{t'}$ and $X_{t''}$, both equal to $X_t$. For each vertex $v\in X_t$ we construct $Y_t^v$ from $Y_{t'}^v$ and $Y_{t''}^v$ by connecting them as pictured in \cref{FIG: ETH reduction join}. The new tree decomposition is also shown in this figure. Verifying that Properties $1$-$4$ are satisfied is again elementary.
\begin{figure}[!h]
  \centering
  \includegraphics[width=0.4\linewidth]{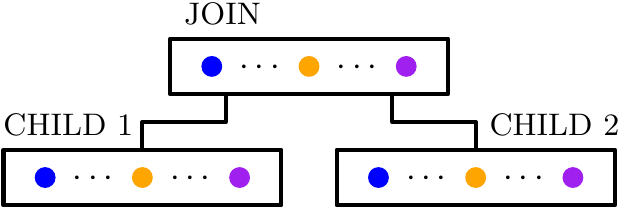}
\centering
\begin{subfigure}{.5\textwidth}
  \centering
  \includegraphics[width=0.84\linewidth]{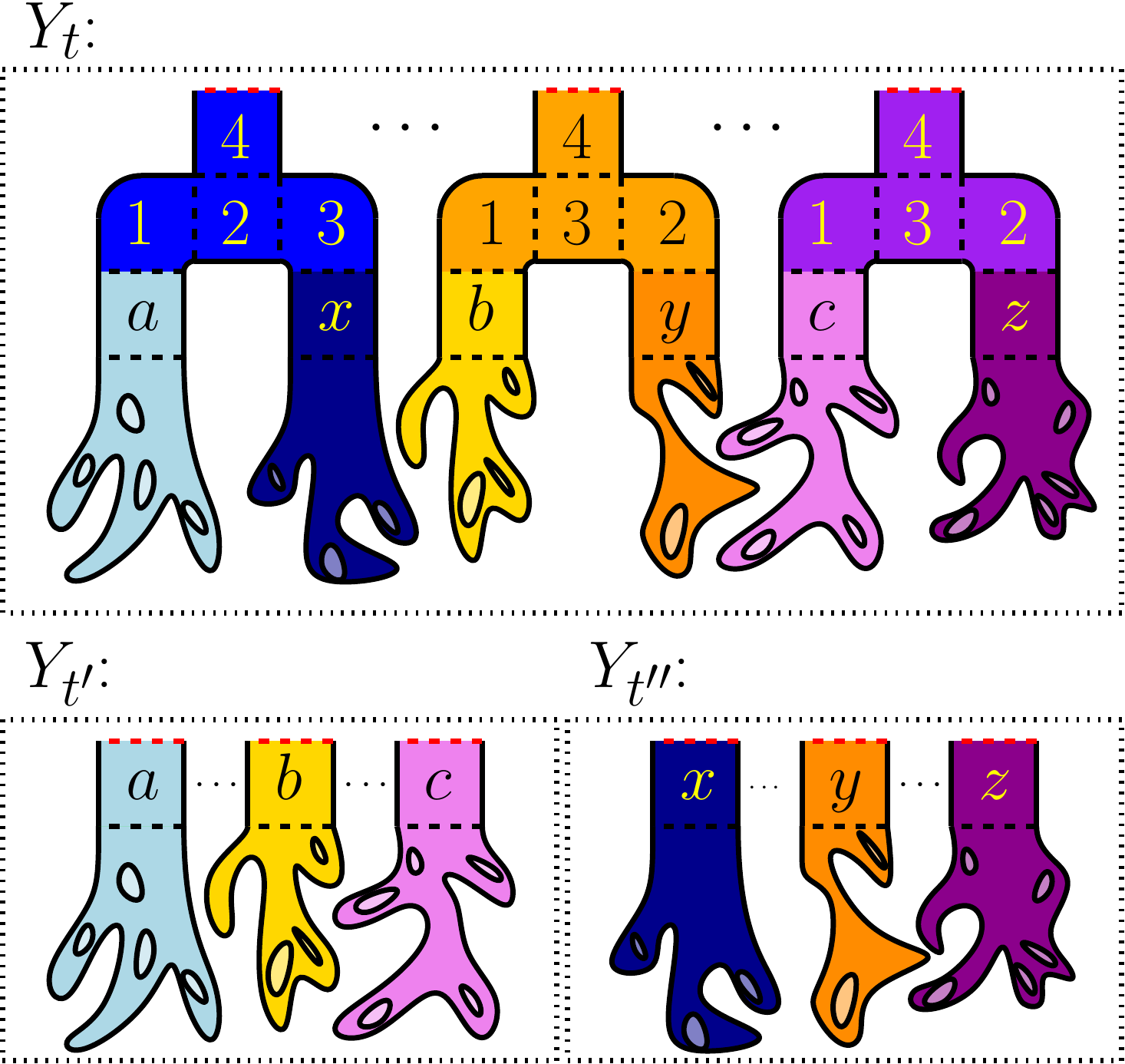}
\end{subfigure}%
\begin{subfigure}{.5\textwidth}
  \centering
  \includegraphics[width=0.9\linewidth]{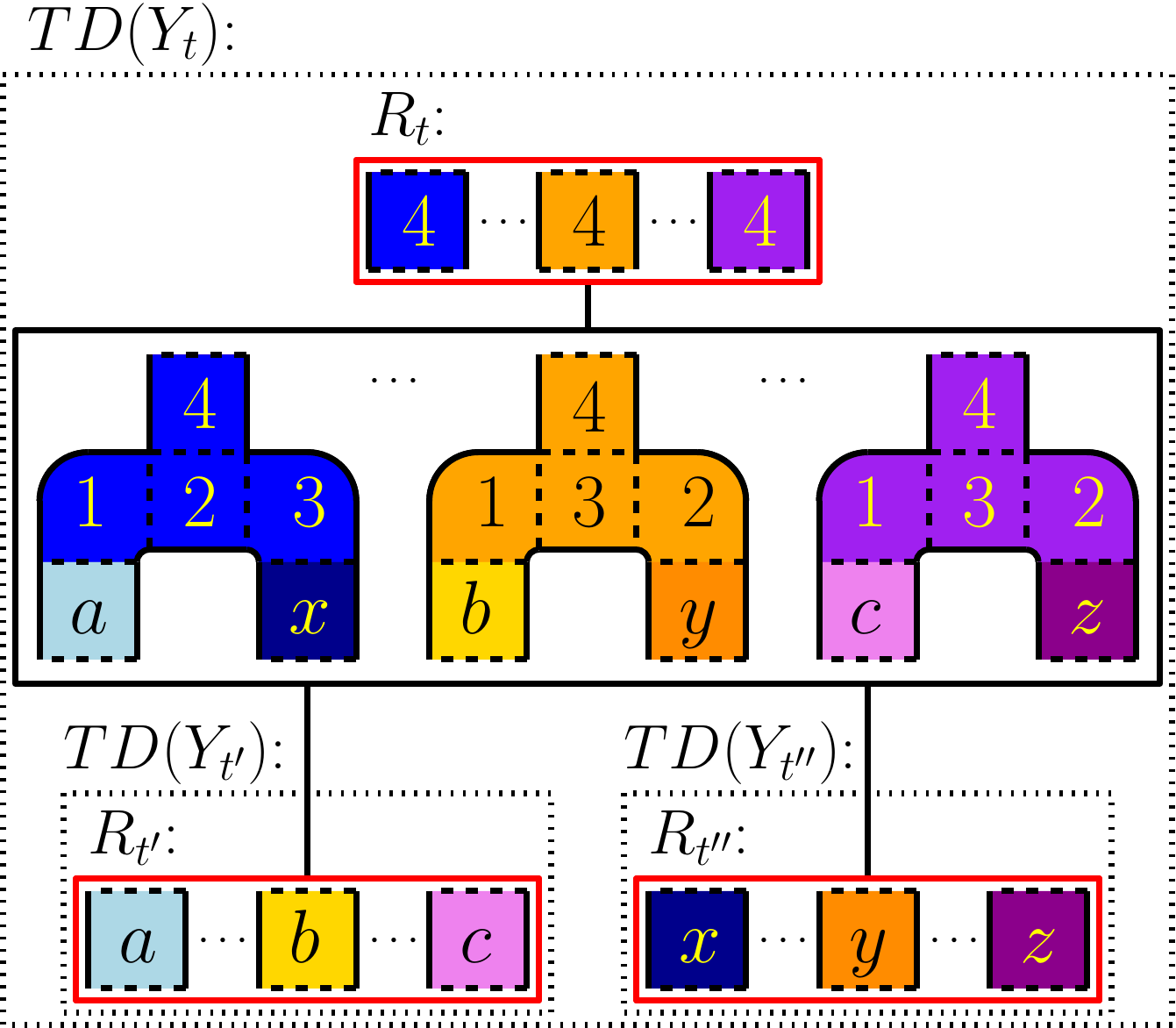}
\end{subfigure}
    \caption{\label{FIG: ETH reduction join} We have two components for every vertex $v \in X_t$, one for each child bag. We combine these into one component by gluing them to a sphere with three holes. To get the bag $R_t$ we attach a cylinder to the last hole.  }
\end{figure}
\end{proof}



\cref{Theorem: Optimality} holds in higher dimensions as well. In particular if $d \geq 1$ and $k$ is the treewidth of either $Con_{d+1}(K)$ or $Hasse_{d+1}(K)$ we have the following corollary:

\begin{cor}
The HL$_d$ problem cannot be solved in $2^{o(k)}n^c$-time if the ETH is true. This is still the case if the input is restricted to to $d+1$-manifolds embedded in $\mathbb{R}^{d+2}$
\end{cor}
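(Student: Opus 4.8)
The plan is to bootstrap \cref{Theorem: Optimality} using the suspension reduction from \cref{sec:preliminaries}, exactly as was done for the W[1]-hardness corollary, and then check that the parameter (the treewidth of the relevant connectivity graph or Hasse level) only grows by a constant factor under suspension. First I would recall that suspension gives a polynomial-time reduction from the unweighted HL$_d$ problem to the unweighted HL$_{d+1}$ problem sending $(K,k)$ to $(S(K),2k)$, that it induces a bijection between $d$-cycles in $K$ and $(d+1)$-cycles in $S(K)$ preserving the homology relation, and that it preserves embeddability: if $K$ embeds in $\mathbb{R}^m$ then $S(K)$ embeds in $\mathbb{R}^{m+1}$, and likewise a $d{+}1$-manifold is sent to a $(d{+}2)$-manifold. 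Starting from the base case $d=1$ given by \cref{Theorem: Optimality}, where the input is a $2$-manifold in $\mathbb{R}^3$ with treewidth of $Con_2$ (equivalently, up to constants, of $Hasse_2$) linear in $tw(G)$ for \textsc{Max Cut} instances $G$, I would iterate the suspension $d-1$ times to land on $d{+}1$-manifolds embedded in $\mathbb{R}^{d+2}$.

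The key technical step is the parameter bound: I must show that $tw(Con_{d+2}(S(K))) \le \lambda \cdot tw(Con_{d+1}(K))$ and $tw(Hasse_{d+2}(S(K))) \le \lambda \cdot tw(Hasse_{d+1}(K))$ for some absolute constant $\lambda$. The idea is that each $(d{+}1)$-simplex $\sigma$ of $K$ spawns exactly two $(d{+}2)$-simplices of $S(K)$, namely $\sigma \sqcup v^+$ and $\sigma \sqcup v^-$, and two $(d{+}1)$-simplices $\sigma \sqcup v^+$, $\sigma \sqcup v^-$ arise from the $d$-simplices; adjacency in $Con_{d+2}(S(K))$ between simplices coming from the $v^+$ half mirrors adjacency in $Con_{d+1}(K)$, similarly for the $v^-$ half, plus a few extra edges through the apex vertices. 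Concretely, given a tree decomposition of $Con_{d+1}(K)$ of width $k$, I would build a tree decomposition of $Con_{d+2}(S(K))$ by replacing each bag $B$ with a bag containing, for every $(d{+}1)$-simplex $\sigma\in B$, both lifts $\sigma\sqcup v^\pm$; this at most doubles the bag size, and one checks the three tree-decomposition axioms directly, handling the simplices incident to $v^+$ and $v^-$ by noting they are covered because their ``shadow'' $\sigma$ was covered. The same recipe works for $Hasse$ levels since suspension acts dimension-wise in the same way.

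Once the parameter bound is in place, the corollary follows formally: an algorithm solving HL$_d$ in $2^{o(k)}n^c$ time, where $k$ is the treewidth of $Con_{d+1}$ or $Hasse_{d+1}$ of the input, would — composed with $d-1$ suspensions applied to the \textsc{Max Cut} instance from \cref{Theorem: Optimality} — solve HL$_1$ on $2$-manifolds in $\mathbb{R}^3$ in $2^{o(k')}(n')^{c}$ time where $k' = \Theta(k)$ and $n' = \mathrm{poly}(n)$, hence solve \textsc{Max Cut} in $2^{o(tw(G))}\mathrm{poly}(n)$ time, contradicting the ETH via \cite{lokshtanov2011known}. I would also remark that the iterated suspensions keep the space a manifold of the right dimension and embeddable in $\mathbb{R}^{d+2}$, so the restricted statement is preserved. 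The main obstacle I anticipate is the bookkeeping in verifying the tree-decomposition axioms for the lifted decomposition — in particular the connectedness (third) axiom for simplices touching the two apex vertices $v^+,v^-$ — but since every such simplex projects to a simplex of $K$ whose occurrences already form a connected subtree, and the apexes appear in every relevant bag, this should go through cleanly.
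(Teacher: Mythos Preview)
Your proposal is essentially the paper's own proof: induction on $d$ with \cref{Theorem: Optimality} as the base case, the suspension reduction from \cref{sec:preliminaries} as the inductive step, and the treewidth bound obtained by replacing each bag $X_t$ of a tree decomposition of $Con_{d+1}(K)$ (resp.\ $Hasse_{d+1}(K)$) with $\{\sigma\cup v^+,\sigma\cup v^- : \sigma\in X_t\}$, which at most doubles the width. The paper's argument is just a terser version of what you wrote, so your extra care with the third tree-decomposition axiom is welcome but not strictly needed beyond what the paper asserts.

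One caveat worth flagging, and it applies equally to the paper's own proof: your claim that suspension sends a $(d{+}1)$-manifold to a $(d{+}2)$-manifold is not true in general. The link of each apex $v^\pm$ in $S(K)$ is (homeomorphic to) $K$ itself, so $S(K)$ is a manifold only when $K$ is a sphere; but the surfaces $Y(G)$ produced by \cref{LEMMA: triangulation lemma} have positive genus whenever $G$ has cycles, and their suspensions are therefore not $3$-manifolds. Thus the ``restricted to $(d{+}1)$-manifolds embedded in $\mathbb{R}^{d+2}$'' clause does not follow from the suspension argument alone for $d\geq 2$, even though the unrestricted ETH lower bound goes through exactly as you outline.
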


\begin{proof}
First we note that the space constructed in \cref{LEMMA: triangulation lemma} is an orientable $2$-manifold so it embeds in $\mathbb{R}^3$. The rest of the proof is by induction using \cref{Theorem: Optimality} as a base case. We reduce from the HL$_{d}$ problem to the HL$_{d+1}$ by using suspension as discussed in \cref{sec:preliminaries}. It remains to prove the general fact that $tw(Con_{d}(S(K)) \leq 2tw(Con_{d+1}(K))$. If we have a tree decomposition of $Con_{d}(K)$ then a valid tree decomposition of $Con_{d+1}(S(K))$ is obtained by replacing every bag $X_t$ in this tree decomposition with the bag $X'_t = \{ \sigma \cup  v^+, \sigma v_- \cup  | \sigma \in X_t \}$. It is easy to see that this is again a tree decomposition. The argument for the Hasse diagram is analogous.
\end{proof}

\section{Experiments}
\label{sec:comparison}

We could not determine in advance which of the treewidth based algorithms would work best in practice as they have different advantages when compared against each other. In order to settle this we conducted experiments on Python implementation of the two algorithms. This section contains a report on the result of these experiments as well as some observations on the theoretical differences between the algorithms. The code we used to solve homology localization is now freely available at \url{https://github.com/erlraavaa314/homology-localization}.

\subsection{Implementation}

This section gives some further information about our implementation. 

\subsubsection{Finding a Tree Decomposition}
Prior to running our algorithm we needed to find nice tree decompositions. We used two different heuristic algorithms from \texttt{networkx} to do this. One is based on the Minimum Fill-in heuristic and the other is based on the Minimum Degree heuristic. We used the decomposition that had the smallest width. This does not mean that we got optimal tree decompositions but they seemed to be reasonably good.

\subsubsection{Key Adjustments}
To speed up the computation we made some minor adjustments in the implementation of the algorithm. The most substantial change was made in order to avoid storing and computing solutions to subproblems that were obviously impossible to solve. This was done by keeping a nested dictionary where only entries for problems with feasible solutions were solved. To make this work we had to fill the tables by iterating through the solutions stored at the child bag. Temporary solutions were stored as entries at the parent bag.

\subsubsection{Problem Instances}

We timed our code on seven different kinds of spaces which we describe below. The first four of these are illustrated in \cref{FIG: Four kinds of spaces} 
\begin{itemize}
    \itemsep0pt
    \item Rectangular triangulated surfaces based on vertices placed in a grid.
    \item Cylinders obtained by gluing a pair of opposite sides of a rectangular surface.
    \item Tori obtained by gluing both pairs of opposite sides of a rectangular surface 
    \item ``M-spaces'' made by gluing $k$ rectangular $2$ by $m+1$ surfaces to a circle on $m$ vertices. We attach the short side of rectangle $i$ to the ``opposite'' short side of rectangle $i+t \pmod k$. 

    \item The Vietoris-Rips complex of three kinds of point clouds. These simplicial complexes were weighted by the length of the $1$-simplices.
    \begin{enumerate}
    \itemsep0pt
        \item Between $5$ and $20$ points are sampled uniformly at random from a unit circle. These point are then multiplied by a scalar chosen using a normal distribution with expectation $1$ and standard deviation $0.1$. Finally, each point is given a normally distributed $z$-coordinate expectation $0$ and standard deviation $0.1$. Referred to later as Unfiltered (P). 
        \item Between $10$ and $200$ points are sampled in the same way as above. Then we apply the following procedure. First we pick a point at random, then we select the point furthest away from the set of points we currently had until half of the points are chosen. Referred to later as Filtered (P). 
        \item We pick points on a circle by first dividing it into arc segments of the same length. Points are selected uniformly at random from each arc. A random radius and a random $z$ coordinate were then assigned to each point like before.  We tested different numbers of arc segments, from $20$ to $100$, and different number of points from each segment, one, two and three. This means we selected between $20$ and $300$ points. These spaces are referred to later as Sector (P).
    \end{enumerate}
\end{itemize}

\begin{figure}[h!]
\centering
\hfill
  \begin{subfigure}[b]{0.365\textwidth}
  \centering
    \includegraphics[width=\textwidth]{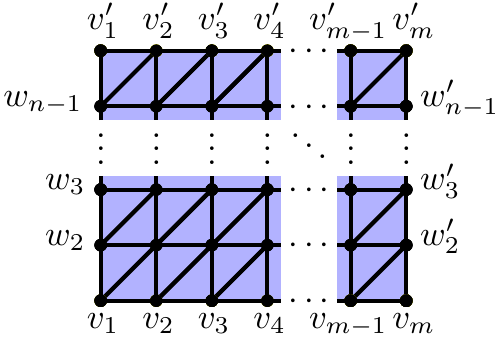}
    \caption{Rectangle.}
  \end{subfigure}
\hfill
  \begin{subfigure}[b]{0.31\textwidth}
  \centering
    \includegraphics[width=\textwidth] {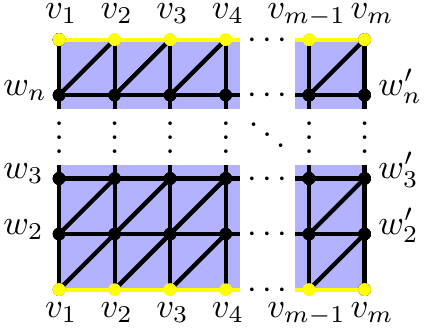}
    \caption{Cylinder.}
  \end{subfigure}
\hfill
 \begin{subfigure}[b]{0.31\textwidth}
    \centering
    \includegraphics[width=\textwidth] {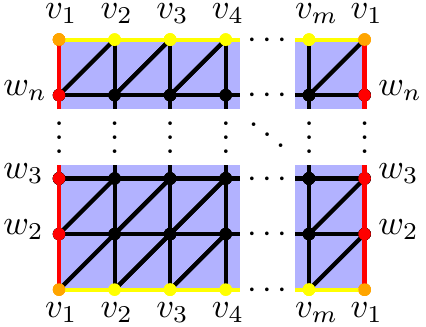}
    \caption{Torus.}
  \end{subfigure}
\centering
    \begin{subfigure}[b]{0.4\textwidth}
    \centering
    \includegraphics[width=\textwidth]{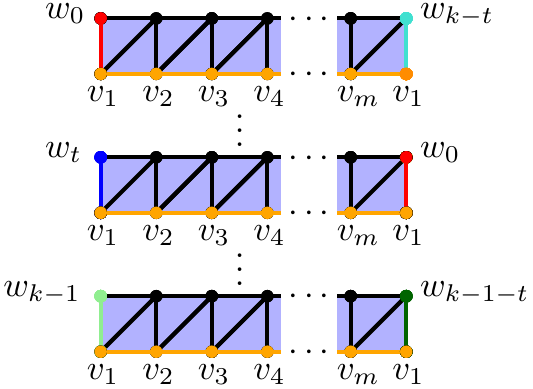}
    \caption{M-space.\label{fig: Moebiusish}}
  \end{subfigure}
\centering
    \begin{subfigure}[b]{0.35\textwidth}
    \centering
    \includegraphics[width=\textwidth]{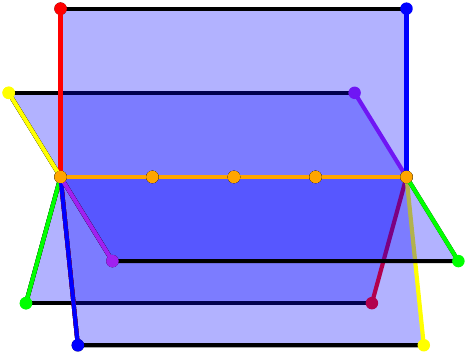}
    \caption{M-space: $m =4$, $k = 5$, $t = 2$.}
  \end{subfigure}
  
 \caption{\label{FIG: Four kinds of spaces}Four out of the seven kinds of simplicial complexes that we used as input in order to compare the efficiency of the algorithms.}
\end{figure}

We used $V = X \triangle \partial(B)$ where $B$ was obtained by adding any $2$-simplex with probability $0.5$. In the spaces not made using Vietoris-Rips, $X = \emptyset$. In the other examples $X$ was chosen to be the shortest $1$-cycle born at the same time as the most persistent $1$-cycle.

\subsection{Results and Discussion}

We tested our code on several different instances of the HL problem and then timed the algorithms on some of these problem instances to find out which was most efficient in practice. The results of our experiments are shown in \cref{FIG: Practical performance1} and \cref{FIG: Practical performance2}. 

The symbol $\infty$ is used to indicate that the program did not terminate on that input instance. This happened sometimes because we had set an upper limit to how much memory the computer could use. The algorithm based on the connectivity graph generally used more memory, something which is reflected in the number of times this algorithm was forced to shut down. There are several instances where neither algorithm terminated. This can not be seen in \cref{FIG: Practical performance1}, where these data points overlap, but can be spotted easily in \cref{FIG: Practical performance2}, where the value NaN indicates that neither algorithm terminated.

We  now explain why we believe that the data strongly suggests that the algorithm based on the Hasse diagram (algorithm H) is more efficient than the algorithm based on the connectivity graph (algorithm C). 

\begin{figure}[h!]
  \begin{subfigure}[b]{0.433\textwidth}
    \includegraphics[width=\textwidth]{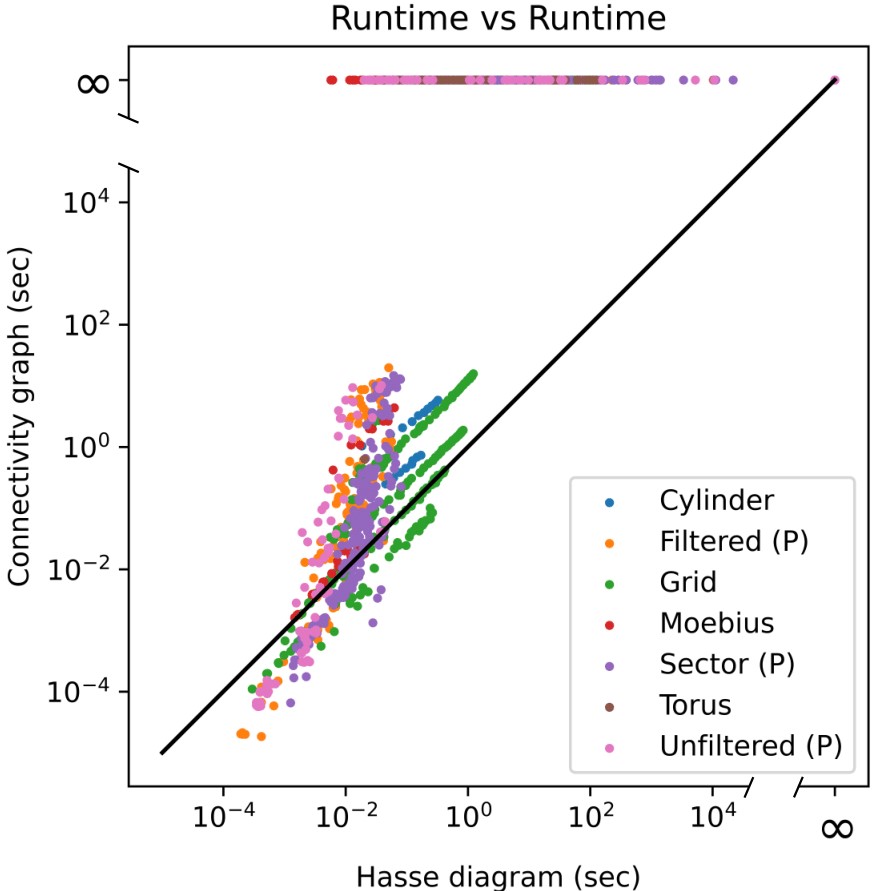}
    \label{fig: speed by speed}
  \end{subfigure}
  \begin{subfigure}[b]{0.55\textwidth}
    \includegraphics[width=\textwidth] {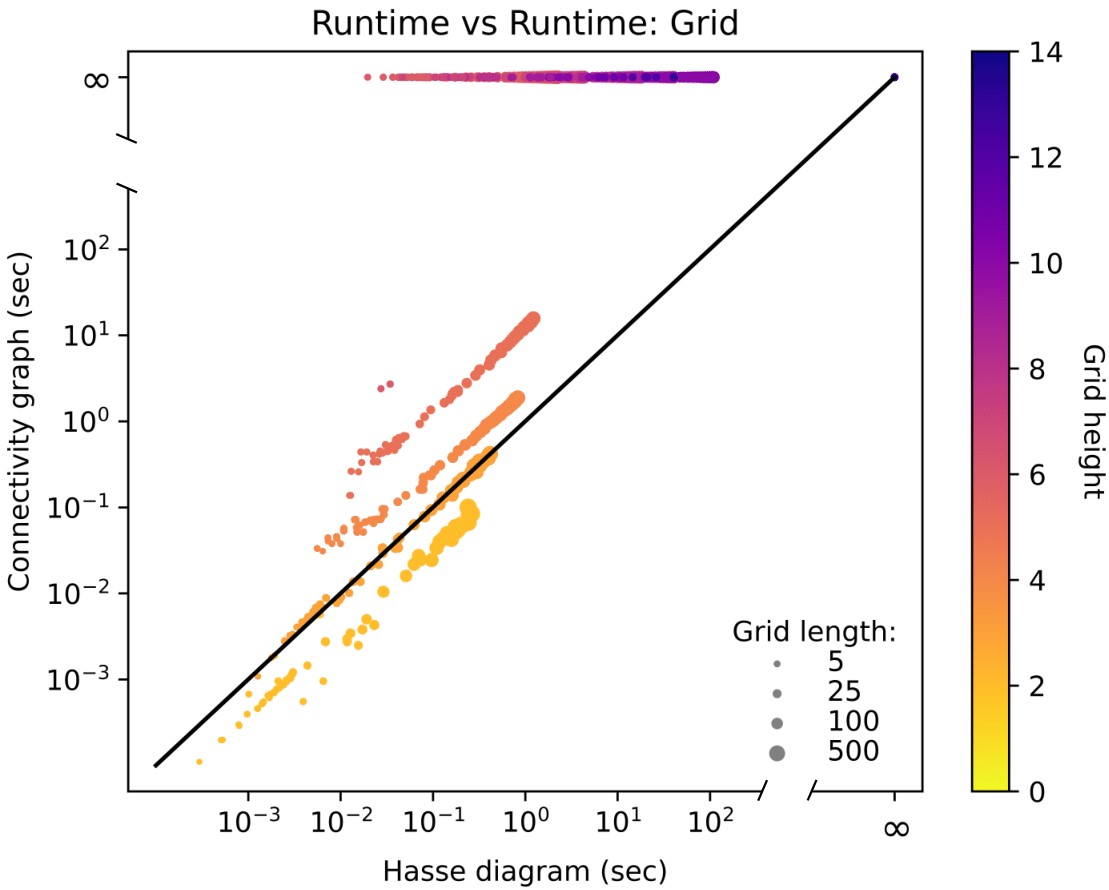}
    \label{fig: grid speed by speed}
  \end{subfigure}
  \caption{Each point represents an instance of the HL problem and its location tells us how fast it was solved. The plot to the left shows every problem instance. The kind of instance determine the colour of a given point. The rightmost plot only shows instances that are grids. The size/colour of a point is determined by the length/height of the grid.}
 
    \label{FIG: Practical performance1}
\end{figure}

In the plot to the left of \cref{FIG: Practical performance1} it is easy to see that algorithm H tends to do better than algorithm C. Still there are a significant number of cases where the opposite is true. A closer examination shows that this is not really a contradiction to our conclusion, since the only instances where algorithm C is faster are those where both algorithms finish very quickly. In particular, algorithm H is faster than algorithm C on every instance that takes more than a second to solve.

The plot to the right where we only plot the rectangular spaces gives some indication as to when algorithm C is likely to outperform algorithm H. There is a clear correlation between the height of the input grid and which algorithm is fastest while the length of the grid does not appear to have much of an impact. It shows that algorithm C is only better than algorithm H when the rectangle is very narrow and that the height of the grid has a big impact on how much better algorithm H does by comparison. Again, increasing the length does not have much of an impact. This indicates that algorithm C only outperforms algorithm H when the width of the tree decompositions is low i.e. this algorithm is the most effective only on spaces where the problem is easy. The next set of figures confirms this suspicion.

The decomposition of the underlying graph has a much bigger impact than the size of the input on a) how fast an instance is solved and on b) how much faster algorithm H solves the problem when compared with algorithm C. The leftmost plots in \cref{FIG: Practical performance2} gives evidence for the first part of this claim and the rightmost plots give evidence for the second. This fits well with with what we presented as the basic observation leading to parameterized algorithms: Problems are hard because they are complex, not because they are big. 

\begin{figure}[h!]
  \vspace{5mm}
  \begin{subfigure}[b]{0.5\textwidth}
    \includegraphics[width=\textwidth]{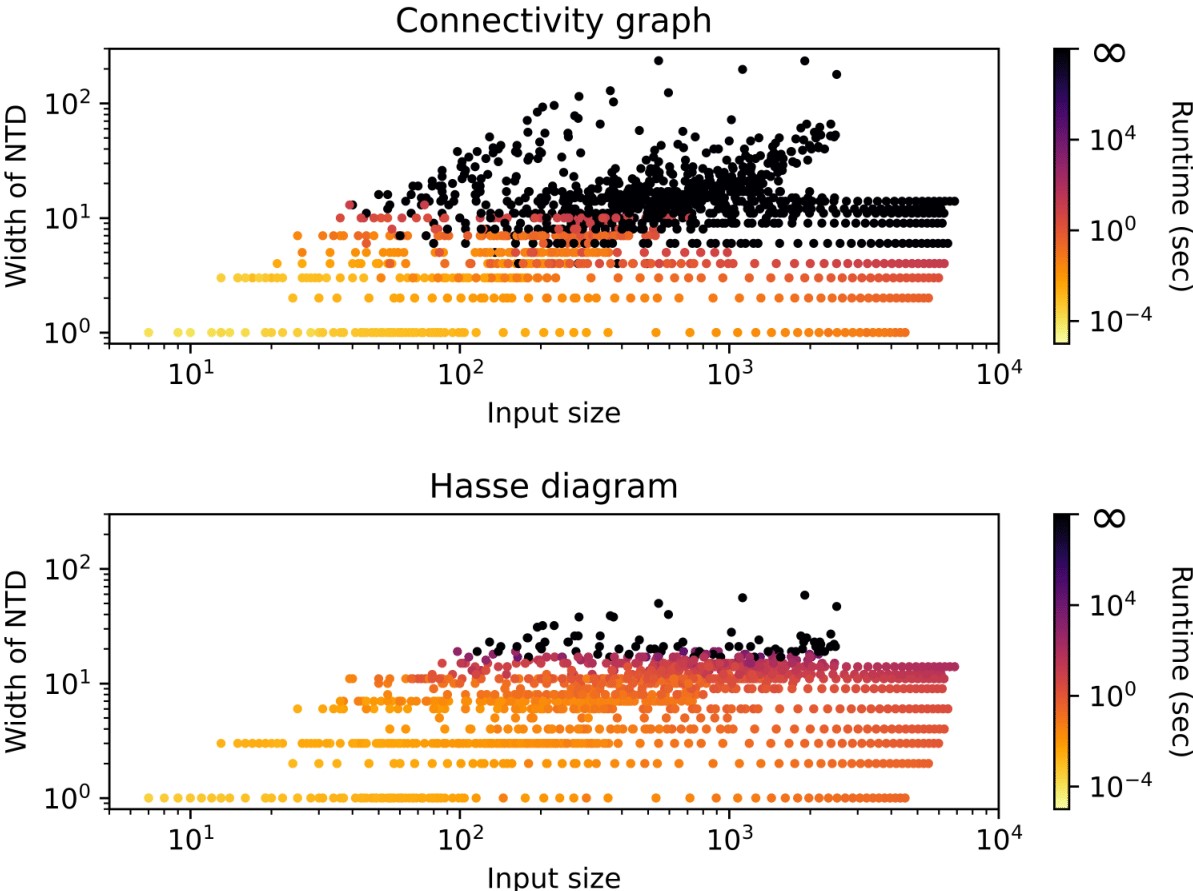}
    \label{fig: scatter width inputsize}
  \end{subfigure}
  \begin{subfigure}[b]{0.5\textwidth}
    \includegraphics[width=\textwidth] {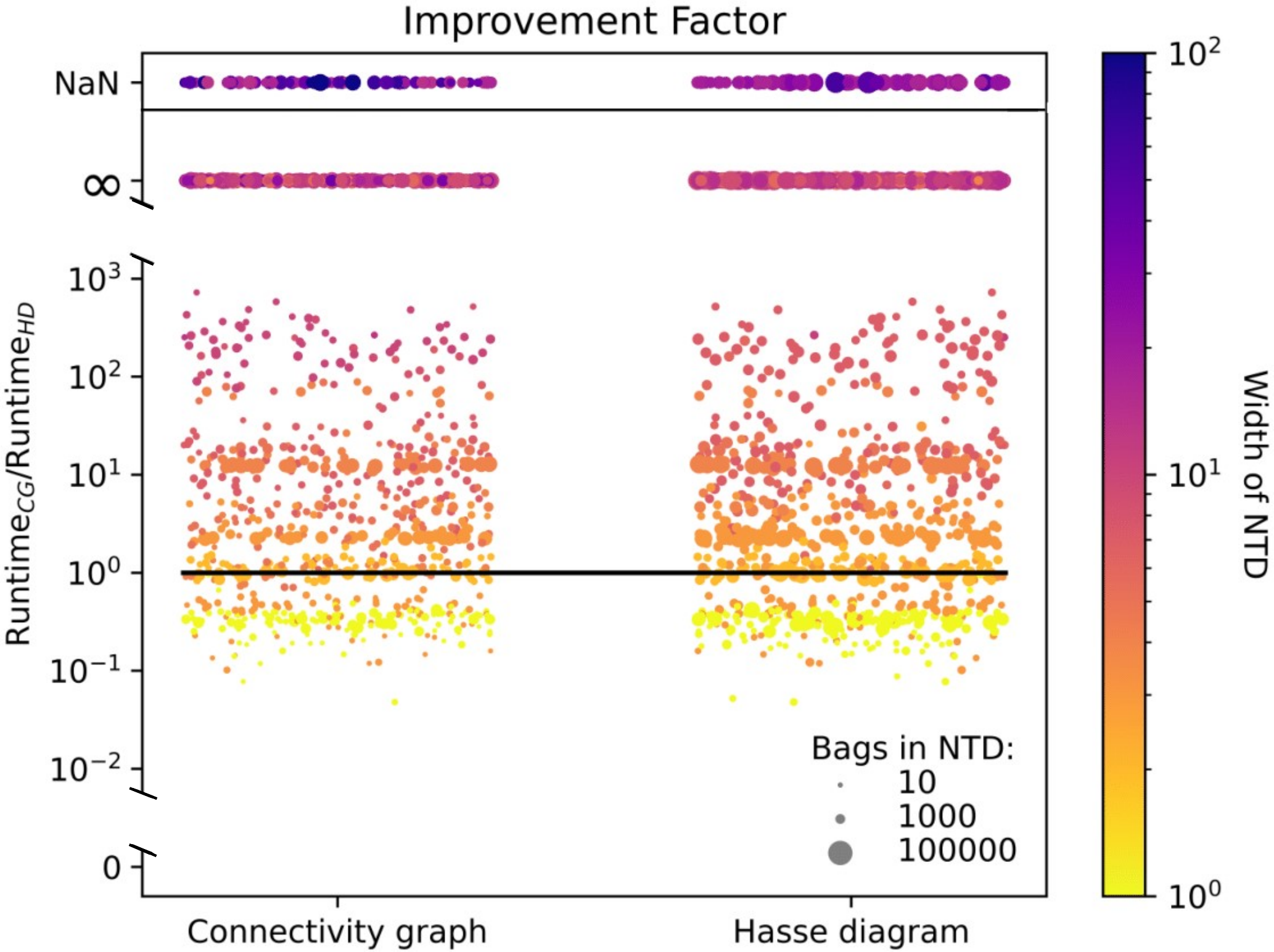}
    \label{fig: scatter inprovement ratio}
  \end{subfigure}
 
    \caption{The points are still instances but the rightmost and leftmost figure(s) now shows two plots, one for each algorithm. The position of a point in the leftmost figure is given by the width of the tree decomposition and the number of $1$ and $2$-simplices (combined). The colour of a point shows how fast it was solved. In the rightmost figure the $y$-axis gives the ratio of the runtime between the two algorithms. The size/colour of a point is given by the number of bags in/the treewidth of the NTD .}
    \label{FIG: Practical performance2}
\end{figure}

\subsection{Theoretical Explanation}

There are some good explanations for why algorithm H is faster when the treewidth is large. First, we have already seen in our runtime analysis that the dependency on the treewidth is substantially worse in the connectivity graph based algorithm. Further more we have the following proposition relating the treewidth of the two graphs to each other.

\begin{prop}
For any $K$ we have $tw(Hasse_{d}(K)) \leq tw(Con_{d}(K))+1 $.Conversely, $tw(Con_{d}(K))$ is not bounded by $tw(Hasse_{d}(K))$.
\end{prop}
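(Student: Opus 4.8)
The plan is to establish the two assertions independently. For the bound $tw(Hasse_{d}(K)) \le tw(Con_{d}(K)) + 1$ I would start from an optimal tree decomposition $(T,\{X_t\}_{t\in T})$ of $Con_{d}(K)$, of width $w := tw(Con_{d}(K))$, and modify it into a tree decomposition of $Hasse_{d}(K)$ of width at most $w+1$. The point to keep in mind is that $Con_{d}(K)$ has only the $d$-simplices as vertices while $Hasse_{d}(K)$ also carries the $(d-1)$-simplices, with every edge of $Hasse_{d}(K)$ joining a $d$-simplex to one of its $(d-1)$-faces; so the whole job is to insert the $(d-1)$-simplices without inflating the bags.

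The key observation is that the $d$-cofaces of a fixed $(d-1)$-simplex $\rho$ are pairwise adjacent in $Con_{d}(K)$, since any two of them share $\rho$; hence they form a clique, and by the standard fact that every clique of a graph lies inside some bag of any of its tree decompositions there is a node $t(\rho)\in T$ whose bag contains all $d$-cofaces of $\rho$ (if $\rho$ has no coface, take $t(\rho)$ arbitrary). I would then, for each $(d-1)$-simplex $\rho$, attach a fresh pendant node $t_\rho$ joined only to $t(\rho)$, with bag $X_{t_\rho} := \{\rho\}\cup\{\sigma : \sigma \text{ is a } d\text{-coface of } \rho\}$, keeping every original bag unchanged. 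Checking that the result is a tree decomposition of $Hasse_{d}(K)$ is then a routine verification: vertex coverage is clear, every incidence edge $\{\sigma,\rho\}$ lies in $X_{t_\rho}$, and the connectivity axiom holds because each $\rho$ occurs only in $X_{t_\rho}$ while each $d$-simplex $\sigma$ occupies its original connected subtree together with the pendant leaves hanging off nodes of that subtree. Since a clique in a graph of treewidth $w$ has at most $w+1$ vertices, $|X_{t_\rho}| \le w+2$, so the new width is at most $w+1$.

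For the converse I would exhibit a family that separates the two parameters. Fix $d$, a $(d-1)$-simplex $\rho$, and for each $n$ take $n$ new vertices $v_1,\dots,v_n$; let $K_n$ be the complex generated by the $d$-simplices $\sigma_i := \rho\cup\{v_i\}$. Any two $\sigma_i,\sigma_j$ meet precisely in the $(d-1)$-face $\rho$, so $Con_{d}(K_n)$ is the complete graph on $n$ vertices and $tw(Con_{d}(K_n)) = n-1 \to \infty$. In $Hasse_{d}(K_n)$, on the other hand, each $\sigma_i$ is joined only to $\rho$ and to its $d$ remaining $(d-1)$-faces, all of which are private to $\sigma_i$; putting $\{\rho,\sigma_i\}$ and those $d$ private faces into one bag and stringing the $n$ resulting bags along a path gives a tree decomposition of $Hasse_{d}(K_n)$ of width at most $d+1$, independent of $n$. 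Hence no function can bound $tw(Con_{d})$ in terms of $tw(Hasse_{d})$. The one genuinely delicate point is the bookkeeping that the pendant-bag surgery in the first part really yields a tree decomposition; the conceptual heart of the matter is the asymmetry it exposes — a large clique of $Con_{d}(K)$ can be forced by a single $(d-1)$-simplex with many cofaces, a configuration that is only a harmless star in $Hasse_{d}(K)$.
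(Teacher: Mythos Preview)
Your proof is correct and follows essentially the same strategy as the paper: for the inequality you use the clique-in-a-bag property to attach, for each $(d-1)$-simplex $\rho$, a pendant bag containing $\rho$ together with all its $d$-cofaces, and for the converse you exhibit the same family of $d$-simplices glued along a single common $(d-1)$-face. The only difference is that your path decomposition of $Hasse_d(K_n)$ gives width $d+1$, whereas the paper observes it is actually a tree (treewidth $1$) by putting each incidence $\{\sigma_i,\tau\}$ in its own size-two bag; either bound suffices since both are independent of $n$.
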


\begin{proof}
Given a nice tree decompositon of $Con_{d}(K)$ we can make a tree decomposition of $Hasse_d(K)$ by potentially increasing the bag size by $1$. First, if the graph contains $d$-simplices with no cofaces then these have to be added separately. Otherwise, every $d+1$-coface of the $d$-simplex $\rho$ must occur simultaneously in a bag in $Con_{d}(K)$ as these form a complete subgraph (or clique). Add a copy of such a bag to the tree decomposition, attach it to the original bag and add $\rho$ to the copy. Repeating this process (only looking at the bags in the original tree decomposition) yields a tree decompositon of $Hasse_{d}(K)$.

For the last part simply note that there is a family of simplicial complexes $K^{(d,k)}$ where the treewidth of (level $d$ of) the Hasse diagram is $1$ and the treewidth of the connectivity graph is $k$. We construct this space by gluing $k$ individual $d+1$-simplices together along a common $d$-face. \cref{fig: comparison} illustrates the case of $d=2$ and $k= 8$. 

\begin{figure}[h!]
\centering
\hfill
  \begin{subfigure}[b]{0.25\textwidth}
  \centering
    \includegraphics[width=\textwidth]{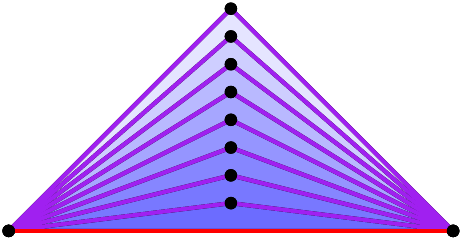}
  \end{subfigure}
\hfill
  \begin{subfigure}[b]{0.25\textwidth}
  \centering
    \includegraphics[width=\textwidth] {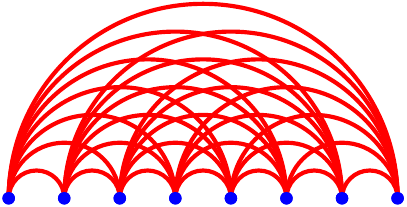}
  \end{subfigure}
\hfill
 \begin{subfigure}[b]{0.43\textwidth}
    \centering
    \includegraphics[width=\textwidth] {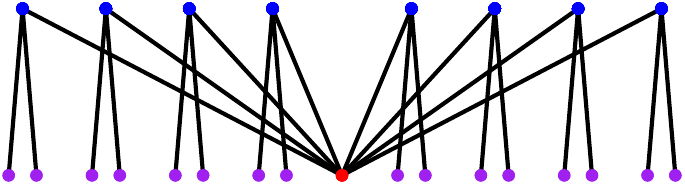}
  \end{subfigure}
  
\caption{\label{fig: comparison}The space $K^{(2,8)}$ with its corresponding $2$-connectivity graph and level $2$ of the Hasse diagram.}
  \end{figure}

\end{proof}

After all this, it may seem puzzling that algorithm C outperform algorithm H on any instances at all. As we already noted, this happens when the treewidth is low. This means its contribution to the runtime is negligible so the constant factor in the algorithm becomes dominating. The connectivity graph has fewer nodes than the Hasse diagram . This means that the size of the tree decomposition of the connectivity graph is also likely to be lower and thus fewer operations are needed. This goes at least some way in explaining why have these observations.

\section{Conclusion}
\label{sec:conclusion}
We view our results from a broader perspective and highlight some future research directions.

\subsection{The W[1]-Hardness Result}

We proved that the HL$_d$ problem is W[1]-hard for all $d \geq 1$ when parameterized by solution size. This is still the case when we restrict the input to spaces embeddable in $\mathbb{R}^{d+3}$ and also if we only ask for a constant factor approximation to an optimal solution. This followed easily from the work by Chen and Freedman in \cite{chen2011hardness}, but it is a new and important result for understanding the parameterized complexity of localizing homology.

In particular, this hardness result complements the FPT-algorithm in \cite{borradaile2020minimum} by Borradaile et. al. mentioned in the introduction. They solve the HL$_d$ on $d+1$ manifolds using solution size as a parameter. Assuming $\textbf{FPT} \neq W[1]$, our result shows that it is impossible to make a similar algorithm that works on general simplicial complexes.


\subsection{The FPT-algorithms}
We have designed and implemented two new FPT-algorithms for localizing homology in a simplicial complex $K$. Our experiments revealed that the algorithm using the treewidth $k$ of $Hasse_{d+1}(K)$ as a parameter was more efficient than the algorithm whose parameter is the treewidth of $Conn_{d+1}(K)$. These prototypes have already demonstrated practical potential. They were able to solve large instances as long as the treewidth of $Hasse_{d+1}(K)$ was low. Still, we believe that the efficiency of these algorithms can be further improved, allowing us to solve even larger instances.

We have several ideas for improvements on the technical front. Optimizing the code, rewriting it in a compiled programming language and using better algorithms for finding tree decompositions are three small steps to take that will likely lead to an even better performance. We could also look into designing preprocessing routines and try out various heuristics. Finally, if we want to get serious about making the fastest code possible, we should make use of parallel programming. Our algorithm does an enormous amount of very simple steps when computing solutions for large bags. These steps tend to be fairly independent from one another which should make this process ideal for parallelization. We would be very interested to see what the speedup would be if it was made to run in parallel. 

There is also more work to be done on the theoretical side of things. Recall that our best algorithm runs in $\mathcal{O}(4^k \cdot n)$-time and that under the ETH we can at most hope for a $\mathcal{O}(c^k \cdot n)$-time algorithm for some $c > 1$. 
We strongly suspect that the optimal value for $c$ is $2$ or higher. It is likely that a reduction similar to the one we gave in \cref{sec:eth} can be used to prove this under the strong exponential time hypothesis (SETH). This then begs the question: Is the base of $c= 4$ really the best we can do? At the time of writing we are agnostic about this. The join bag is the only step requiring $\mathcal{O}(4^k)$-time and so any significant improvement at this stage would give the entire algorithm a boost. 

\subsection{New Research Directions}

There are still many open questions surrounding homology localization, including 
\begin{itemize}
\itemsep0pt
    \item Which other useful parameterizations are there for the homology localization problem?
    \item Is there a form of homology localization where $d$-cycles can localized to outside of the $d$-skeleton of the simplicial complex ?
    \item Can we solve the homology localization problem on CW-complexes?
\end{itemize}

\section*{Acknowlegements}
The authors would like to thank both Michael Fellows and Lars Moberg Salbu for helpful comments on an earlier version of this manuscript. 

\bibliographystyle{plain}

\end{document}